\newcommand{\R}[1][]{\mathbb{R}^{#1}}
\newcommand{\Pt}{P_t}
\newcommand{\Ps}{P_s}
\newcommand{\Pss}{P_{s'}}
\newcommand{\Qt}{Q_t}
\newcommand{\tab}[1]{\operatorname{table}\!\left[#1\right]}
\newcommand{\cost}[1]{w\left(#1\right)}
\newcommand{\cosst}{w}
\newcommand{\QQt}{\Qt}
\newcommand{\PPt}{\Pt}
\newcommand{\Z}{\mathbb{Z}}
\newcommand{\Zto}{\Z_2}
\newcommand{\im}{\textrm{Im}} 
\newcommand{\HasseGraph}{Hasse} 
\newcommand{\weight}{\omega}
\newcommand{\CmplxA}{K}
\newcommand{\ChnA}{U}
\newcommand{\ChnB}{U'}
\newcommand{\BndA}{W}
\newcommand{\BndB}{W'}
\newcommand{\BndC}{W''}
\newcommand{\symdif}{\triangle}
\newcommand{\Chnpic}{x}
\newtheorem{thm}{Theorem}[section]
\newtheorem{prop}[thm]{Proposition}
\newtheorem{lemma}[thm]{Lemma}
\theoremstyle{definition}
\newtheorem{definition}[thm]{Definition}
\newtheorem{remark}[thm]{Remark}
\begin{document}

\begin{frontmatter}

\title{The Parameterized Complexity of Finding Minimum Bounded Chains}

\author[inf,cedas]{Nello Blaser} 
\author[math]{Morten Brun} 
\author[math]{Lars M. Salbu} 
\author[inf]{Erlend Raa Vågset} 
\ead{Erlend.Vagset@uib.no}

\address[inf]{Department of Informatics, University of Bergen, Bergen, Norway}
\address[math]{Department of Mathematics, University of Bergen, Bergen, Norway}
\address[cedas]{Center for Data Science, University of Bergen, Bergen, Norway}

\begin{abstract}

Finding the smallest $d$-chain with a specific $(d-1)$-boundary in a simplicial complex is known as the \textsc{Minimum Bounded Chain} (MBC$_d$) problem. The MBC$_d$ problem is NP-hard for all $d\geq 2$. In this paper, we prove that it is also W[1]-hard for all $d\geq 2$, if we parameterize the problem by solution size. We also give an algorithm solving the MBC$_1$ problem in polynomial time and introduce and implemented two fixed parameter tractable (FPT) algorithms solving the MBC$_d$ problem for all $d$. 
The first algorithm is a generalized version of Dijkstra's algorithm and is parameterized by solution size and coface degree. The second algorithm is a dynamic programming approach based on treewidth, which has the same runtime as a lower bound we prove under the exponential time hypothesis.
\end{abstract}

\begin{keyword}
Computational Geometry\sep Topological Data Analysis\sep Algorithmic Topology\sep Minimum Bounded Chain\sep Parameterized Algorithms\sep Treewidth\sep Computational Complexity
\MSC[2020] 55N31\sep 62R40\sep 68W40
\end{keyword}

\end{frontmatter}

\section{Introduction} \label{sec:introduction}
The \textsc{Minimum Bounded Chain} (MBC$_d$) problem in dimension $d$ is the problem of finding a minimum \(d\)-chain \(\BndA\), whose boundary \(\partial \BndA\) is a given \((d-1)\)-cycle \(\ChnA\). This problem is a useful generalization of the shortest path problem. It has been applied to 3d image segmentation \cite{Grady2010} and to find representative cycles in persistent homology \cite{escolar,emmett2015multiscale} (see \cref{Fig: motiverende eksempel}).

\begin{figure}[!h]
\centering
\includegraphics[width=\textwidth]{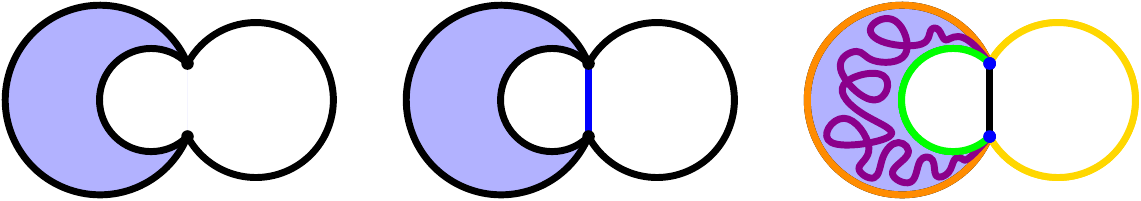}
\caption{\label{Fig: motiverende eksempel} A space at filtration step $t-1$ (left) where the addition of the blue edge at time step $t$ in the filtration (middle) gives birth to a persistent cycle. In the rightmost figure we have highlighted some options for representatives of this cycle: the dark magenta, orange, yellow, or green path together with the new edge. Several measures have been used for deciding which cycle is the best representative. One way is to give each edge a weight and to choose the cycle whose edges sum up to the lowest total weight. If we assume these weights to be proportional to length in this figure, we get that the green path together with the blue edge is the cycle with lowest total weight. Solving the MBC$_1$ problem at filtration step $t-1$ with the boundary of the blue edge as boundary gives this green path.}
\end{figure}

The problem with coefficients in \(\Z\) and \(\R\) has previously received some attention \cite{Sullivan1990,Dunfield2011,Chambers2015}. Here we study the problem with coefficients in \(\Zto\). The MBC$_d$ problem with coefficients in \(\Zto\) is known to be NP-hard to approximate \cite{borradaile_et_al:LIPIcs:2020:12179}. Previous study of the MBC$_d$ problem with coefficients in \(\Zto\) focus either on the case of \(d\)-dimensional simplicial complexes embeddable in \(\R[d+1]\) \cite{borradaile_et_al:LIPIcs:2020:12179}, or specific types of input cycles \cite{chen2011hardness,erlend_homloc}. Here we treat the general case of the MBC$_d$ problem with coefficients in \(\Zto\). This can be viewed as a special case of the algebraic problem known as \textsc{maximum likelihood decoding} (MLD) \cite{mld}. 

A recent paper shows that the MBC$_d$ problem with coefficients in \(\Zto\) is fixed-parameter tractable with respect to
the treewidth of the 1-skeleton \cite[Theorem 1.2]{black2021finding}. However, they also show that the treewidth of the $d$'th level of the Hasse diagram is bounded by the treewidth of the 1-skeleton.

We approach the problem from the point of view of parameterized algorithms. We consider three different parameters, namely 
\begin{enumerate}[label=(\roman*)]
\itemsep0pt
    \item the maximum number of cofaces of codimension 1 of any $(d-1)$-simplex in the simplicial complex, termed \emph{coface degree} \(c\);
    \item the solution size \(k\), i.e. the number of simplices in an optimal solution;
    \item the treewidth \(\tau\) of the d'th level of the Hasse diagram of the simplicial complex. 
\end{enumerate}
In addition we write $n$ for the number of $(d-1)$-simplices, and note that the number of $d$-simplices is given as a polynomial of $n$.

Our contributions are as follows: 

\begin{thm}[Sec. \ref{sec: mbc1-section}]\label{thm:mbc1-in-P}
The MBC$_1$ problem is in P.
\end{thm}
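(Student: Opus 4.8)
The plan is to recognize the MBC$_1$ problem as the classical \emph{minimum $T$-join} problem and to invoke its known polynomial-time solvability. Working over $\Zto$, a $1$-chain $W$ is simply a subset of the edges of the $1$-skeleton $G$ of the complex, and its boundary $\partial W$ is exactly the set of vertices incident to an odd number of edges of $W$. Since the boundary map $\partial_0$ vanishes over $\Zto$, the input $0$-cycle $U$ is just a set of vertices $T := \operatorname{supp}(U)$. Thus MBC$_1$ asks for an edge set $W$ of minimum size whose set of odd-degree vertices is exactly $T$, which is precisely a minimum $T$-join (with unit weights in the cardinality version, or nonnegative weights $w$ in the weighted version).

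First I would settle feasibility. Every edge flips the parity of exactly two vertices, so $\partial W$ has even support within each connected component of $G$; hence a solution exists if and only if $T$ meets each connected component of $G$ in an even number of vertices. This is checkable in linear time, and if it fails we report that no bounded chain exists.

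Next comes the algorithm. I would build an auxiliary complete graph $H$ on vertex set $T$, assigning to each pair $\{u,v\}$ the shortest-path distance $d_G(u,v)$ (via BFS from each terminal in the unit-weight case, or Dijkstra with weights $w$). I then compute a minimum-weight perfect matching $M$ in $H$, choose a shortest path $P_{uv}$ in $G$ realizing each matched distance, and return $W = \sum_{\{u,v\}\in M} P_{uv}$, the $\Zto$-sum of these paths. This is a valid solution: boundary is linear, each $P_{uv}$ contributes $u + v$, and since every terminal is matched exactly once, $\partial W = \sum_{t \in T} t = U$.

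The crux, and the step I expect to require the most care, is optimality: proving that the minimum size of a $T$-join equals the weight of a minimum perfect matching of the shortest-path metric on $T$. The easy direction is that the symmetric-difference construction above yields a $T$-join of size at most $w(M)$ (cancellations over $\Zto$ can only help). The reverse inequality is the classical fact that a minimal $T$-join decomposes into edge-disjoint paths pairing up the vertices of $T$ (together with cycles that only add weight), and each such path has weight at least the corresponding $d_G$, inducing a perfect matching of weight at most $|W|$. Combining the two gives equality, so the returned $W$ is optimal. I would cite standard $T$-join theory for this equivalence. Finally, the complexity is clearly polynomial: at most $n$ shortest-path computations costing $O(n(n+m))$ together, plus an $O(|T|^3)$ minimum-weight perfect matching, which establishes MBC$_1 \in \mathrm{P}$.
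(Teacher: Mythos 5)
Your proposal is correct and takes essentially the same route as the paper: both reduce MBC$_1$ to a minimum-weight perfect matching on the complete graph over the boundary vertices weighted by shortest-path distances, justified by the same key fact that a minimal solution decomposes into edge-disjoint paths/trails pairing up the terminals (the paper proves this decomposition directly via its acyclicity lemma and trail-decomposition lemma, where you cite standard $T$-join theory). The remaining differences are cosmetic: your explicit feasibility check, and BFS/Dijkstra in place of the paper's Floyd--Warshall.
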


\begin{thm}[Sec. \ref{sec:algorithm}] \label{thm:solution-coface-algorithm}
The MBC$_d$ problem can be solved in $c^k\operatorname{poly}(n)$-time. 
\end{thm}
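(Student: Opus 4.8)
The plan is to recast MBC$_d$ as a shortest-path problem on an implicit, exponentially large graph, and then to solve it by a branching version of Dijkstra's algorithm whose branching factor is controlled by the coface degree $c$ and whose depth is controlled by the solution size $k$. The vertices of this graph are the $(d-1)$-chains that arise as boundaries $\partial \BndA$ of $d$-chains $\BndA$; from such a vertex we may add any single $d$-simplex $\tau$, moving to the boundary $\partial(\BndA \symdif \{\tau\})$ at additive cost $\cost{\tau}$. In this language a minimum bounded chain with boundary $\ChnA$ is exactly a minimum-cost walk from the empty boundary to $\ChnA$, and for $d=1$ the extension of a partial chain from a dangling boundary vertex along an incident edge is precisely Dijkstra's relaxation step, which recovers \cref{thm:mbc1-in-P}. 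The difficulty is that this graph has up to $2^{n}$ vertices and exponential degree, so a naive search is hopeless; the whole point is that neither the full vertex set nor the full neighbourhoods ever need to be examined.

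The key observation that tames the search is a covering argument. Suppose we build the solution one simplex at a time, maintaining a partial chain $\BndA$ that is a subset of some fixed optimal solution $\BndA^{*}$, and let $\ChnB := \ChnA \symdif \partial \BndA$ be the boundary that still has to be produced. Over $\Zto$ we have $\ChnB = \partial \BndA^{*} \symdif \partial \BndA = \partial(\BndA^{*} \setminus \BndA)$, so every $(d-1)$-simplex $\sigma \in \ChnB$ is a facet of an odd, hence nonzero, number of the $d$-simplices in $\BndA^{*}\setminus \BndA$. In particular at least one coface of $\sigma$ lies in the remainder $\BndA^{*}\setminus \BndA$ of the optimal solution. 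Since $\sigma$ has at most $c$ cofaces of codimension $1$, we may pick an arbitrary $\sigma \in \ChnB$ and branch over its (at most $c$) cofaces $\tau \notin \BndA$, knowing that on the branch that adds the correct $\tau$ we again have $\BndA \cup \{\tau\} \subseteq \BndA^{*}$. This invariant, proved by induction on the recursion depth, is the crux of the argument and the step I expect to require the most care, since it is exactly what guarantees that restricting attention to the cofaces of a single chosen facet never discards the optimum.

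Assembling these pieces gives the algorithm. Starting from $\BndA = \emptyset$ (so $\ChnB = \ChnA$) we recurse: if $\ChnB = \emptyset$ we have reached a feasible chain and record its cost; otherwise we choose any $\sigma \in \ChnB$ and, for each of its $\le c$ cofaces $\tau \notin \BndA$, recurse on $\BndA \cup \{\tau\}$ with updated remainder $\ChnB \symdif \partial \tau$. Assuming positive weights, the correct branch enlarges $\BndA$ toward $\BndA^{*}$ by one simplex per level and cannot reach $\ChnB = \emptyset$ before $\BndA = \BndA^{*}$ (an earlier feasible strict subset would have strictly smaller cost, contradicting optimality), so it terminates after exactly $k = |\BndA^{*}|$ steps; hence it suffices to explore to depth $k$. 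The resulting search tree has at most $c^{k}$ nodes, each handled with $\operatorname{poly}(n)$ bookkeeping (computing $\partial\tau$, updating $\ChnB$, and enumerating cofaces). To return a genuine minimum we process the partial chains from a priority queue ordered by accumulated weight $\cosst$, exactly as in Dijkstra, so that the first completed chain popped is of minimum cost; and since $k$ is not known in advance we wrap the search in iterative deepening on the depth bound, which changes the running time only by a constant factor. This yields the claimed $c^{k}\operatorname{poly}(n)$ bound.
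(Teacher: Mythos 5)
Your algorithm is, in substance, the paper's own: the implicit graph on $(d-1)$-chains is the paper's $G^{d}(K)$, your covering argument (every simplex of the remainder $\ChnB=\ChnA\symdif\partial \BndA$ must have a coface in $\BndA^{*}\setminus\BndA$, so it suffices to branch over the at most $c$ cofaces of one chosen simplex) is exactly the observation behind the paper's directed graph $G^{d}_D(K)$ in \cref{gdfk}, and your weight-ordered, depth-$k$-bounded search is the paper's depth-restricted Dijkstra (the paper tracks depth via a layered graph whose vertices are pairs $(v,i)$ with $i\le k$). With the parameter $k$ supplied as part of the input --- the standard convention in parameterized complexity, and what the paper implicitly assumes when it restricts to chains at distance at most $k$ --- your argument is correct and yields the claimed $c^{k}\operatorname{poly}(n)$ bound.

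The one step that fails is the final iterative-deepening wrapper for the \emph{weighted} problem. Here $k$ is the number of simplices in a minimum-\emph{weight} solution, and a feasible chain can have fewer simplices yet larger weight, so a wrapper that halts at the first depth bound admitting a feasible solution returns a suboptimal answer. Concretely, let $\ChnA$ be a triangle bounded by two disks glued along it: one with $3$ triangles of weight $100$ each, one with $10$ triangles of weight $1$ each. The optimum has $k=10$ and weight $10$, but your search with depth bound $3$ pops a completed chain of weight $300$ and stops. Nor does any obvious stopping rule rescue the $c^{k}$ bound without knowing $k$: to certify that no deeper solution is lighter one must in general explore to depth about $\cost{\BndA^{*}}/w_{\min}$, which can vastly exceed $k$. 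The repair is simply to drop the wrapper and take $k$ as input; iterative deepening is sound only in the unweighted case, where cost equals depth and the first feasible chain popped is optimal. A second, minor overstatement: for $d=1$ your search recovers polynomial time only for a two-point boundary (the SMBC$_1$ case); the general statement of \cref{thm:mbc1-in-P} requires the shortest-paths-plus-matching argument of the paper, not this search, whose worst-case cost remains exponential in $k$.
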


\begin{thm}[Sec. \ref{sec: cofacesection}] \label{thm:coface-hardness}
The MBC$_d$ problem is NP-complete for $d\geq 2$, even when restricted to spaces with coface degree three. 
\end{thm}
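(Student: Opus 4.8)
The decision version of MBC$_d$ asks, given $K$, a $(d-1)$-cycle $U$, and an integer $k$, whether there is a $d$-chain $W$ with $\partial W = U$ and $|W| \le k$. Membership in NP is immediate: such a $W$ is a witness of size polynomial in the input, and checking $\partial W = U$ over $\Zto$ together with $|W| \le k$ takes polynomial time. The content of the theorem is the hardness, and my plan is to reduce from a bounded-occurrence Boolean satisfiability problem (a version of 3-SAT in which every variable occurs a constant number of times, which remains NP-hard). The key structural observation driving the reduction is that the equation $\partial W = U$ over $\Zto$ is \emph{exactly} a system of parity constraints: the $(d-1)$-simplices index the equations and the $d$-simplices index the $0/1$ variables of $W$, each equation involving only the (at most $c$) $d$-simplices incident to a given $(d-1)$-face, and each variable occurring in exactly $d+1$ equations. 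With $c = 3$ this is a sparse parity system of width three, which is the natural algebraic home for width-three logical gadgets. I would first settle the base case $d = 2$, where $d$-simplices are triangles and $(d-1)$-simplices are edges, so coface degree three means every edge lies in at most three triangles.

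I would build geometric gadgets of three kinds. First, strips of triangles glued along interior edges act as \emph{wires}: since an interior edge kept out of $U$ lies in exactly two triangles, its constraint forces the two incident triangles to be jointly in or out of $W$, so a wire carries a single Boolean value determined by the boundary data. Second, the degree-three edges serve as \emph{splitter/XOR gadgets}, where the parity constraint on the three incident triangles realizes branching and logic. Third, small \emph{variable} and \emph{clause} complexes, attached to wires along shared edges, are designed so that a variable gadget has exactly two minimum-weight local fillings (encoding true/false) and a clause gadget admits a cheap filling precisely when at least one incident literal-wire is true. The target cycle $U$ is chosen to set inputs and to force each clause to be activated, and the budget $k$ is tuned so that a chain of size $\le k$ exists if and only if the formula is satisfiable. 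Completeness (a satisfying assignment yields a chain of size $k$) is direct; the delicate direction is soundness, i.e.\ showing that any within-budget chain must respect the intended encoding and hence decodes to a consistent satisfying assignment, which is where the rigidity of the parity constraints and the tightness of $k$ do the work.

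Two points require the most care, and I expect them to be the main obstacles. The first is \emph{geometric realizability}: the abstract incidence data must form a genuine simplicial complex, with distinct simplices and consistent, downward-closed face relations, while simultaneously keeping the coface degree at most three everywhere, including at the junctions where wires meet gadgets. This constrains how pieces may be glued and is the heart of the construction. The second obstacle is the passage to general $d \ge 2$. Naive dimension-raising operations such as suspension add cofaces to each face and thereby break the degree-three bound, so I would instead either reproduce the same gadget scheme intrinsically in dimension $d$ (replacing triangles and edges by $d$- and $(d-1)$-simplices and strips by their $d$-dimensional analogues) or use a subdivision-aware coning that introduces exactly one new coface per face. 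Verifying that the chosen lift preserves both realizability and the coface-degree-three bound, uniformly in $d$, is the remaining technical crux.
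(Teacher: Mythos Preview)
Your proposal takes a very different route from the paper. The paper builds no new gadgets for this theorem: for $d=2$ it simply observes that the existing reduction of Borradaile et al.\ (or equivalently the paper's own \textsc{Max Cut} reduction from the ETH-tightness section) already produces complexes of coface degree three, and for $d\ge 3$ it applies suspension, citing Chen--Freedman for the lift of MBC$_d$ hardness to MBC$_{d+1}$ and noting that suspension does not change the relevant coface degree. Your from-scratch bounded-occurrence SAT gadgetry is a plausible alternative for $d=2$, but it remains only a sketch; you correctly flag the simplicial realizability of the degree-three junctions as the crux and do not resolve it.

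More importantly, your dimension-raising step contains a concrete error. You assert that suspension ``adds cofaces to each face and thereby breaks the degree-three bound,'' but this is false for the coface degree that matters. If $K$ is $d$-dimensional with coface degree $c$ (in dimension $d-1$) and $SK$ is its suspension with cone points $a,b$, then the $d$-simplices of $SK$ come in two kinds: a $d$-simplex $\sigma\in K$ has exactly two $(d{+}1)$-cofaces in $SK$, namely $\sigma\cup\{a\}$ and $\sigma\cup\{b\}$; and a $d$-simplex $\rho\cup\{a\}$ (respectively $\rho\cup\{b\}$) with $\rho\in K^{d-1}$ has as $(d{+}1)$-cofaces precisely the simplices $\tau\cup\{a\}$ (respectively $\tau\cup\{b\}$) for $\tau$ a $d$-coface of $\rho$ in $K$, of which there are at most $c$. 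Hence the coface degree of $SK$ in dimension $d$ is $\max(c,2)=c$. Your intuition is that suspension adds two cofaces to each $(d-1)$-simplex of $K$; that is true, but it affects the coface degree relevant to MBC$_d$ on $SK$, not to MBC$_{d+1}$. Once this is corrected, suspension gives the higher-dimensional case for free, and your proposed intrinsic $d$-dimensional gadget rebuild is unnecessary.
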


\begin{thm}[Sec. \ref{sec:solutionsize}] \label{thm:solution-hardness}
The MBC$_d$ problem is W[1]-hard when parameterized by solution size. 
\end{thm}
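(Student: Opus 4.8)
The plan is to give a parameterized (FPT) reduction from \textsc{Multicolored Clique}, which is W[1]-hard when parameterized by the number $k$ of color classes, to the decision version of MBC$_d$ (``is there a bounded chain of size at most $k'$?'') with parameter $k'$. Although MBC$_d$ is a special case of \textsc{maximum likelihood decoding}, whose parameterized hardness by solution weight is known, that does \emph{not} suffice: being a special case means the hard decoding instances need not be realizable as boundary matrices, and the matrix $\partial_d$ is rigid (each $d$-simplex meets exactly $d+1$ faces and $\partial_{d-1}\partial_d=0$). So the real content is to realize the reduction \emph{inside the class of simplicial boundary operators}, by building an explicit complex rather than invoking the algebraic result.

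First I would settle the base case $d=2$. Given a graph $G$ with color classes $V_1,\dots,V_k$, I would construct a $2$-dimensional complex $\CmplxA$ together with a fixed $1$-cycle $\ChnA$ that splits into \emph{vertex-selection} pieces $\ChnA_i$, one per color class, and \emph{edge-verification} pieces $\ChnA_{ij}$, one per pair $i<j$. The selection gadget for class $i$ is designed so that $\ChnA_i$ admits exactly $|V_i|$ cheap fillings of equal size, one disk $D_{i,v}$ for each $v\in V_i$; filling it thus commits to a vertex and leaves a prescribed residual $1$-chain along its interface with the edge gadgets. The verification gadget for a pair $(i,j)$ is built so that $\ChnA_{ij}$ has a cheap filling precisely when the residuals left by the selections in classes $i$ and $j$ match an edge $\{u,v\}\in E$ with $u\in V_i$ and $v\in V_j$, while an incompatible pair of choices forces a strictly larger filling.

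The target size $k'$ is then set to the total size of one cheap filling per gadget, which is a fixed function of $k$ and of the constant per-gadget sizes; since the construction runs in polynomial time, this is a valid FPT reduction. Completeness is the easy direction: a multicolored $k$-clique induces compatible selections in every gadget and hence a bounded chain of size exactly $k'$. The main obstacle is the combined \emph{soundness and minimality} argument: I must show that every bounded chain of size at most $k'$ decomposes gadget-by-gadget into the intended cheap fillings, ruling out ``shortcut'' chains that share simplices across interfaces or fill them in unintended ways. This forces the gadget interfaces to be topologically rigid, so that their only minimal fillings are the designed disks and no cross-gadget cancellation is ever cheaper than the honest filling.

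Finally I would lift the hardness from $d=2$ to all $d\geq 2$ by suspension, writing the new boundary as $\Sigma\ChnA = n\ast\ChnA + s\ast\ChnA$ over two new apexes $n,s$. Since the old complex is $d$-dimensional and $n,s$ are not joined, every $(d+1)$-simplex of $\Sigma\CmplxA$ uses exactly one apex, so any filling has the form $X = n\ast A + s\ast B$; the boundary condition $\partial X = \Sigma\ChnA$ forces $A=B$ and $\partial A = \ChnA$. Hence the minimum filling of $\Sigma\ChnA$ is exactly $\Sigma A$ for an optimal filling $A$ of $\ChnA$, and its size is exactly $2$ times the optimum in dimension $d$. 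Iterating $d-2$ times raises the dimension from $2$ to $d$ while multiplying the optimal solution size by $2^{d-2}$, keeping the parameter a function of $k$ alone and propagating W[1]-hardness to every $d\geq 2$.
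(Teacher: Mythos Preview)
Your plan is essentially the paper's: reduce from \textsc{Multicolored Clique} (the paper calls it the $\alpha\times\beta$-\textsc{Clique} problem) to MBC$_2$ via selection/verification gadgets, then suspend to reach higher $d$. Your suspension analysis is in fact more explicit than the paper's, which simply cites the suspension trick and notes that solution size doubles.

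The one place where your proposal remains a sketch while the paper does real work is exactly the step you flag as the ``main obstacle'': building gadgets whose \emph{only} small fillings are the intended ones. Your description (``disks $D_{i,v}$ leaving a residual at the interface, edge gadgets filled cheaply iff the residuals match an edge'') is the right specification, but it is not yet a construction, and soundness does not follow from the specification alone. The paper's solution is a concrete five-layer complex built from pairs of pants and cylinders: a single waist cycle forces a pair of pants (layer~1); each leg forces a choice of cylinder, i.e.\ a vertex per color (layer~2); each chosen vertex spawns a second pair of pants with $\alpha-1$ legs, one per other color (layer~3); each such leg forces a cylinder committing to a neighbor (layer~4); and the only way to close off the remaining $\alpha(\alpha-1)$ circles within the budget is with exactly $\binom{\alpha}{2}$ edge-cylinders in layer~5, which exist only for actual edges and only when the commitments made at layer~4 are pairwise consistent. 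The counting argument (budget $k'(\alpha)$ quadratic in $\alpha$) is what rules out picking multiple cylinders at layers~2 or~4 and is the concrete realization of the rigidity you ask for. If you flesh out your gadgets, you will almost certainly end up with something isomorphic to this layered picture; the cascading single-cycle input (rather than your disjoint pieces $\ChnA_i,\ChnA_{ij}$) is what lets the paper also conclude hardness for the \emph{spherical} variant SMBC$_d$, which your multi-component input would not give.
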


\begin{thm}[Sec. \ref{sec:doubleparameter}]  \label{thm:solution-coface-hardness}
Unless the exponential time hypothesis is false, the MBC$_d$ problem can not be solved in $2^{o(\sqrt{k}\log(c))}\operatorname{poly}(n)$-time for any $d\geq2$. 
\end{thm}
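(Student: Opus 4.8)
The plan is to obtain this conditional lower bound by refining the reduction underlying \cref{thm:solution-hardness} (or a close variant of it) and then tracking precisely how the solution size $k$ and the coface degree $c$ of the constructed instance depend on the parameters of the source problem. The natural source is a grid-structured problem such as \textsc{Grid Tiling} (equivalently one could start from \textsc{Partitioned Clique}/$K$-\textsc{Clique}): for \textsc{Grid Tiling} with grid side length $K$ and alphabet $[m]$, the exponential time hypothesis, via the sparsification lemma, rules out any $f(K)\,m^{o(K)}$-time algorithm. First I would set up this source problem and quote its ETH lower bound in exactly this form.

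Second, I would design the reduction so that the produced MBC$_d$ instance has two quantitative features. (a) Each of the $\Theta(K^2)$ cells of the grid is encoded by a local gadget in which the "choice" of an alphabet symbol is realized by selecting one of $\Theta(m)$ cofaces of a single $(d-1)$-simplex, so that the coface degree satisfies $c=\Theta(m)$ and hence $\log c=\Theta(\log m)$. (b) A consistent tiling corresponds exactly to a bounded chain consisting of $k=\Theta(K^2)$ simplices, with the horizontal- and vertical-adjacency constraints of the tiling enforced by the prescribed $(d-1)$-cycle we demand as the boundary, rather than by any auxiliary cost. The decisive quantitative point is the quadratic relationship $k=\Theta(K^2)$, which gives $\sqrt{k}=\Theta(K)$. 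Since the base construction is most naturally two-dimensional, I would then lift it to every $d\ge 2$ by a coning/suspension argument that raises the dimension while altering $k$ and $c$ only by constant factors, so the exponent asymptotics are preserved.

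Third, I would chain the bounds. A hypothetical MBC$_d$ algorithm running in $2^{o(\sqrt{k}\log c)}\operatorname{poly}(n)$ time would, through the reduction, solve \textsc{Grid Tiling} in time $2^{o(K\log m)}\operatorname{poly}(m,K)=m^{o(K)}\operatorname{poly}(m,K)$, contradicting the ETH lower bound from the first step. This is consistent with the matching upper bound $c^{k}\operatorname{poly}(n)=2^{k\log c}\operatorname{poly}(n)$ from \cref{thm:solution-coface-algorithm}: it is exactly the quadratic blow-up $k=\Theta(K^2)$ in the solution size that separates the achievable $2^{k\log c}$ from the unavoidable $2^{\sqrt{k}\log c}$ in the exponent.

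The hardest part will be engineering the gadgets so that all of the requirements hold simultaneously: the coface degree must grow linearly with the alphabet $m$ (to secure $\log c=\Theta(\log m)$), the optimum chain size must be pinned to $\Theta(K^2)$ with no cheaper alternative, and the adjacency-consistency constraints must be realized \emph{purely topologically} by the choice of prescribed boundary. The delicate correctness step on which the whole bound rests is ruling out spurious small bounded chains, i.e. showing that every chain with the prescribed boundary whose size is at most the target genuinely decodes into a valid tiling; establishing this "no cheating" property, together with the clean $\Theta(K^2)$ accounting, is where I expect the main effort to lie.
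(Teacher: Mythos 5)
Your proposal follows essentially the same route as the paper: the paper reuses the reduction behind \cref{thm:solution-hardness} from the $\alpha\times\beta$-\textsc{Clique} problem (exactly the partitioned-clique variant you mention, with $\alpha\leftrightarrow K$ and $\beta\leftrightarrow m$), whose output instance has solution size $k'(\alpha)=\Theta(\alpha^2)$ and coface degree $c=\Theta(\beta)$, so that $\sqrt{k}\log c=\Theta(\alpha\log\beta)$, and then lifts to all $d\geq 2$ by suspension, just as you propose. The only cosmetic difference is the source of the ETH bound: you quote the known \textsc{Grid Tiling} lower bound, whereas the paper proves the equivalent statement that $\alpha\times\beta$-\textsc{Clique} admits no $2^{o(\alpha\log\beta)}$-time algorithm directly from $3$-\textsc{Coloring} via a folklore argument (\cref{LEMMA: Daniel L}).
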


\begin{thm}[Sec. \ref{sec: fpt-algorithm}] \label{thm:treewidth-algorithm}
The MBC$_d$ problem can be solved in $\mathcal{O}(2^{2\tau}\tau^2n)$-time. 
\end{thm}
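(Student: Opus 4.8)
The plan is to set up a dynamic program over a tree decomposition of the $d$'th Hasse level and to argue that the boundary condition $\partial W = U$ decomposes into purely local parity constraints that the DP can enforce bag by bag. First I would recast the problem combinatorially. The $d$'th level of the Hasse diagram is the bipartite incidence graph $G$ whose two vertex classes are the $(d-1)$-simplices and the $d$-simplices, with an edge whenever a $(d-1)$-simplex is a face of a $d$-simplex. Over $\Zto$ a candidate solution $W$ is simply a subset of the $d$-simplices, its weight is $\sum_{\sigma \in W}\cost{\sigma}$, and $\partial W = U$ says precisely that for every $(d-1)$-simplex $\rho$ the number of selected $d$-cofaces of $\rho$ is odd exactly when $\rho \in U$. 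Thus MBC$_d$ becomes: choose a minimum-weight subset of the $d$-side of $G$ meeting one prescribed parity constraint at each vertex of the $(d-1)$-side.

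Next I would take a width-$\tau$ tree decomposition of $G$ and convert it into a nice tree decomposition with $\mathcal{O}(\tau n)$ bags (introduce, forget, and join nodes) by the standard transformation. For each bag $X$ the DP table is indexed by an assignment $s \colon X \to \Zto$ recording, for each $d$-simplex in $X$ whether it is currently selected and for each $(d-1)$-simplex $\rho \in X$ the parity of the selected cofaces of $\rho$ seen so far in the subtree; the stored value is the minimum weight of a partial selection consistent with $s$. The transitions are routine: introducing a $d$-simplex branches on selecting it (paying $\cost{\sigma}$ and flipping the parity bits of its incident faces in the bag); introducing a $(d-1)$-simplex starts its parity at $0$; forgetting a $(d-1)$-simplex $\rho$ keeps only the states whose accumulated parity equals $[\rho \in U]$ and then projects $\rho$ out; forgetting a $d$-simplex just projects it out; and a join node combines the two child tables by XOR-ing the face parities and insisting that the selection bits agree.

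The crux of correctness is that every parity constraint is both complete and counted exactly once at the moment it is checked. I would show that when $\rho$ is forgotten, all of its $d$-cofaces have already been introduced in the subtree, so its accumulated parity is final; this follows from the defining properties of a tree decomposition, since each edge $\{\rho,\sigma\}$ of $G$ lies in a common bag and the bags containing $\rho$ form a connected subtree. Dually, each $d$-simplex must contribute to the parity of a given face at only one node, which I would enforce by fixing a canonical node per incidence (e.g. the introduce node of the coface) so that no contribution is double counted across joins. With these invariants a straightforward induction on the tree shows the value at the root equals the optimum, and $W$ is recovered by the usual traceback.

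Finally, for the running time: a nice tree decomposition has $\mathcal{O}(\tau n)$ nodes and each table has at most $2^{\tau+1}$ entries, where $n$ measures the size of the $d$'th Hasse level. Given such a decomposition (which I would treat as provided, or obtain from a standard algorithm), the dominant cost is the join node, where pairing compatible entries of the two children costs $\mathcal{O}(2^{2\tau})$; charging $\mathcal{O}(\tau^2)$ for the per-node bookkeeping over all $\mathcal{O}(\tau n)$ nodes yields the claimed $\mathcal{O}(2^{2\tau}\tau^2 n)$ bound. I expect the main obstacle to be not the DP skeleton but the careful accounting at the join nodes: one must pin down the XOR convention so that the parity of each shared face is neither lost nor double counted, and it is exactly this step that forces a $2^{2\tau}$ rather than a $2^{\tau}$ factor, so matching the stated bound hinges on realizing the join as a direct iteration over pairs of child states.
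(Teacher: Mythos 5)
Your proposal is correct and follows essentially the same route as the paper: the paper likewise recasts MBC$_d$ as a parity-constrained minimum-weight column-selection problem on the bipartite Hasse-level graph (its ``GMLD'' formulation) and runs exactly this dynamic program over a nice tree decomposition, with tables indexed by the selected columns $Q_t$ and a parity set $P_t$, and with the join bags dominating at cost $2^{2\tau}$ per node. The one substantive bookkeeping difference is that the paper charges a column's weight only at its forget bag and adds the correction term $\partial Q_t \cap X_t$ (together with $\ChnA \cap X_t$) to the parity at join bags, which automatically avoids the double counting of both weights and parities that your pay-at-introduce convention must explicitly correct when summing the two child tables.
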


\begin{thm}[Sec. \ref{sec:eth-tightness}]  \label{thm:treewidth-hardness}
The MBC$_d$ problem can not be solved in $2^{o(\tau)}\operatorname{poly}(n)$-time for any $d\geq2$, unless the exponential time hypothesis is false. 
\end{thm}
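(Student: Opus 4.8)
The plan is to reduce from \textsc{3-Sat} and invoke the exponential time hypothesis through the sparsification lemma. Write $N$ for the number of variables of the input formula $\varphi$; after sparsification we may assume $\varphi$ has $M = \mathcal{O}(N)$ clauses and that ETH forbids deciding satisfiability in $2^{o(N)}$ time. I would construct, in polynomial time, an MBC$_d$ instance of total size $\mathcal{O}(N)$ — in particular with $\mathcal{O}(N)$ simplices in dimensions $d-1$ and $d$ — and argue that the treewidth $\tau$ of the $d$'th level of its Hasse diagram satisfies $\tau = \mathcal{O}(N)$. A hypothetical $2^{o(\tau)}\operatorname{poly}(n)$-time algorithm would then decide satisfiability of $\varphi$ in $2^{o(N)}\operatorname{poly}(N) = 2^{o(N)}$ time, contradicting ETH. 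This matches, up to the constant in the exponent, the $2^{\mathcal{O}(\tau)}\operatorname{poly}(n)$ upper bound of \cref{thm:treewidth-algorithm}.

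I would first carry out the reduction for $d = 2$. The truth value of each variable is encoded by a binary choice between two $2$-chains spanning a common local $1$-cycle, so that placing this cycle into the prescribed boundary $\ChnA$ forces exactly one of the two fillings — read as \emph{true} or \emph{false} — into every $2$-chain $\BndA$ with $\partial\BndA = \ChnA$. Each clause is realised by a small gadget glued to the variable gadgets of its at most three literals, arranged so that the portion of $\ChnA$ inside the clause gadget can be bounded within a fixed triangle budget exactly when at least one incident literal takes its satisfying value, and otherwise only by paying for an extra patch of triangles. Choosing a global budget $B$ accordingly, $\varphi$ is satisfiable if and only if there is a bounded chain of size at most $B$: a satisfying assignment yields such a chain, and any chain of size at most $B$ reads off a satisfying assignment.

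The decisive point is to keep the construction compact. Using constant-size variable and clause gadgets, one per variable and per clause and wired only according to the sparse incidence structure of $\varphi$, the whole complex has $\mathcal{O}(N + M) = \mathcal{O}(N)$ simplices. Since the treewidth of any graph is less than its number of vertices, the $d$'th level of the Hasse diagram — whose vertices are the finitely many $d$- and $(d-1)$-simplices — has treewidth $\tau = \mathcal{O}(N)$; more structurally, a tree decomposition of the incidence graph of $\varphi$ lifts to one of the Hasse level by expanding each formula-node into its constant-size gadget, which again yields $\tau = \mathcal{O}(N)$. This compactness is precisely what distinguishes the present reduction from the NP-hardness construction behind \cref{thm:coface-hardness}, where no control on the treewidth is needed.

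Finally I would lift the $d = 2$ construction to arbitrary $d \geq 2$ by the dimension-raising device used for \cref{thm:coface-hardness}, verifying that it multiplies the number of simplices, and hence $\tau$, by at most a constant so that $\tau = \mathcal{O}(N)$ is preserved. Combining the reduction with the sparsified ETH bound then proves the theorem. I expect the main obstacle to be designing the gadgets so that correctness and compactness hold simultaneously: the clause gadgets must enforce the Boolean semantics through the boundary-and-budget mechanism while each using only constantly many simplices and interacting with their variable gadgets through a bounded interface, since it is exactly this local, constant-size structure that guarantees the linear treewidth bound.
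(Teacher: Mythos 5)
Your high-level framework is logically sound and genuinely different from the paper's route: you want a \emph{linear-size} reduction from sparsified 3-SAT and then invoke the trivial bound that treewidth is less than the number of vertices, whereas the paper reduces from \textsc{Max Cut} \emph{parameterized by treewidth} (building on the companion Homology Localization paper), maps a graph $G$ to a space $Y(G)$ of polynomial but super-linear size, and keeps the treewidth of $Y(G)$ linear in the treewidth of $G$ by triangulating the space along a nice tree decomposition of $G$, finishing with suspensions for $d>2$. However, your proposal has a genuine gap exactly where the content of the theorem lies: the variable and clause gadgets are never constructed, only their desired behaviour is postulated, and it is not clear they exist as described. The condition $\partial\BndA = \ChnA$ is a system of parity (linear $\Zto$) constraints, which naturally expresses XOR/cut-type relations; encoding the OR-semantics of a clause through ``parity constraints plus a triangle budget'' is the hard part, and it is precisely why the known hardness proofs in this family \cite{chen2011hardness,borradaile_et_al:LIPIcs:2020:12179}, as well as this paper, reduce from cut-type problems rather than directly from SAT. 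Your closing remark that gadget design is ``the main obstacle'' concedes that the core of the argument is missing.

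The tension is sharper than you suggest because your budget counts triangles, i.e.\ you are in the unweighted setting. For an ``extra patch of triangles'' to act as a penalty, its size must dominate the total slack created by all other free choices in the complex (which filling each variable gadget uses, how each clause is routed, etc.); in the paper's reduction this is achieved by subdividing each disk of $Y_D(G)$ until it is larger than the rest of the construction, which makes the instance super-linear in size --- and that is exactly why the paper \emph{cannot} use your ``treewidth at most instance size'' shortcut and instead needs the tree-decomposition-guided triangulation. If you instead intend an exact-counting argument with constant-size gadgets (in the spirit of the W[1]-hardness proof of \cref{thm:solution-hardness}), then that counting and the rigidity of minimum solutions is precisely what must be proved, and the covering arguments used there for pants and cylinders do not obviously adapt to disjunctions. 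Alternatively, heavy weights would let you avoid subdivision and keep the instance linear, but then you prove a weaker statement than the paper does (the paper's lower bound holds even for the unweighted spherical problem). As written, the proposal is a plan whose feasibility is exactly the open issue, not a proof.
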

For \cref{thm:solution-coface-algorithm} and \cref{thm:treewidth-algorithm} we give explicit algorithms that solve the problem in the stated times, with implementations available at \url{https://github.com/lar-sal/PersHomLoc}. 

This paper is structured as follows. In \cref{sec:preliminaries}, we formally define the concepts and problems used throughout this paper. In \cref{sec: mbc1-section} we show that the MBC$_1$ problem is solvable in polynomial time. Next, in \cref{sec:dijkstra} we give our solution to the MBC$_d$ problem using an algorithm inspired by Dijkstra's shortest path algorithm. We also prove hardness results relevant to this algorithm. In \cref{sec:treewidth}, we present the treewidth based algorithm and outline the proof showing that this algorithm is ETH-optimal. Finally, in \cref{sec:conclusions}, we reflect on our results and give some open problems for future research. 

\section{Preliminaries} \label{sec:preliminaries}

This section introduce notation, concepts and definitions that are used frequently throughout the paper. 

\subsection{Minimum Bounded Chains}\label{SEC: miminum bounded chain prelim}

Let $S$ be a set and let $\Zto$ be the field on two elements. The vector space generated by $S$ with coefficients in $\Zto$ is denoted by $\Zto[S]$. Elements in $\Zto[S]$ can be thought of as subsets of $S$, using the bijection mapping the vector $\sum_{s\in S}a_s s$ in $\Zto[S]$ to the subset $\{s\in S\,|\,a_s=1\}$ of $S$. This bijection is an isomorphism when viewing $\mathcal{P}(S)$ (the powerset of $S$) as a vector space where we let \emph{symmetric difference} $X\symdif Y:=(X\cup Y)\setminus (X\cap Y)$ act as addition and define scalar multiplication as $0\cdot X = \emptyset$ and $1\cdot X = X$ for any set $X \subseteq S$.

A \emph{(finite) simplicial complex} $K$ is a (finite) family of sets (called \emph{simplices}) closed under inclusion, i.e. if $\sigma\in K$ and $\rho \subseteq \sigma$ then $\rho \in K$. We say that $\rho$ is a \emph{face} of $\sigma$ and that $\sigma$ is a \emph{coface} of $\rho$. A \emph{$d$-dimensional simplex} (or a \emph{$d$-simplex}) is a simplex containing $d+1$ elements. A coface $\rho \subseteq \sigma$ has codimension $i$ if $\text{dim}(\sigma)= \text{dim}(\rho)+i$. The set of $d$-simplices in $K$ is denoted by $K^d$. We often draw geometric representations of simplicial complexes (see \cref{Fig: simplicial complex figure}).

\begin{figure}[!h]
\centering
\includegraphics[width=0.7\textwidth]{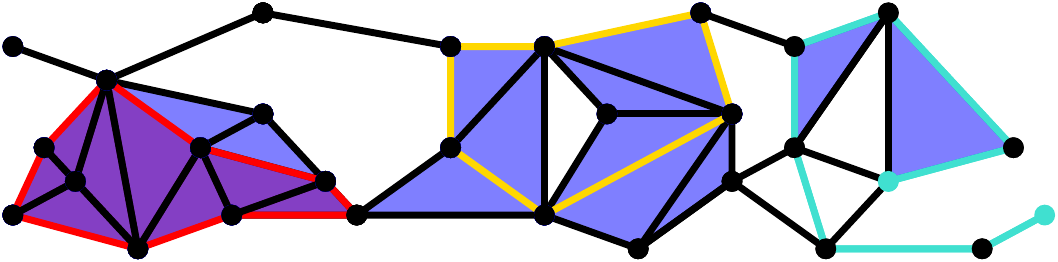}
\caption{\label{Fig: simplicial complex figure}A geometric representation of a $2$-dimensional simplicial complex. Within it we have highlighted a $2$-dimensional chain (purple) and its $1$-dimensional boundary (red), a $1$-dimensional cycle that is not a boundary of a $2$-dimensional chain (yellow) and a $1$-dimensional chain that is not a cycle (turquoise).}
\end{figure}

A vector in $C_d(K):=\Zto[K^d]$ is called a \emph{$d$-chain} in $K$. The \emph{boundary $\partial \sigma$ of a $d$-simplex} $\sigma$, is the $(d-1)$-chain that is the sum of the $(d-1)$-faces of $\sigma$. The \emph{boundary of the $d$-chain} $W$ is the $(d-1)$-chain $\partial W$, defined as the sum $\partial W = \sum_{\sigma\,|\,a_\sigma =1} \partial\sigma$. Alternatively, $\partial W$ contains a $(d-1)$-simplex $\rho$ if and only if $\rho$ is the codimension $1$ face of an odd number of simplices in $W$. A $d$-chain with an empty boundary is called a \emph{$d$-cycle}, and the subgroup of $d$-cycles is denoted by $Z_d(K):=\ker \,\partial\subseteq C_d(K)$. Meanwhile, the $d$-chains that are boundaries of some $(d+1)$-chain are called \emph{$d$-boundaries} and they form a subgroup denoted by $B_d(K):=\im \,\partial\subseteq Z_d(K) \subseteq C_d(K)$. In \cref{Fig: simplicial complex figure} we have some examples of different chains, cycles and boundaries.

A \emph{weighted simplicial complex} $(K,\weight)$ is a simplicial complex $K$ together with a family of positive real numbers $\weight=\{w_\sigma | {\sigma\in K}\}$. The number $w_\sigma>0$ is the \emph{weight} of the simplex $\sigma$, and the \emph{weight of a $d$-chain} is the sum of the weights of the simplices it contains. For unweighted simplicial complexes, we assign weight \(1\) to all simplices, such that the weight of a chain \(\BndA\) corresponds to the number of simplices in unweighted simplicial complexes. 

This paper is primarily about the problem of finding the smallest chain whose image under $\partial$ is some given boundary. More formally we look at the following problem:

\begin{definition} \label{MBC}
(Weighted) \textsc{Minimum Bounded Chain} (MBC$_d$) problem\\
INPUT: A (weighted) simplicial complex $\CmplxA$ and a $(d-1)$-boundary $\ChnA$.\\
OUTPUT: A $d$-chain $\BndA$ in $C_{d}(K)$ such that $\ChnA = \partial \BndA$.\\
MINIMIZING: The weight of $\BndA$.
\end{definition}

An example of the MBC$_1$ and the MBC$_2$ problems are given in \cref{FIG: mbc_example}. There are also more restrictive versions of this problem. In this paper, we are also interested in the problem where we know that boundary we are given is particularly simple. A \emph{spherical chain} is a $d$-cycle whose closure (when viewed as a subspace of the simplicial complex) is homotopy equivalent to a $d$-sphere.

\begin{definition} \label{SMBC}
\textsc{Spherical Minimum Bounded Chain} (SMBC$_d$) problem\\
INPUT: A simplicial complex $\CmplxA$ and a spherical $(d-1)$-boundary. $\ChnA$\\
OUTPUT: A $d$-chain $\BndA$ in $C_{d}(K)$ (if it exists) such that $\ChnA = \partial \BndA$.\\
MINIMIZING: The weight of $\BndA$.
\end{definition}

\begin{figure}[!h]
\centering
\includegraphics[width=\textwidth]{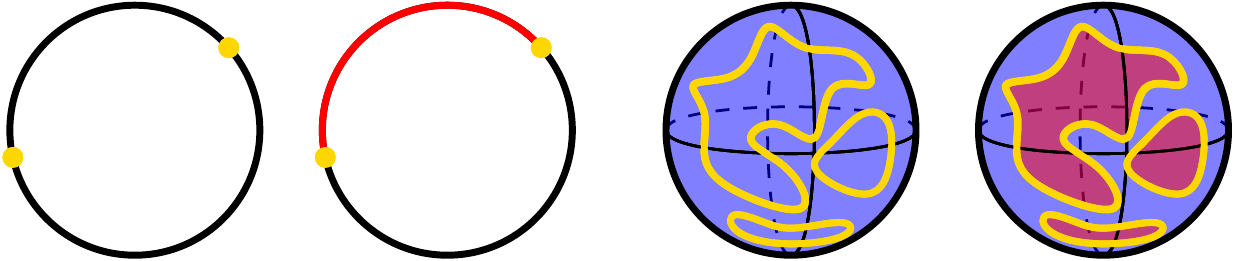}
\caption{\label{FIG: mbc_example} Two examples of minimum bounded chains, where the input boundary is in yellow, and the bounding chain is red. The first example is an instance of both the MBC$_1$ problem and the SMBC$_1$ problem. The second example is only an instance of the MBC$_2$ problem, since the yellow boundary is not a $1$-sphere (i.e. a circle) but the disjoint union of three circles. 
\label{FIG:smooth2d}}
\end{figure}

Our main reason for defining the SMBC$_d$ problem is that by proving this problem to be computationally hard we also prove that the problem of locating persistent cycles is hard. This is because finding the location of a persistent cycle is the same as finding the smallest chain that has the same boundary as the simplex added when the persistent cycle is born (see \cref{Fig: motiverende eksempel}).

By looking at the boundary matrix, the MBC$_d$ problem reduces to a special case of a well-known algebraic problem. 
\begin{definition}\label{MLD}
The \textsc{Maximum Likelihood Decoding} (MLD) problem\\
INPUT: A matrix $A \in \Zto^{m\times n}$, a target vector $u \in \Zto^m$ and a non-negative weight vector \(\weight \in \R[n]\).\\
OUTPUT: A vector \(x \in \Zto^n\) such that \(Ax=u\).\\
MINIMIZING: The weight of the vector $\cosst(x)=\weight^Tx$.
\end{definition}
The algorithms presented in this paper actually solve the maximum likelihood decoding problem, although the considered parameters have a more natural interpretation for the MBC$_d$ problem.

\subsection{Computational Complexity}
Computational complexity is all about how fast an algorithm can solve  a given problem (i.e. the \emph{runtime} of the algorithm). The gold standard are the algorithms that have a runtime that is polynomial in the input-size. The family of problems that can be solved in polynomial time is referred to as P. 

A problem is said to be in NP if the correctness of each solution to the problem can be verified in polynomial time and if a brute-force search algorithm can actually find a solution by trying all of them. A problem is said to be NP-complete if it is in NP and if solving the problem in polynomial time means that we can solve every problem in NP in polynomial time.

Many of the problems in this paper (and in life generally) are known to be \emph{NP-complete}. If the famous P$\neq$NP conjecture is true, then no NP-complete problem can be solved in polynomial time. Despite this, we want to solve these problems ``in practice'', and there are roughly five main (non-disjoint) frameworks that are used to describe what ``solving something in practice'' means:


\begin{enumerate}
\itemsep0pt
    \item Approximation: Develop algorithms to find near-optimal solutions.
    \item Randomization: Use randomization to get a better expected runtime.
    \item Restriction: Restrict attention to special cases.
    \item Heuristics: Find algorithms that often work well but that we can't prove are always fast and correct at the same time.
    \item Parameterization: Design algorithms that are polynomial whenever parameters describing the problem are fixed to a constant value.
\end{enumerate}

We focus on parameterized algorithms in this paper. \cref{sec:parameterized algorithms} and \cref{sec:W1-hardness and ETH} give a brief introduction to the concepts we need from parameterized complexity theory. For a more in-depth understanding of this field, consult either of the textbooks \cite{DowneyFellows2013, PA}.

\subsubsection{Parameterized Algorithms}\label{sec:parameterized algorithms}

A \emph{parameter} is a number associated to each instance/input of a computational problem. The parameters typically describe some property of the problem instance (e.g. coface degree or treewidth) or its solution (e.g. solution size). A computational problem together with specified parameters is called a \emph{parameterized problem}. A \emph{parameterized algorithm} is an algorithm solving a parameterized problem where we measure the runtime as a function of both the parameter and the input size. This allows for a more fine-grained analysis of the computational complexity of hard computational problems where the goal is to find parameterized algorithms that are provably efficient whenever the parameter is small.

We want parameterized algorithms that are \emph{fixed parameter tractable} (FPT), meaning that the expected exponential explosion in runtime is confined to the parameter alone. More precisely, if $k$ is the parameter, $n$ is the input size and $f$ is a computable functions then an FPT-algorithm has a runtime on the form $f(k)\cdot\operatorname{poly}(n)$. A problem is said to be in \textbf{FPT} if it can be solved by some \emph{FPT-algorithm}. 

Problems in \textbf{FPT} are often contrasted with those in \textbf{XP}, that are only solvable by some \emph{XP-algorithm}. These are algorithm with runtime of the form $\mathcal{O}(n^{g(k)})$ where $g$ is some computable function. \cref{table: XP vs FPT table} shows how the archetypical FPT-runtime $2^kn$ compares with the archetypical XP-runtime $n^{k+1}$ for various values of $k$ and $n$. In addition to this, \textbf{FPT} is known to be a strict subset of \textbf{XP} \cite[Proposition 27.1.1]{DowneyFellows2013}.
\begin{table}[h]
\begin{center}
\begin{tabular}{|l||*{5}{c|}}\hline
&\makebox[4em]{$n=50$}&\makebox[4em]{$n=100$}&\makebox[4em]{$n=150$}\\\hline\hline
$k=2$ & 625 & 2,500 & 5,625\\\hline
$k=3$ &15,625&125,000& 421,875\\\hline
$k=5$ &390,625&6,250,000 & 31,640,625\\\hline
$k=10$ &$1.9 \times 10^{12}$&$9.8 \times 10^{14}$&$3.7 \times 10^{16}$\\\hline
$k=20$ &$1.8 \times 10^{26}$&$9.5 \times 10^{31}$&$2.1 \times 10^{35}$\\\hline
\end{tabular}
\caption{\label{table: XP vs FPT table} The ratio $\frac{n^{k+1}}{2^kn}$ for various values of $n$ and $k$.}
\end{center}
\end{table}

There are often trivial brute force XP-algorithms that solve NP-complete problems. For example, the MBC$_d$ problem parameterized by solution size $k$ has a trivial $n^{\mathcal{O}(k)}$-time algorithm. Simply compute the boundary of every $d$-chain $|\BndA| \leq k$ and keep track of the smallest $W$ you find where $\partial\BndA = \ChnA$. Coming up with FPT-algorithms is often much harder as they typically require some level of insight into the computational problem in their design. 

\subsubsection{W[1]-Hardness and the ETH}\label{sec:W1-hardness and ETH}

Some parameterized problems do not appear to be solvable by any FPT-algorithm at all. One way of resolving these situation is to show that the problem is \emph{ParaNP-complete}, i.e. that it is NP-complete even when the parameter in question is constant. In this case, the problem can not be solved by either an FPT- or an XP-algorithm (assuming P$\neq$NP). Unfortunately, this means that we can't use this particular trick to prove that a problem has no FPT-algorithm if it has an XP-algorithm. To tackle this, we need hardness hypothesises that have been specially tailored to parameterized algorithms.

In particular, we are interested in two hardness hypothesises commonly used in parameterized complexity theory. The first is \textbf{FPT} $\neq$ W[1] (see \cite[Chapter 21]{DowneyFellows2013} for the exact definition of W[1]), which is a hypothesis similar to P$\neq$NP. We say that a parameterized problem is $W[1]$-hard if solving it in FPT-times implies \textbf{FPT}$=$W[1]. We can prove that a parameterized problem B is $W[1]$-hard we show that there is a \emph{parameterized reduction} from A to B, where A is parameterized problem already known to be $W[1]$-hard.

\begin{definition}[Parameterized reduction]
Let A and B be parameterized problems. A \emph{parameterized reduction} $F\colon A \to B$ is a function mapping instances $(X,k)$ of $A$ to instances $(Y,l)$ of $B$ in such a way that
\begin{itemize}
    \itemsep0pt
    \item $F(X,k)$ can be computed in FPT-time.
    \item $l \leq g(k)$ for some computable function $g$.
    \item $(X,k)$ is a ``yes'' instance if and only if $(Y,l)$ is a ``yes'' instance.
\end{itemize}
\end{definition}

The second hypothesis is the exponential time hypothesis (ETH). It is primarily based on the fact that no one has been able to solve $3$-SAT\footnote{$3$-SAT is the computational problem of determining the satisfiability of a formula in conjunctive normal form where each clause contains at most three literals.} in sub-exponential time. If we let $n$ be the number of variables in a $3$-SAT formula, then we have:

\begin{definition}[ETH] $3$-SAT cannot be solved in $2^{o(n)}$-time.
\end{definition}

While the ETH is perhaps the easiest of the hypothesis to understand, it is also the strongest assumption made in this paper. Explicitly, the ETH implies \textbf{FPT}$\neq$W[1] which in turn implies P$\neq$NP. See the references mentioned at the start of this section for further details.

\section{The \texorpdfstring{MBC$_1$}{MBC\_1} problem is in P} \label{sec: mbc1-section}

In order to solve the MBC$_d$ problem we only need information about the $d$ and $d-1$ dimensional simplices. We can therefore restrict attention to the simplicial complexes that are $1$-dimensional (i.e. graphs) when $d=1$, without loss of generality. If the simplicial complex given as input has a higher dimension, we can just solve the problem on its $1$-skeleton.

We now formulate the MBC$_1$ problem using graph theoretical language. Let the input be a finite, simple and undirected graph denoted by $G$ and let $V(G)$ denote the set of vertices and $E(G)$ be the set of edges in the graph. A $0$-chain $\ChnA$ is a subset of vertices while a $1$-chain $\BndA$ is a subset of edges. We write $\partial\BndA=\ChnA$ if and only if the set of vertices having odd degree in the subgraph $G'= (V(G), \BndA)\subseteq (V(G),E(G))$ is equal to $\ChnA$. Using the handshaking lemma (i.e. $\sum_{v\in V(G')}\deg_{G'}(v)=2 |E(G')|$), we know that the $0$-chain $\ChnA$ must contain an even number of vertices if it is the boundary of a $1$-chain.

A \emph{walk} in a graph from $v$ to $u$ is a sequence of edges $(e_1,e_2,\dots e_r) = ((v,x_1),(x_1,x_2),\dots,(x_{r-1},u)$, and it is said to be \emph{cyclic} if $v=u$. A \emph{trail} in a graph is a walk where every edge is traversed no more than once. We say that a $1$-chain $\BndA$ is \emph{acyclic} if the subgraph $G' = (V(G),\BndA)$ does not contain a cyclic trail. Note that a cyclic trail visits any vertex in the graph an even number of times. This means that removing the edges of such a trail contained in a $1$-chain reduces the weight of that chain without changing its boundary (pictured in \cref{FIG: Part1 MBC1 section}). This proves the following lemma.

\begin{lemma}
    Any optimal solution to the MBC$_1$ problem is an acyclic $1$-chain. \qed
\end{lemma}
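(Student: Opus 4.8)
The plan is to argue by contradiction, exploiting the observation already recorded in the preceding paragraph. Suppose $\BndA$ is an optimal solution to the MBC$_1$ problem that is not acyclic, so that the subgraph $G' = (V(G), \BndA)$ contains a cyclic trail $T$. I would then produce a strictly lighter chain with the same boundary, contradicting optimality.

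First I would isolate the edge set $E(T)$ of the trail and show that, viewed as a $1$-chain, it is in fact a $1$-cycle, i.e. $\partial E(T) = \emptyset$. The key point is that a cyclic trail visits every vertex an even number of times: every interior visit enters along one edge and leaves along another, contributing two to the count of trail-edges incident to that vertex, and because the trail is closed the same holds at its basepoint. Hence every vertex has even degree in the subgraph $(V(G), E(T))$, so by the boundary criterion no vertex is of odd degree and $\partial E(T) = \emptyset$. This parity statement is the one genuine step, and I expect formalizing ``an even number of incident edges at each vertex'' to be the only place requiring care.

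Next I would form the chain $\BndB = \BndA \symdif E(T)$. Since every edge of the trail lies in $\BndA$, we have $E(T) \subseteq \BndA$, so $\BndB = \BndA \setminus E(T)$ is obtained simply by deleting the trail's edges. Because $\partial$ is linear over $\Zto$, we get $\partial \BndB = \partial \BndA \symdif \partial E(T) = \partial \BndA \symdif \emptyset = \partial \BndA = \ChnA$, so $\BndB$ is again a valid bounded chain for the same input boundary. Finally, a trail contains at least one edge and all simplex weights are strictly positive, so deleting $E(T)$ strictly decreases the total weight, giving $\cost{\BndB} < \cost{\BndA}$. This contradicts the assumed optimality of $\BndA$, and the lemma follows. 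The whole argument is essentially a restatement of the remark preceding the lemma, with the parity computation made explicit and the boundary-preservation recast in terms of the linearity of $\partial$.
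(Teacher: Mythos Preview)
Your proposal is correct and follows exactly the same approach as the paper: the lemma is stated with a \qed{} and the paper's entire proof is the sentence immediately preceding it, namely that a cyclic trail visits every vertex an even number of times, so removing its edges lowers the weight without changing the boundary. You have simply made that one-line observation explicit by framing it as a contradiction argument and invoking the linearity of $\partial$.
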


\begin{figure}[!h]
\centering
\includegraphics[width = \textwidth]{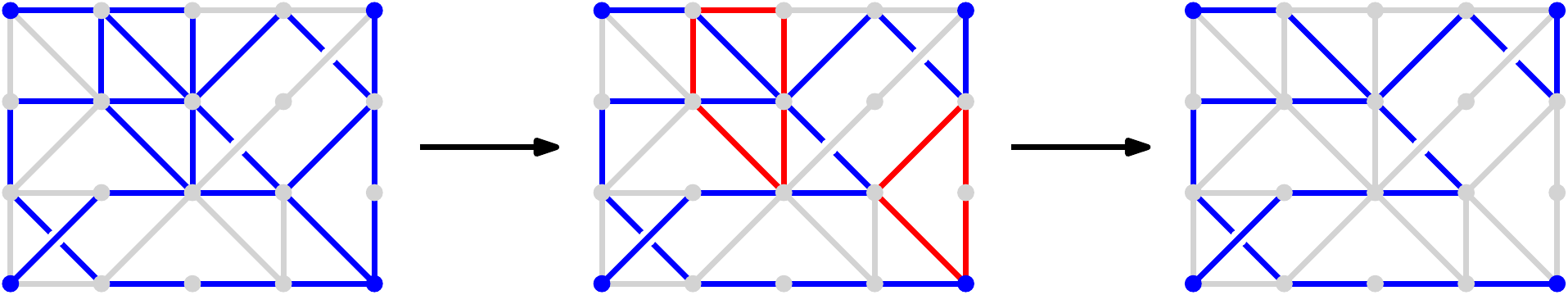}
\caption{\label{FIG: Part1 MBC1 section} A $1$-chain (left) containing two cyclic trails marked in red (middle) and an acyclic $1$-chain with the same boundary obtained by removing the red cyclic trails (right).}
\end{figure}

Let $\ChnA$ be a $0$-chain where $|\ChnA|=2n$. We define a \emph{pairing of $\ChnA$} to be a collection of $n$ pairs $(v_i,u_{i})_{i=0}^n$ covering $\ChnA$, i.e. $\bigcup_{i=0}^n \{v_i,u_i\}=\ChnA$. We denote the set of edges on a trail from $v$ to $u$ as $t(v,u)$.

\begin{lemma}\label{mbctraillemma}
Let $\ChnA$ be a $0$-chain with $|\ChnA|=2n$. If $\BndA$ is an acyclic $1$-chain with $\partial\BndA = \ChnA$, then there exist a pairing $(v_i,u_{i})_{i=0}^n$ of $\ChnA$ and edge-disjoint trails $t(v_i,u_i)$ so that 
\begin{equation*}
    \BndA = \bigsqcup_{i=0}^n t(v_i,u_i).
\end{equation*}

\end{lemma}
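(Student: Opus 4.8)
The plan is to argue by induction on $n$, peeling off one trail (and hence one pair of $\ChnA$) at a time. The starting point is the observation, already implicit in the preceding discussion, that an acyclic $1$-chain is exactly a forest: if $\BndA$ contained a cycle it would contain a cyclic trail. Consequently, between any two vertices lying in the same connected component of $(V(G),\BndA)$ there is a unique path, and every trail inside $\BndA$ is automatically a simple path, so I never have to distinguish ``trail'' from ``path'' within the argument. For the base case $n=0$ we have $\ChnA=\emptyset$, so $(V(G),\BndA)$ has no vertex of odd degree; since a nonempty forest always has a leaf (a vertex of degree $1$, which is odd), this forces $\BndA=\emptyset$, and the empty disjoint union does the job.

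For the inductive step, assume $n\geq 1$. First I would locate two vertices $v,u\in\ChnA$ lying in the \emph{same} tree component $T$ of $\BndA$. Such a pair exists because, by the handshaking lemma applied to each component separately, every component contains an even number of odd-degree vertices; since $\ChnA\neq\emptyset$, some component must contain at least two. Let $t(v,u)$ be the unique path between them in $T$ and set $\BndA'=\BndA\setminus t(v,u)$. The key computation is that $\partial\BndA'=\ChnA\setminus\{v,u\}$: by linearity of $\partial$ we have $\partial\BndA'=\partial\BndA\,\symdif\,\partial\,t(v,u)=\ChnA\,\symdif\,\{v,u\}$, where I use that a path has exactly its two endpoints as odd-degree vertices, so $\partial\,t(v,u)=\{v,u\}\subseteq\ChnA$; the symmetric difference therefore removes precisely $v$ and $u$. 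Since $\BndA'$ is a subforest it is again acyclic, and it has $|\ChnA\setminus\{v,u\}|=2(n-1)$ boundary vertices, so the induction hypothesis supplies a pairing of $\ChnA\setminus\{v,u\}$ together with edge-disjoint trails decomposing $\BndA'$. Adjoining the pair $(v,u)$ and the trail $t(v,u)$ — which is edge-disjoint from the rest precisely because its edges were removed — yields the desired decomposition of $\BndA$.

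The routine parts are the base case and the fact that $\BndA'$ remains acyclic (deleting edges from a forest leaves a forest). The step deserving the most care is the parity bookkeeping when $t(v,u)$ is deleted: I must confirm that removing it flips the parity of exactly $v$ and $u$ and of no other vertex, which is exactly the identity $\partial\,t(v,u)=\{v,u\}$ combined with linearity of $\partial$. I expect the only genuine obstacle to be guaranteeing that two boundary vertices always share a component, since without the per-component handshaking argument the peeled-off path might fail to have both of its endpoints in $\ChnA$.
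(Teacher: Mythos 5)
Your proof is correct and follows essentially the same route as the paper's: induction on $n$, peeling off a single trail between two vertices of $\ChnA$ and applying the induction hypothesis to $\BndA\setminus t(v,u)$. In fact you fill in details the paper leaves implicit — the forest structure of an acyclic chain, the per-component handshaking argument guaranteeing that two boundary vertices share a component, and the parity computation $\partial\,t(v,u)=\{v,u\}$ — where the paper simply asserts that a trail from $v$ to some other $u\in\ChnA$ exists.
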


\begin{proof}
We use induction on $n$. 

The base case $n=0$ holds trivially, so assume that the hypothesis is true for $n-1$ and consider the case $|\ChnA|=2n$. Let $W$ be an acyclic chain with $\partial\BndA = \ChnA$, where $\BndA=E(G')$ for some subgraph $G'\subseteq G$. Let $v$ be a vertex in $\ChnA$. Then there is a trail $t(v,u)$ in $G'$ between $v$ and some other vertex $u$ in $\ChnA$. 

Now, $\BndB = \BndA\setminus t(v,u)$ is itself going to be an acyclic chain with boundary $\ChnB= \ChnA\setminus\{v,u\}$, where $|\ChnB|=2(n-1)$. By the induction hypothesis there is a pairing $(v_i,u_{i})_{i=0}^{n-1}$ of $\ChnB$ and edge-disjoint trails $t(v_i,u_i)$ so that $\BndB = \bigsqcup_{i=0}^n t(v_i,u_i)$. Adding $(v,u)$ to the pairing of $\ChnB$ gives a pairing of $\ChnA$, and adding $t(v,u)$ to the collection of trails completes the proof of the inductive step.
\end{proof}

For the acyclic $1$-cycle in \cref{FIG: Part1 MBC1 section}, we have a pairing with edge-disjoint trails colored in \cref{FIG: Part2 MBC1 section}. Next, let the shortest path from $v$ to $u$ in some graph $G$ be denoted as $p(v,u)$.

\begin{prop}\label{mbc1disjointpaths}
Optimal solutions $\BndA$ to the MBC$_1$ problem on input $\ChnA$ are disjoint unions of shortest paths between pairs $(v_i,u_{i})_{i=0}^n$ in a pairing of $\ChnA$, i.e.
\begin{equation}
    \BndA = \bigsqcup_{i=0}^n p(v_i,u_i).
\end{equation}

\end{prop}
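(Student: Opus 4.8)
The plan is to combine the acyclicity lemma with \cref{mbctraillemma} and then sharpen the trail decomposition into a shortest-path decomposition by an exchange argument. First I would observe that, by the lemma stating that optimal solutions are acyclic, an optimal $\BndA$ makes $G' = (V(G),\BndA)$ a forest, so the unique trail between any two of its vertices is a simple path. Applying \cref{mbctraillemma} to $\BndA$ then yields a pairing $(v_i,u_i)_{i=0}^n$ of $\ChnA$ and edge-disjoint trails $t(v_i,u_i)$ with $\BndA = \bigsqcup_{i=0}^n t(v_i,u_i)$, where each $t(v_i,u_i)$ is a simple path because $G'$ is a forest.

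Next I would show each such path is a shortest path. Recall that $\partial$ is a $\Zto$-linear map and that every trail from $v_i$ to $u_i$ has boundary $\{v_i,u_i\}$. Suppose toward a contradiction that for some index $i$ the shortest path $p(v_i,u_i)$ in $G$ is strictly lighter than $t(v_i,u_i)$, and set $\BndA' = (\BndA \symdif t(v_i,u_i)) \symdif p(v_i,u_i)$. Linearity of $\partial$ gives $\partial\BndA' = \ChnA \symdif \{v_i,u_i\} \symdif \{v_i,u_i\} = \ChnA$, so $\BndA'$ is feasible. Since $t(v_i,u_i) \subseteq \BndA$ and the weight of a symmetric difference never exceeds the sum of weights, $\cost{\BndA'} \leq \cost{\BndA} - \cost{t(v_i,u_i)} + \cost{p(v_i,u_i)} < \cost{\BndA}$, contradicting optimality. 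Hence $t(v_i,u_i)$ has minimum weight among all paths from $v_i$ to $u_i$, and being a path it is itself a shortest path, so we may take $p(v_i,u_i) = t(v_i,u_i)$.

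Combining the two steps, the edge-disjoint trails of the decomposition are shortest paths, and $\BndA = \bigsqcup_{i=0}^n p(v_i,u_i)$ follows at once. I expect the only delicate point to be the weight inequality in the exchange step: the replacement path $p(v_i,u_i)$ may share edges with $\BndA \symdif t(v_i,u_i)$, so $\BndA'$ need not be the literal disjoint union of the two chains. This is harmless, however, because any shared edge is cancelled by the symmetric difference and so only decreases the weight; together with the strict positivity of the weights $\weight$, this guarantees the strict drop $\cost{\BndA'} < \cost{\BndA}$ that drives the contradiction.
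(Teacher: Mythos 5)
Your proposal is correct and follows essentially the same route as the paper: apply \cref{mbctraillemma} to an acyclic optimal solution, then derive a contradiction by swapping a non-shortest trail $t(v_i,u_i)$ for a shortest path $p(v_i,u_i)$ via symmetric difference, using positivity of the weights. Your closing remark about shared edges cancelling in the symmetric difference is precisely the subtlety the paper notes in the caption of \cref{FIG: Part2 MBC1 section}, so nothing is missing.
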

\begin{proof}
We give a proof by contradiction, where the idea is pictured in \cref{FIG: Part2 MBC1 section}. First, let $\BndA$ be an optimal solution of MBC$_1$. By \cref{mbctraillemma} we know that there exist a pairing $(v_i,u_{i})_{i=0}^n$ of $\ChnA$ and edge-disjoint trails $t(v_i,u_i)$ so that $\BndA = \bigsqcup_{i=0}^n t(v_i,u_i)$. 

In order to get a contradiction, we assume that at least one of these trails, say $t(v_i,u_i)$, is not a shortest path between $v_i$ and $u_i$ in $G$. 

Let $p(v_i,u_i)$ denote an actual shortest path. Then we know that the $1$-chain given by the symmetric difference 
\begin{equation*}
    \BndB = (\BndA \setminus t(v_i,u_i))\,\symdif\, p(v_i,u_i)
\end{equation*}

have the same boundary as $\BndA$. This is because taking the symmetric difference of a trail changes precisely the odd-even parity of its start and end vertex (and nothing else). The chain $\BndB$ clearly has a weight lower than $\BndA$, since

\begin{align*}
    \cost{\BndB} &= \cost{\BndA \setminus t(v_i,u_i))\,\symdif\, p(v_i,u_i)} \\
    &\leq \cost{ \BndA}- \cost{t(v_i,u_i)} + \cost{p(v_i,u_i)} \\
    &<   \cost{ \BndA}- \cost{t(v_i,u_i)} + \cost{t(v_i,u_i)} =\cost{ \BndA}.
\end{align*}
This contradicts the minimality of \(W\).
\end{proof}

\begin{figure}[!h]
\centering
\includegraphics[width = \textwidth]{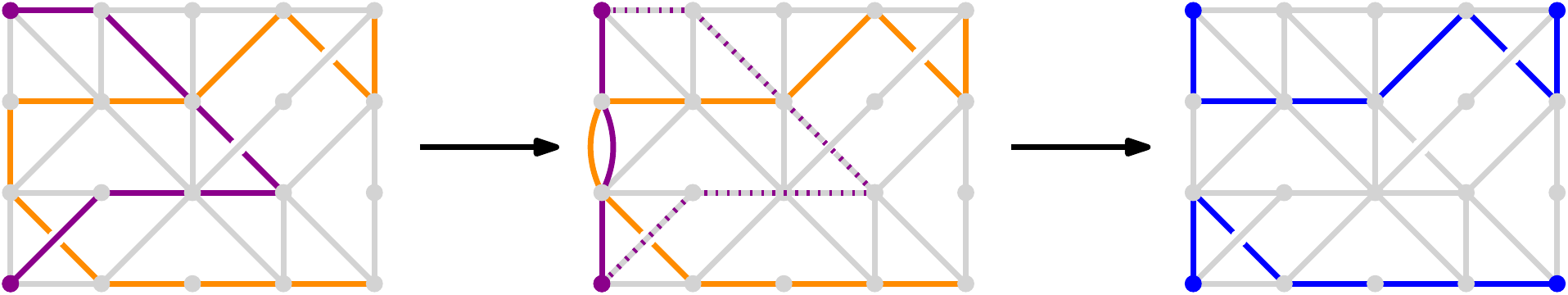}
\caption{\label{FIG: Part2 MBC1 section}A partitioning of the (rightmost) $1$-chain from \cref{FIG: Part1 MBC1 section} (left), removing the magenta path and replacing it with the shortest path between its endpoints to the chain (middle) gives an even smaller $1$-chain (right). Note that if the same edge appears both an old path and in the new path then these cancel out because we ``added'' the path using symmetric difference.}
\end{figure}

Note that when the boundary $\ChnA$ consists of two points (the SMBC$_1$ case), then the problem is to find the shortest path between these two points, a problem famously in P. 

The process of replacing a trail with the shortest path does not in general give a minimum bounding chain (\cref{FIG: Part3 MBC1 section}). However, we can in general use the Folyd-Warshall algorithm \cite{floyd} to get a matrix of the distances between every pair of vertices in the graph, and from this form a new (complete) graph whose vertices are the elements in $\ChnA$ and where the edge weights are the length of the shortest path between the two end vertices. A solution to the MBC$_1$ problem then corresponds to a minimum weight perfect matching in this new graph by \cref{mbc1disjointpaths}, and we can find such a matching in polynomial time using Edmonds blossom algorithm \cite{edmonds_1965}. Thus we have proved \cref{thm:mbc1-in-P}.

\begin{figure}[!h]
\centering
\includegraphics[width = \textwidth]{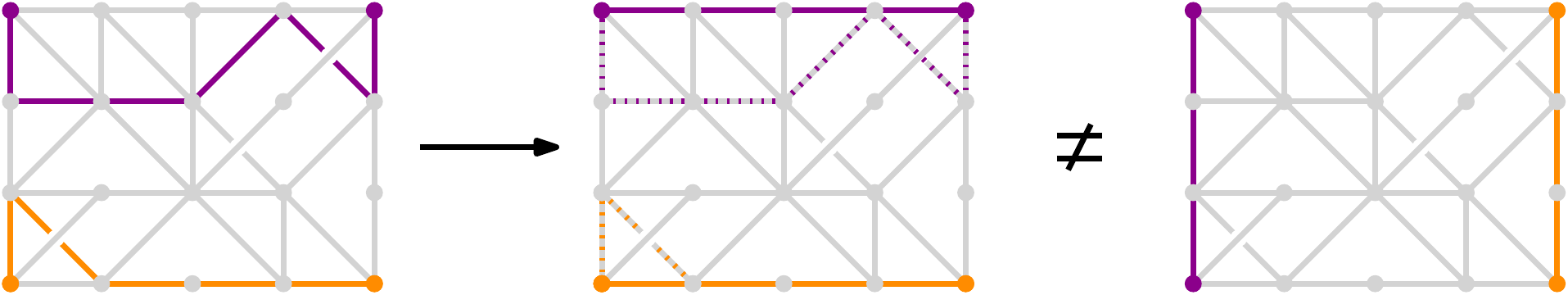}
\caption{\label{FIG: Part3 MBC1 section}A partitioning of the (rightmost) $1$-chain from \cref{FIG: Part2 MBC1 section} (left) and the $1$-chain we get from replacing the two paths between the pairs of vertices with shortest path between the endpoints (middle). The resulting $1$-chain is not the same as the minimum bounding chain, shown to the right.}
\end{figure}

\section{The Dijkstra Approach}\label{sec:dijkstra}
This section introduces an algorithm inspired by Dijkstra's shortest path algorithm to solve the MBC$_d$ problem. Here we also prove hardness results for solution size and coface degree. 
\subsection{The Algorithm}\label{sec:algorithm}

We now describe an algorithm that solves the MBC$_d$ problem for any $d \geq 1$. We first give the basic idea of the algorithm, which we then improve through some simple modifications.

\begin{definition}\label{gdk}
Let $(K,\cosst)$ be a weighted simplicial complex and let $d\geq 1$ be an integer. The graph $G^{d}(K)$ is the weighted graph with vertex set $C_{d-1}(K)$ and edges $(\ChnA,\ChnB)$ between pairs of chains whenever $\partial(\sigma) = \ChnA\triangle \ChnB$ for some $d$-simplex $\sigma$ in $K$. The length of the edge $(\ChnA,\ChnB)$ is equal to the weight of $\sigma$.  
\end{definition}

Since we have chains with coefficients in $\mathbb{Z}_2$, traversing an edge can therefore be viewed geometrically as taking the symmetric difference of the boundary of the simplex $\sigma$ with the current $(d-1)$-chain $\ChnA$. This means that for every bounding chain $\BndA = \{\sigma_1,\sigma_2,\dots, \sigma_k\} $ of $\ChnA$ there is a trail $p(\ChnA,\vec{0}) = (\sigma_1,\sigma_2,\dots, \sigma_k)$ in the graph $G^{d}(K)$ from $\ChnA$ to the empty-chain/null-chain, $\vec{0}$.

\begin{prop}
The MBC$_d$ problem can be solved by finding the shortest path $p(\ChnA,\vec{0})$ in $G^d(K)$, where $\ChnA$ is the given input boundary.\qed 
\end{prop}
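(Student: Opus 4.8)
The plan is to exhibit a weight-respecting correspondence between bounding chains of $\ChnA$ and walks in $G^d(K)$ from $\ChnA$ to $\vec{0}$, and then to argue that the two induced minimization problems have the same optimal value. The forward direction is already recorded in the paragraph preceding the statement: any bounding chain $\BndA=\{\sigma_1,\dots,\sigma_k\}$, listed in an arbitrary order, yields a walk $\ChnA=C_0,C_1,\dots,C_k$ with $C_i=C_{i-1}\symdif\partial\sigma_i$, and since $C_k=\ChnA\symdif\partial\BndA=\vec{0}$ this walk terminates at the null chain. Its length is $\sum_{i=1}^{k}\cost{\sigma_i}=\cost{\BndA}$, so the shortest-path distance from $\ChnA$ to $\vec{0}$ is at most the optimum of the MBC$_d$ instance.

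For the reverse inequality I would read a walk backwards into a chain. Suppose a walk from $\ChnA$ to $\vec{0}$ traverses the $d$-simplices $\sigma_1,\dots,\sigma_m$, with possible repetitions, and set $\BndA=\sigma_1+\dots+\sigma_m$ with the sum taken in $C_d(K)=\Zto[K^d]$. Because $\partial$ is $\Zto$-linear and $\symdif$ is addition, telescoping the walk gives $\partial\BndA=\partial\sigma_1\symdif\dots\symdif\partial\sigma_m=C_0\symdif C_m=\ChnA$, so $\BndA$ is a feasible bounding chain. The key point to check is the weight comparison: a simplex occurring an even number of times along the walk cancels in the $\Zto$-sum $\BndA$ yet still contributes its positive weight to the walk length, so in general $\cost{\BndA}\le\sum_{i=1}^{m}\cost{\sigma_i}$. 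Hence every walk has length at least the weight of some feasible chain, and the shortest-path distance is at least the MBC$_d$ optimum.

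Combining the two inequalities shows that the shortest-path distance from $\ChnA$ to $\vec{0}$ equals the optimal weight of the MBC$_d$ instance; and since $\ChnA$ is assumed to be a boundary, a feasible chain—hence a walk, hence a finite path—exists, so $\vec{0}$ is reachable and the shortest path is well defined. To recover an actual solution I would take a shortest path, form the $\Zto$-sum $\BndA$ of the $d$-simplices it traverses, and note that $\BndA$ is feasible with $\cost{\BndA}$ at most the path length, hence exactly the optimum by feasibility. I expect the only delicate point to be the weight comparison in the reverse direction: one must confirm that the walk-to-chain map can only decrease weight, since a simplex traversed an even number of times cancels in the $\Zto$-sum while still paying its positive weight along the walk. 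This is precisely where positivity of the weights enters, and it is what guarantees both the reverse inequality and the optimality of the recovered chain.
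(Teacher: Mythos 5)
Your proposal is correct and takes essentially the same approach as the paper, which records only the chain-to-walk direction in the paragraph preceding the statement and then asserts the proposition as immediate. Your write-up additionally supplies the converse the paper leaves implicit---reading a walk back into the $\Zto$-sum $\BndA$ of its traversed simplices, telescoping to get $\partial\BndA=\ChnA$, and using positivity of weights so that even-multiplicity cancellation can only decrease $\cost{\BndA}$---which is exactly the right way to complete the paper's argument rather than a genuinely different route.
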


If $K$ is a simplicial complex with $|K^{d-1}| = n$ and $|K^d| = m$, then this gives us a graph $G^d(K)$ with $2^n$ vertices (each having degree $m$) and $2^{n-1}\cdot m$ edges. Using Dijkstra with Fibonacci heaps \cite{fredman} we can find such a path in $\mathcal{O}(|E|+|V|\log |V|)=\mathcal{O}(2^{n-1}\cdot m + 2^n \cdot n) = \mathcal{O}(2^n(m+n))$ time. 

This gives us an algorithm that is slightly worse than brute force (since $m < n$ trying all possible chains $\BndA$ takes $\mathcal{O}(2^m \operatorname{poly}(m,n))$-time). Next, we make two simple observations that increase the theoretical (parameterized) runtime significantly.

First, note that if we have $\ChnA = \partial(\BndA)$ then for every simplex $\rho \in \ChnA$ the chain $\BndA$ must contain at least some coface $\sigma$ of $\rho$. Second, the order in which we add $d$-simplices to our path is irrelevant. This means that we only need (directed) edges going from a given chain $\ChnA$ to chains $\ChnA \triangle\partial(\sigma)$ where $\rho\in\partial(\sigma)$ for some choice of $\rho$ in $\ChnA$. We make these remarks into a definition and a theorem:

\begin{definition}\label{gdfk}
Let $(K,\cosst)$ be a weighted simplicial complex, let $d\geq 1$ be an integer, and for every $(d-1)$-chain $\ChnA$ in $K$ fix a simplex $\rho_\ChnA\in \ChnA$. The graph $G^{d}_D(K)$ is the directed weighted graph with vertex set $C_{d-1}(K)$ and edges $(\ChnA,\ChnB)$ between pairs of chains whenever $\partial(\sigma) = \ChnA\triangle \ChnB$ for some $d$-dimensional coface $\sigma$ of $\rho_\ChnA$ (see \cref{FIG: Better graph simple}). The length of the edge $(\ChnA,\ChnB)$ is equal to the weight of $\sigma$.  
\end{definition}
\begin{thm}
The MBC$_d$ problem can be solved by finding the shortest path $p(\ChnA,\vec{0})$ in $G_D^d(K)$, where $\ChnA$ is the given input boundary.\qed 
\end{thm}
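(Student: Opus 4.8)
The plan is to show that the length of a shortest directed $\ChnA$-to-$\vec{0}$ path in $G^d_D(K)$ equals $\mathrm{OPT}$, the minimum weight of a $d$-chain $\BndA$ with $\partial\BndA=\ChnA$. I would establish this by proving two inequalities, leaning on the previous proposition, which already identifies $\mathrm{OPT}$ with the length of a shortest $\ChnA$-to-$\vec{0}$ path in the larger graph $G^d(K)$. Thus the whole argument reduces to comparing $G^d_D(K)$ with $G^d(K)$.

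The easy direction is that the shortest-path length in $G^d_D(K)$ is at least $\mathrm{OPT}$. Every edge of $G^d_D(K)$ is, after forgetting its orientation, an edge of $G^d(K)$ of the same length, since a $d$-dimensional coface of the pivot $\rho_{\ChnA}$ is in particular a $d$-simplex. Hence any directed $\ChnA$-to-$\vec{0}$ path in $G^d_D(K)$ is a walk of equal length in $G^d(K)$, and deleting and orienting edges can only increase distances. So the shortest-path length in $G^d_D(K)$ is at least the one in $G^d(K)$, which equals $\mathrm{OPT}$ by the previous proposition.

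The substance is the reverse inequality, which I would prove constructively. Take an optimal $d$-chain $\BndA=\{\sigma_1,\dots,\sigma_k\}$ with $\partial\BndA=\ChnA$ and greedily reorder its simplices into a valid directed walk. I would maintain the invariant that, after using some of the $\sigma_i$, the current vertex $\ChnB$ equals the boundary of the still-unused simplices; this holds initially because $\partial\BndA=\ChnA$, and it is preserved because deleting one simplex $\sigma$ from the unused set changes both the current vertex and that boundary by the same $\partial\sigma$. Whenever $\ChnB\neq\vec{0}$, the pivot $\rho_{\ChnB}$ lies in $\ChnB$, which is the boundary of the unused simplices; by the parity description of $\partial$, the simplex $\rho_{\ChnB}$ is then a codimension-$1$ face of an odd, hence nonzero, number of unused simplices. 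Choosing any such coface $\sigma$ yields a legal directed edge of $G^d_D(K)$ from $\ChnB$ to $\ChnB\symdif\partial\sigma$ of length $\cost{\sigma}$. Each step consumes exactly one simplex of $\BndA$, so a walk that uses all of them has total length $\cost{\BndA}=\mathrm{OPT}$.

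The main obstacle is showing that this greedy walk actually reaches $\vec{0}$ after consuming all of $\BndA$, rather than stalling at $\vec{0}$ with simplices left over. I would settle this using optimality: if the walk reached $\vec{0}$ while a nonempty subchain $\BndB\subseteq\BndA$ of simplices remained unused, then $\partial\BndB=\vec{0}$ by the invariant, so $\BndB$ is a nonempty cycle and $\BndA\symdif\BndB$ would be a strictly lighter $d$-chain with the same boundary $\ChnA$, since all weights are positive, contradicting the minimality of $\BndA$. Hence the unused set empties precisely when the current vertex is $\vec{0}$, after $k$ steps, and combining the two inequalities gives that the shortest-path length equals $\mathrm{OPT}$. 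Finally, reading the traversed simplices off a shortest path recovers an optimal bounded chain, so the procedure genuinely solves the MBC$_d$ problem rather than merely computing its optimal value.
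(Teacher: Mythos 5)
Your proposal is correct and follows essentially the same route as the paper, which proves this theorem only by the two preceding observations: every simplex of the current boundary must have a coface in any bounding chain (so restricting to directed edges through the pivot $\rho_\ChnA$ loses nothing), and the order in which the simplices of a solution are traversed is irrelevant. Your greedy reordering with the invariant ``current vertex equals the boundary of the unused simplices,'' plus the positive-weight argument ruling out leftover cycles, is precisely a rigorous expansion of that sketch.
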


Note that the graph $G_D^d(K)$ is not uniquely determined from the input of the MBC$_d$ alone, since we have to choose some $\rho_\ChnA$ at every node $\ChnA$ of the graph. This choice can be made arbitrarily or deliberately. It would be interesting to see if we can get better algorithms (at least in practice) by exploring different heuristics we can use when making this choice. 
It is enough to consider the component of the graph containing our input boundary $\ChnA$, but as Dijkstra's algorithm can be run while gradually constructing $G^{d}_D(K)$ we do not need to save the entire graph to memory anyways. 

\begin{figure}[!h]
\centering
 \includegraphics[width=\textwidth]{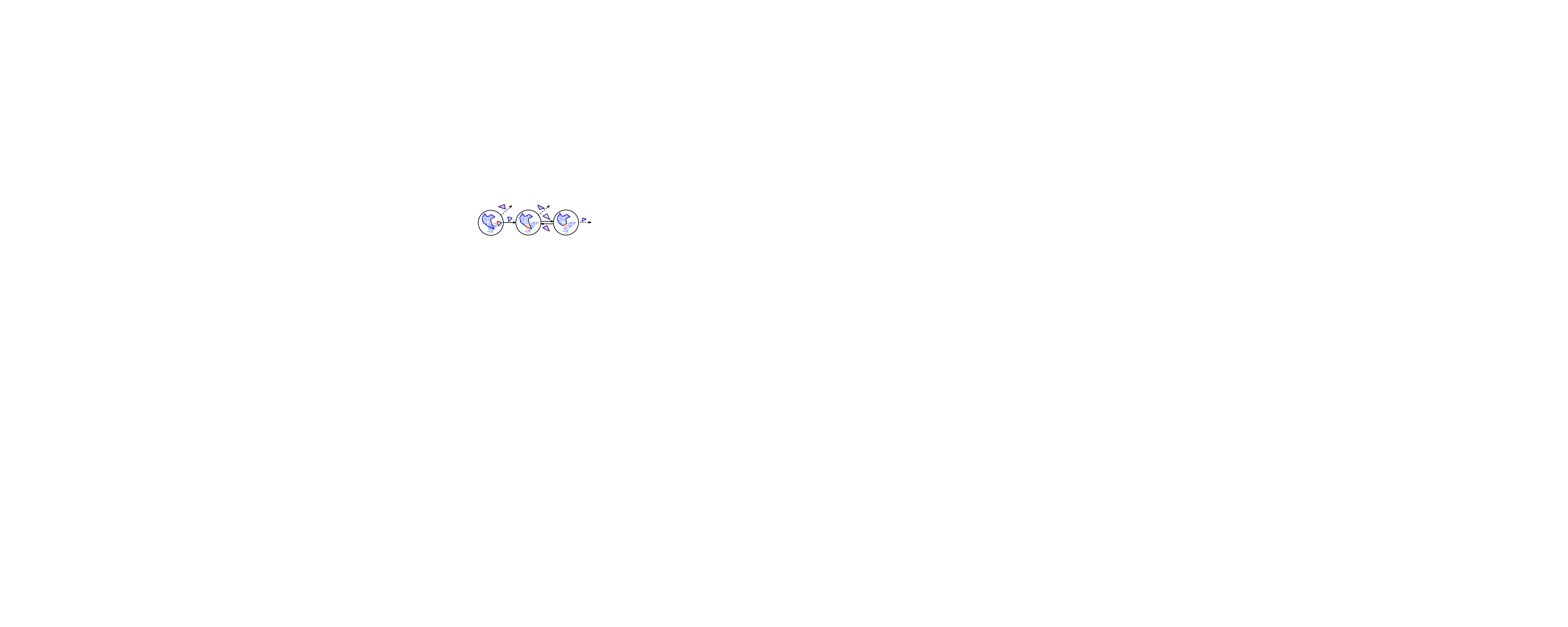}
 \caption{\label{FIG: Better graph simple} A representation of a part of the graph $G^{2}_D(K)$ for some simplicial complex $K$. Nodes in the graph are $1$-boundaries (dark blue) with a fixed $1$-simplex (red). For every $2$-dimensional coface (orange and yellow) of this $1$-simplex we have an edge in the graph, and traversing the edge corresponds to taking the symmetric difference with the boundary of this coface.  }
\end{figure}

The SMBC$_1$ problem when $K$ is a graph corresponds to finding the shortest path $p(s,e)$ between two given nodes $s,e\in K$. If we always pick $\rho_U\neq e$ for the edges in $G^{1}_D(K)$, then we get an isomorphism between the subgraph of $K$ explored by Dijkstra's algorithm finding $p(s,e)$, and the subgraph of $G^{d}_D(K)$ explored finding $p(v+e,\vec{0})$, sending nodes $v+e\mapsto v$ and $\vec{0}\mapsto e$. 

We now work out the runtime we get solving the MBC$_d$ problem by finding the shortest path in $G_D^d(K)$ under different parameterizations. In particular, we investigate the parameters coface degree and solution size. 

The \emph{coface degree} of a $(d-1)$-simplex $\rho$ is defined as the number of cofaces of dimension $d$ (or codimension 1). The \emph{coface degree} (in dimension $d-1$) of a simplicial complex $K$ is the maximum coface degree of any $(d-1)$-simplex in $K$, which we denote by $c$.  In a simplicial complex with coface degree $c$, we know that the graph $G_D^d(K)$ have at most $c$ edges out of every vertex. This means that if we take $c$ to be our parameter, we get a runtime of $\mathcal{O}(2^n(c+n))$ (where we remember that $n$ is the number of $(d-1)$-simplices in $K$). This is still exponential in $n$, so this particular parameterization itself doesn't improve the runtime of our algorithm by much. In \cref{sec: cofacesection} we see that there are good reasons for why our algorithm does not perform better using this parameterization on its own.

The next parameter we want to investigate is the solution size $k$ (i.e. the number of $d$-simplices we want the solution to contain). For the unweighted case, if $k$ is fixed, then we can restrict the search to the $(d-1)$-chains of distance at most $k$ from the input chain $\ChnA$. This can be achieved by keeping track of the size of the path up to each chain, and ignoring them if the path has more than $k$ edges. To estimate the runtime of this parameterized algorithm, we need to count how many vertices there are in this new subgraph. 

First, we see that there is one vertex of distance $0$ from $\ChnA$, namely $\ChnA$ itself. In a chain complex $K$ with coface degree $c$, we can by definition get to at most $c$ new vertices once we are at distance $1$ from $\ChnA$, one for each coface of the $(d-1)$-simplex we choose in $\ChnA$. With each new step we can get to at most $c$ new each for each of the vertices from the step before. Clearly then, there are at most $c^k$ new vertices at distance $k$ from $\ChnA$ than at distance $k-1$.

Thus, if we only want solutions containing at most $k$-simplices, then we only need to consider the subgraph with 
\begin{equation}\label{eq: subgraph}
    |V|=\sum_{i=0}^k c^i = \frac{c^{k+1}-1}{c-1} \quad\text{and}\quad |E| = \frac{c}{2}\cdot\frac{c^{k+1}-1}{c-1}.
\end{equation}
Running Dijkstra on a graph of this size takes 
\begin{equation*}
    \mathcal{O}\left(\frac{\left(c^{k+1}-1\right)\left(2\log\left(\frac{c^{k+1}-1}{c-1}\right)+c\right)}{2(c-1)}  \right)=\mathcal{O}\left(c^k(k\log c + c)\right)\text{-time}.
\end{equation*}

In the weighted case, we need to keep track both of how much the path weighs and how many simplices are on it. A convenient way of dealing with this is by making a slightly larger directed graph and solving Dijkstra there instead. For a general weighted directed graph $G$, consider the directed graph $\overline{G}$ whose vertices are pairs $(v,i)$, consisting of a vertex $v\in V(G)$ and an integer $0\leq i \leq k$. The idea is that the number $i$ keeps count of how many simplices we have added to the path. To make this work, we make the directed edges of $\overline{G}$ be precisely the pairs of vertices $((v,i),(u,i+1))$ where $(v,u)$ is an edge in $E(G)$. The edge $((v,i),(u,i+1))$ is given the same weight as $(v,u)$ had in $G$. We can now use Dijkstra in this graph to find the shortest paths from $(u,0)$ to all the other vertices in this graph. The shortest path from $u$ to $v$ in $G$ is shortest of the distances from $(u,0)$ to any $(v,i)$. This new graph has $|V(\overline{G})| = (k+1)|V(G)|$ and $|E(\overline{G})| = k|E(G)|$.

Running Dijkstra on the graph of size described in \cref{eq: subgraph}, we again get a runtime of the form $c^k\cdot\operatorname{poly}(m, n)$, where the polynomial degree is larger than in the unweighted case.

\begin{thm}\label{PROP: dijkstra runtime}
The MBC$_d$ problem can be solved in $c^k\cdot\operatorname{poly}(m, n)$-time.
\end{thm}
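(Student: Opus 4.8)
The plan is to assemble the pieces developed just above into a single Dijkstra computation, the one genuinely delicate point being the weighted case. By the theorem following \cref{gdfk}, solving MBC$_d$ is equivalent to finding a shortest path $p(\ChnA,\vec{0})$ in the directed graph $G_D^d(K)$, and each edge of this graph corresponds to adjoining a single $d$-simplex to the current chain. Consequently a path using exactly $j$ edges corresponds to a bounding chain of $j$ simplices, so restricting to paths of length at most $k$ captures every candidate solution of size at most $k$.

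First I would exploit the coface-degree bound. Since every vertex of $G_D^d(K)$ has out-degree at most $c$, the set of chains reachable from $\ChnA$ within $k$ edges has size at most $\sum_{i=0}^k c^i$, with a corresponding edge bound, exactly as recorded in \cref{eq: subgraph}. Crucially, these vertices (and the relevant cofaces) can be generated on the fly as Dijkstra expands, so we never materialise the full $2^n$-vertex graph; each symmetric-difference and coface lookup costs only $\operatorname{poly}(m,n)$. In the unweighted case all edge weights equal $1$, so the Dijkstra distance coincides with the hop-count and one may simply halt the search once distance $k$ is exceeded, giving the runtime $\mathcal{O}\!\left(c^k(k\log c + c)\right)$ already computed.

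The main obstacle is the weighted case, where Dijkstra orders vertices by weight rather than by number of edges, so one cannot truncate the frontier at ``distance $k$''. To decouple the hop-count from the weight I would pass to the layered graph $\overline{G}$ described above, whose vertices are pairs $(\ChnA,i)$ with $0 \le i \le k$ and whose edges run from layer $i$ to layer $i+1$ while preserving weights. A path out of $(\ChnA,0)$ is then forced to spend exactly one $d$-simplex per layer, so the layer index faithfully records the solution size, and the minimum-weight bounded chain of size at most $k$ is the lightest path from $(\ChnA,0)$ to some $(\vec{0},i)$. Since $\overline{G}$ is built only over the hop-radius-$k$ ball, it has $\mathcal{O}(k\cdot c^k)$ vertices and edges, and running Dijkstra with Fibonacci heaps on it costs $c^k\cdot\operatorname{poly}(m,n)$, the claimed bound. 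If the optimal size $k$ is not supplied as the parameter, iterative deepening over $j=0,1,\dots,k$ contributes only the geometric factor $\sum_j c^j = \mathcal{O}(c^k)$, which is absorbed into the bound.
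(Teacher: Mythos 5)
Your proposal is correct and follows essentially the same route as the paper: the directed graph $G_D^d(K)$ from \cref{gdfk}, truncation to the hop-radius-$k$ ball using the coface-degree bound of \cref{eq: subgraph}, and the layered graph $\overline{G}$ with vertices $(v,i)$ to decouple hop-count from weight in the weighted case. The closing remark on iterative deepening is a harmless addition not present in the paper.
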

By again noting that $m=\operatorname{poly}(n)$, we have \cref{thm:solution-coface-algorithm}.
In particular, this makes the MBC$_d$ problem solvable in FPT when parameterized by solution size and coface degree. We also get an XP-algortihm for the MBC$_d$ from this analysis, if we take the problem to be parameterized by solution size alone. The XP-runtime is obtained by replacing $c$ with it's worst possible value, namely $m$. This is roughly the runtime we would get if we just tried all the $m^k$ $d$-chains $\BndA$ containing less than $k$-simplices. We show evidence in \cref{sec:solutionsize} indicating that no algorithm parameterized by solution size alone can solve the MBC$_d$ problem in FPT-time for $d\geq2$.

\subsection{Hardness Results}\label{sec:hardness}

In the previous section, we showed that we could solve the MBC$_d$ problem parameterized by coface degree and solution size in FPT-time using our Dijkstra inspired algorithm. The same algorithm turns into an XP-algorithm when we parameterized by solution size alone. If we parameterize the MBC$_d$ problem by coface degree alone the runtime is exponential. Our analysis shows that the algorithm could potentially need exponential time even on spaces where the coface degree is at most 3.

In this section, we use parameterized complexity theory to show that these runtimes actually make a lot of sense. We also show bounds as to how much more we can expect to improve them.

\subsubsection{Coface Degree}\label{sec: cofacesection}

We begin by looking at the parameter maximum coface degree. Recall that for an instance of MBC$_d$, namely a simplicial complex $K$ and a $(d-1)$-chain $\ChnA$, the \emph{maximum coface degree} is the highest number of $d$-simplices in $K$ that share a common $(d-1)$-simplex as a face. If this parameter is $2$, the problem is solvable in polynomial time by a simple preprocessing routine. However, this is the only case where we gain anything, as we have the following result.

\begin{thm}
The unweighted SMBC$_d$ problem for $d\geq 2$ is NP-complete, even when restricted to spaces with coface degree three. 
\end{thm}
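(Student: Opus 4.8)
The plan is to prove NP-completeness in two stages. Membership in NP is immediate: given a candidate $d$-chain $W$ one computes $\partial W$ by summing the boundaries of its simplices and checks $\partial W = U$ in time polynomial in $n$, and since any solution is contained in $K^d$ the chains themselves are polynomially-bounded certificates. The substance is NP-hardness, and since the SMBC$_d$ problem is exactly the special case of MBC$_d$ in which the input boundary is a topological sphere, proving SMBC$_d$ hard immediately yields \cref{thm:coface-hardness} for the unrestricted problem as well.

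For hardness I would first treat $d=2$ and design a polynomial-time reduction from a bounded-occurrence, parity-flavoured NP-hard problem --- a natural candidate being a variant of $3$-SAT, or, exploiting the $\Zto$-structure more directly, an exact-cover / $3$-LIN type problem --- chosen so that each constraint already involves only three Boolean quantities. The key observation is that the set of valid fillings of a fixed boundary $U$ is the coset $W_0 + Z_2(K)$ of a particular solution $W_0$ by the space of $2$-cycles, so that all the combinatorial freedom in the instance is carried by the closed surfaces sitting inside $K$. I would therefore assemble the $2$-complex from two kinds of gadgets: \emph{variable gadgets}, each offering two local fillings of the same portion of the boundary that differ by a small $2$-sphere (so toggling the sphere flips a truth value without changing the boundary), and \emph{clause gadgets}, arranged so that a clause admits a filling of minimum size exactly when it is satisfied and otherwise forces strictly more triangles. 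The size threshold in the SMBC$_2$ instance is then set to the minimum achievable over satisfying assignments, so that the instance is a ``yes''-instance precisely when the formula is satisfiable.

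Two structural requirements must be enforced throughout. First, coface degree three: every edge may be a face of at most three triangles, which tightly constrains how gadgets overlap and how boundary edges are shared between the two alternative fillings of a variable gadget. I would reserve one of the three incidence slots at each boundary edge for a ``default'' filling and distribute the remaining incidences so that no edge is ever over-used, subdividing where necessary to relieve congestion while preserving the unweighted sizes. Second, the boundary must be a single spherical $1$-cycle: I would link the gadgets along a common spine and cap the ends so that $U$ is one simple closed curve (a topological $S^1$) rather than several components, as the SMBC$_d$ definition demands. Finally, to pass from $d=2$ to all $d\geq 2$ I would lift the construction dimension by dimension; a naive suspension spoils the coface bound (it adds two cone-cofaces to every edge), so instead I would use a localized coning or thickening that raises dimension while keeping each $(d-1)$-simplex incident to at most three $d$-simplices, and check that both the spherical boundary and the satisfiability correspondence survive the lift.

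The main obstacle, I expect, is meeting all three demands simultaneously. Minimum-weight $\Zto$-filling problems are easy to make NP-hard in the abstract --- they are instances of the MLD problem --- but realizing the required sparse parity system as the honest boundary operator of a simplicial complex whose input cycle is a genuine sphere and whose coface degree never exceeds three is where the geometry fights back. Coface degree three is especially restrictive because it caps the branching at each $(d-1)$-simplex to the bare minimum needed for a meaningful choice, leaving almost no slack in the gadget design; verifying that the variable and clause gadgets can be glued into a spherical-boundary complex without ever creating a fourth coface, and that this property is preserved under the dimensional lift, is the crux of the argument.
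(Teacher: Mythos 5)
Your NP-membership argument is fine, but the hardness half of your proposal is a plan rather than a proof, and the plan stops exactly where the theorem lives. You yourself identify the crux---realizing the variable and clause gadgets as an honest simplicial complex whose input cycle is a single circle and in which no $(d-1)$-simplex ever acquires a fourth coface---and you leave it unresolved: the ``incidence slots,'' subdivisions, spines and caps are named but never constructed, so the equivalence between satisfiability and the size threshold is never actually established. The paper takes a completely different and much shorter route: it observes that an already-published polynomial-time reduction (that of Borradaile, Maxwell and Nayyeri \cite{borradaile_et_al:LIPIcs:2020:12179}, and equally the \textsc{Max Cut} reduction of \cref{sec:eth-tightness} built from a pair of pants, tubes and disks) happens to output a $2$-complex whose coface degree is already three and whose input boundary is a single waist circle, hence spherical. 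So for $d=2$ the theorem is a corollary of existing work plus an inspection of that construction, with no new gadgets to design and no congestion to relieve.

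The step where your proposal actively fails is the lift to $d\geq 3$. You reject suspension on the grounds that it ``adds two cone-cofaces to every edge,'' but this counts cofaces in the wrong dimension. After suspending, the problem becomes SMBC$_{d+1}$, so the relevant parameter is the number of $(d+1)$-dimensional cofaces of the $d$-simplices of the suspension $SK$. These $d$-simplices come in two kinds: an original $d$-simplex $\sigma$ of $K$, whose only $(d+1)$-cofaces are the two cones $\sigma\ast n$ and $\sigma\ast s$; and a coned simplex $\rho\ast n$ (or $\rho\ast s$) for a $(d-1)$-simplex $\rho$ of $K$, whose $(d+1)$-cofaces are exactly the cones $\sigma\ast n$ over the $d$-cofaces $\sigma\supset\rho$ in $K$. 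The coface degree of $SK$ is therefore $\max(2,c)$, which equals $3$ when $c=3$: suspension preserves the coface degree, which is precisely why the paper (following \cite{chen2011hardness}) can dispose of all higher dimensions in one sentence, and why the suspension also preserves sphericity of the input cycle. By rejecting the correct tool on the basis of this miscount, you replace a one-line step with an undefined ``localized coning or thickening'' that would itself require a full correctness proof, leaving your argument with a genuine gap in every dimension $d\geq 3$ on top of the unfinished gadget construction in dimension $2$.
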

In dimension 2, this is not stated in \cite{borradaile_et_al:LIPIcs:2020:12179}, but it follows from their the polynomial time reduction as the output space has maximum coface degree $3$. The reduction in \cref{sec:eth-tightness} can also be altered to show this. To get the result in higher dimensions, we take the suspension like they do in \cite{chen2011hardness}, which does not change the coface degree.
As a consequence we have that the MBC$_d$ problem is NP-complete even for spaces with coface degree three, as stated in \cref{thm:coface-hardness}.

\subsubsection{Solution Size}\label{sec:solutionsize}

The next parameter we look at is the \emph{solution size}, which is how many simplices there are in the solution. Be aware that we do not talk about the \emph{solution weight}, that is the sum of the weights of these simplices. We begin by stating the hardness result.

\begin{lemma}\label{lem: smbcd W1}
The unweighted SMBC$_d$ problem is W[1]-hard when parameterized by solution size. 
\end{lemma}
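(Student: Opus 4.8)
The plan is to give a parameterized reduction from \textsc{Multicolored Clique} — given a graph whose vertex set is partitioned into $l$ color classes $V_1,\dots,V_l$, decide whether there is a clique using exactly one vertex from each class — which is W[1]-hard when parameterized by the number of colors $l$. I would first settle the case $d=2$, where the input is a spherical $1$-boundary, i.e.\ a circle. I would then lift to every $d\geq 2$ by repeatedly taking the \emph{suspension} of the instance, as in the coface-degree reduction: the suspension of a $(d-1)$-sphere is a $d$-sphere, $\Sigma K$ is computable in polynomial time, and since every $(d{+}1)$-simplex of $\Sigma K$ contains exactly one of the two cone points, every bounding chain of $\Sigma \ChnA$ has the form $a*\BndA \,\symdif\, b*\BndA$ with $\partial \BndA = \ChnA$. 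Hence a minimum filling of the suspended boundary has exactly twice the size of a minimum filling of $\ChnA$, so suspension is a parameterized reduction that raises the dimension by one while keeping the parameter a fixed function of $l$.

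For $d=2$, the organizing viewpoint is that, fixing any one chain $\BndA_0$ with $\partial\BndA_0=\ChnA$, every solution has the form $\BndA_0\symdif Z$ for a $2$-cycle $Z\in Z_2(K)$; thus SMBC$_2$ is exactly the problem of finding a minimum-weight representative of the coset $\BndA_0 \symdif Z_2(K)$, a nearest-codeword problem for the cycle space of $K$. The reduction must therefore build a $2$-complex whose cycle space is rich enough that its cheap cosets encode multicolored cliques. Concretely I would assemble $K$ from two families of gadgets: for each color $i$ a \emph{selection gadget} whose minimal partial fillings are in bijection with the vertices of $V_i$, so that fixing the chain forces a choice $v_i\in V_i$; and for each pair $\{i,j\}$ an \emph{edge/verification gadget} that admits a cheap local filling precisely when the chosen pair $(v_i,v_j)$ is an edge of the graph. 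The global input boundary $\ChnA$ is wired together as a single circle so that it is genuinely a $1$-sphere (making the instance a legal SMBC$_2$ instance) and so that filling it forces every selection and verification gadget to be filled.

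With the gadgets in place the accounting should be routine: a multicolored clique yields a filling of some fixed size $k=g(l)=\Theta(l^2)$ (a selection cost per color plus a verification cost per pair), while conversely any filling of size at most $k$ must fill each gadget minimally, hence must pick one vertex per color and certify every pair as an edge, recovering a clique. This two-sided bound is exactly where the unweighted hypothesis is used, as the whole argument is a count of simplices rather than of weights.

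The step I expect to be the main obstacle is making the consistency constraints bite over $\Zto$. The boundary operator only records parities, so a verification gadget cannot directly read off which vertex a neighboring selection gadget chose; it can detect agreement only through cancellation of shared boundary faces. Designing those shared faces so that the unique cheap fillings are exactly the \emph{consistent} (clique) selections — while simultaneously keeping $K$ a bona fide simplicial complex and keeping $\ChnA$ an honest sphere rather than merely a cycle — is the delicate part, and it is what pins down the constant in $k=g(l)$: one must guarantee that every inconsistent choice pays strictly more simplices, so that the threshold $k$ cleanly separates ``yes'' from ``no'' instances.
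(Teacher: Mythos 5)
Your high-level skeleton coincides with the paper's: reduce from multicolored clique (the paper uses the equivalent $\alpha\times\beta$-\textsc{Clique} formulation, W[1]-hard in the number of colors), settle $d=2$, then lift to all $d\geq 2$ by suspension. Your suspension analysis is correct and is exactly the paper's: every bounding chain of the suspended boundary has the form $a*\BndA \,\symdif\, b*\BndA$ with $\partial\BndA=\ChnA$, so the optimum doubles and the parameter stays controlled. Your budget $k=g(l)=\Theta(l^2)$ also matches the paper's $k'(\alpha)=A_2'\alpha^2+A_1'\alpha+A_0'$.

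However, there is a genuine gap, and you have located it yourself without closing it: the gadget design that makes consistency ``bite'' over $\Zto$ is never given, and it is the entire content of the proof. Worse, the two-family architecture you propose --- per-color selection gadgets plus per-pair verification gadgets that directly check adjacency of the two selected vertices --- is precisely the naive encoding that the paper points out fails: gluing a cylinder between the circle of a vertex of color $i$ and the circle of a vertex of color $i'$ whenever they are adjacent does not work, because over $\Zto$ such cylinders can cancel circles of inconsistent selections just as cheaply (see the counterexample in \cref{FIG: General reduction with layers}). The paper's fix is a third mechanism absent from your plan: after a vertex circle $y_{i,j}$ is selected, it is capped by a pair of pants with $\alpha-1$ legs, producing one ``declaration'' circle per other color, and a further cylinder layer chooses, for each other color $i'$, a candidate neighbor, creating a circle $z^{i,j}_{i',j'}$; only then do edge cylinders appear, joining $z^{i,j}_{i',j'}$ to $z_{i,j}^{i',j'}$ exactly when the corresponding edge exists in $G$. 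Correctness is then forced by counting: the budget $k'$ leaves room for at most $\alpha(\alpha-1)/2$ edge cylinders, while at least $\alpha(\alpha-1)$ declaration circles must be cancelled and each cylinder can cancel at most two, so every edge cylinder must cancel two declarations simultaneously --- this forced pairing is what encodes mutual adjacency and yields the clique in the reverse direction. So your claim that ``the accounting should be routine'' once gadgets exist has it backwards: the accounting and the gadget design are inseparable, the tight count \emph{is} the proof that a filling of size at most $k$ recovers a clique, and without the declaration layers there is no correct count to perform.
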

We show the $d=2$ case, and the higher dimensions follow again by taking the suspension, which changes the solution size linearly by doubling once for each suspension.
It may be possible to prove this result with a modified argument based on the parameterized reduction from the \textsc{Grid Tiling} problem to the \textsc{2-Sphere Recognition} problem in \cite{Burton2019}. Here, we present a completely different reduction from the $\alpha\times\beta$-\textsc{Clique} problem defined below, as this gives us further hardness results for when the MBC$_d$ problem is parameterized with respect to both solution size and maximum coface degree (see \cref{sec:doubleparameter}).

Before we state the problem, we define a $\alpha\times\beta$-\emph{grid graph} $G$ for positive integers $\alpha,\beta$ to be a graph where each vertex is uniquely specified by two numbers, its column $i$ (where $1 \leq i\leq \alpha$) and row $j$ (where $1 \leq j \leq \beta$). We denote vertices by $(i,j)$, and to keep track of what happens in the reduction, we give the vertices of each column the same color, saying that the vertex $(i,j)$ has color $i$. We draw $\alpha\times\beta$-grid graphs as in \cref{FIG: grid graph example}, where we place the vertices in rows and columns forming a grid. An $\alpha$-\emph{clique} in an $\alpha\times\beta$-grid graph is a collection of $\alpha$ different colored vertices all having edges between each other.

\begin{figure}[!h]
\centering
 \includegraphics[width=0.9\textwidth]{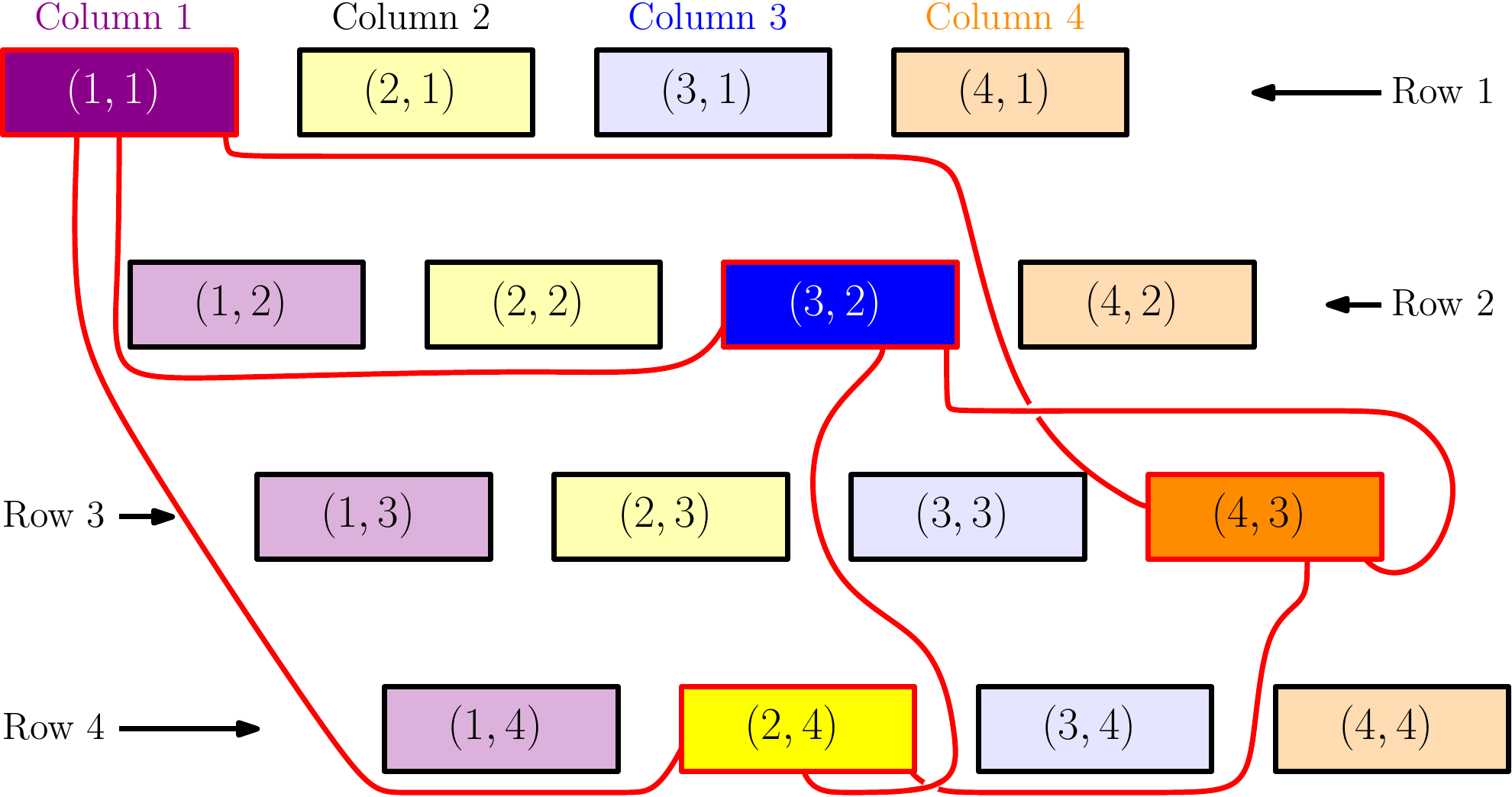}
 \caption{\label{FIG: grid graph example} An example of a $4\times 4$-grid graph (the vertices are drawn as boxes) containing a $4$-clique on the $4$ emphasized vertices of different colors/columns.}
\end{figure}

\begin{definition} The $\alpha\times\beta$-\textsc{Clique} problem\\
INPUT: A $\alpha\times\beta$-grid graph $G$.\\
QUESTION: Is there a $\alpha$-clique in $G$ containing one vertex from each column?
\end{definition}

The $\alpha\times\beta$-\textsc{Clique} problem is W[1]-hard when parameterized by \(\alpha\) \cite[Lemma 1]{fellows2007fixed}.

We now describe a polynomial reduction from the $\alpha\times\beta$-\textsc{Clique} problem to the SMBC$_2$ problem. The first step is to give a polynomial time algorithm for constructing simplicial complexes $X(G)$ from any given $\alpha\times\beta$-grid graphs $G$. This space $X(G)$ is made up of basic building blocks pictured in \cref{FIG: Notation}, which also shows the short hand notation we use throughout this section.

The space $X(G)$ is rather complicated, so we split the construction into five ``layers'', each consisting of one or more copies of the objects introduced in \cref{FIG: Notation}. In \cref{FIG: General reduction with layers} we have a systematic overview of which objects occupy any given layer and how these layers are connected. For a concrete example of the reduction see \cref{FIG: W1 hardness reduction example}, showing the space $X(G)$ where $G$ is the grid graph from \cref{FIG: grid graph example}. Finally, to see the location of the minimum bounding chain corresponding in $X(G)$ that corresponds to the clique in $G$, see \cref{FIG: concrete example}.

\begin{figure}[!h]
\centering
 \includegraphics[width=\textwidth]{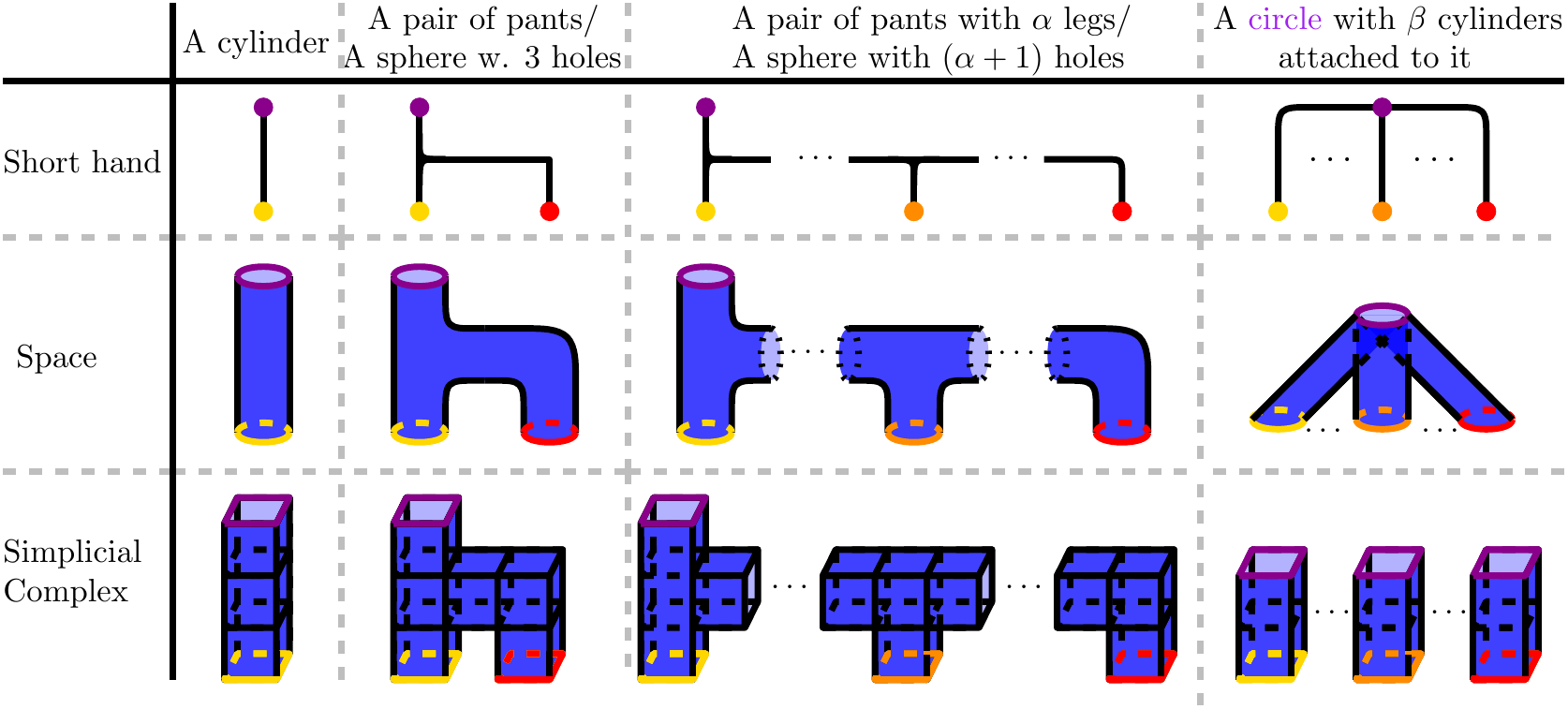}
 \caption{\label{FIG: Notation} An overview of short hand notation used in several of the reductions in this paper. We glue these spaces together along their boundaries when we combine them.}
\end{figure}

The first layer of the space $X(G)$ is simple: it is a pair of pants with $\alpha$ legs, as pictured in \ref{FIG: Notation}. The input chain of the SMBC$_2$ problem is the $1$-simplices on the boundary of the ``waist'' of the pair of pants, which we denote by $\Chnpic$. The boundary of each leg represents precisely one of the $\alpha$ colors in $G$ and we denote them by $x_1,\cdots x_k$. 

The second layer of the reduction is given by gluing $\beta$ cylinders to each leg $x_i$ along one of the boundary components, where we recall $\beta$ is the number of vertices of each color. We denote the cylinders boundary component that is not glued to $x_i$ by $y_{i,j}$ where $1\leq j\leq \beta$, as seen in \cref{FIG: General reduction with layers}. Here $y_{i,j}$ corresponds to the vertex $(i,j)$ in our graph $G$, and if the cylinder to $y_{i,j}$ is part of a solution to the SMBC$_2$ problem in $X(G)$ with boundary $\Chnpic$, then the vertex $(i,j)$ is part of an $\alpha$-clique in $G$.

With this second layer we have represented the all the vertices of the graph $G$, and we need to encode the edges. The most naive way would be to glue a cylinder to the boundaries $y_{i,j}$ and $y_{i',j'}$ if there is an edge between the vertices the two boundaries represent. However, this encoding does not work as seen by the counterexample in \cref{FIG: General reduction with layers}.

Instead, we encode the fact that every vertex must have $\alpha-1$ neighbours, each one of a different color. In particular, we want to say that each cycle representing a vertex must be canceled $\alpha-1$ times, once for every color different from its own. The first step (layer 3) is to attach a pair of pants with $\alpha-1$ legs (similar to the one in layer 1) to each circle $y_{i,j}$, one leg for each remaining color. We denote the cycle at the boundary of each leg by $y^{i,j}_{i'}$, where $1 \leq i'\leq \alpha$ and $i'\neq i$ (see \cref{FIG: General reduction with layers}). 
Layer 4 is another layer of cylinders, similar to layer 2. We glue $\beta$ cylinders to each $y^{i,j}_{i'}$, representing the possible vertices of color $i'$ that share an edge with the vertex represented by $y^{i,j}$. In layer 2 we forced a solution to pick a vertex in every color, and in layer 4 we force it for each chosen vertex to pick a neighbor of that vertex of every other color. We denote the new boundaries created by $z^{i,j}_{i',j'}$, where $i,\,j$ and $i'$ is as above and $1\leq j'\leq \beta$ represents a possible neighbour of the vertex represented by $y_{i,j}$ of color $i'$.

Finally, in layer 5, we encode the edges. If there is an edge in the graph $G$ from vertex $j$ of color $i$ to vertex $j'$ of color $i'$, then $z^{i,j}_{i',j'}$ and $z_{i,j}^{i',j'}$ are connected by a cylinder $edge((i,j),(i'j'))$. 

\begin{figure}[!h]
\centering
  \includegraphics[width=\textwidth]{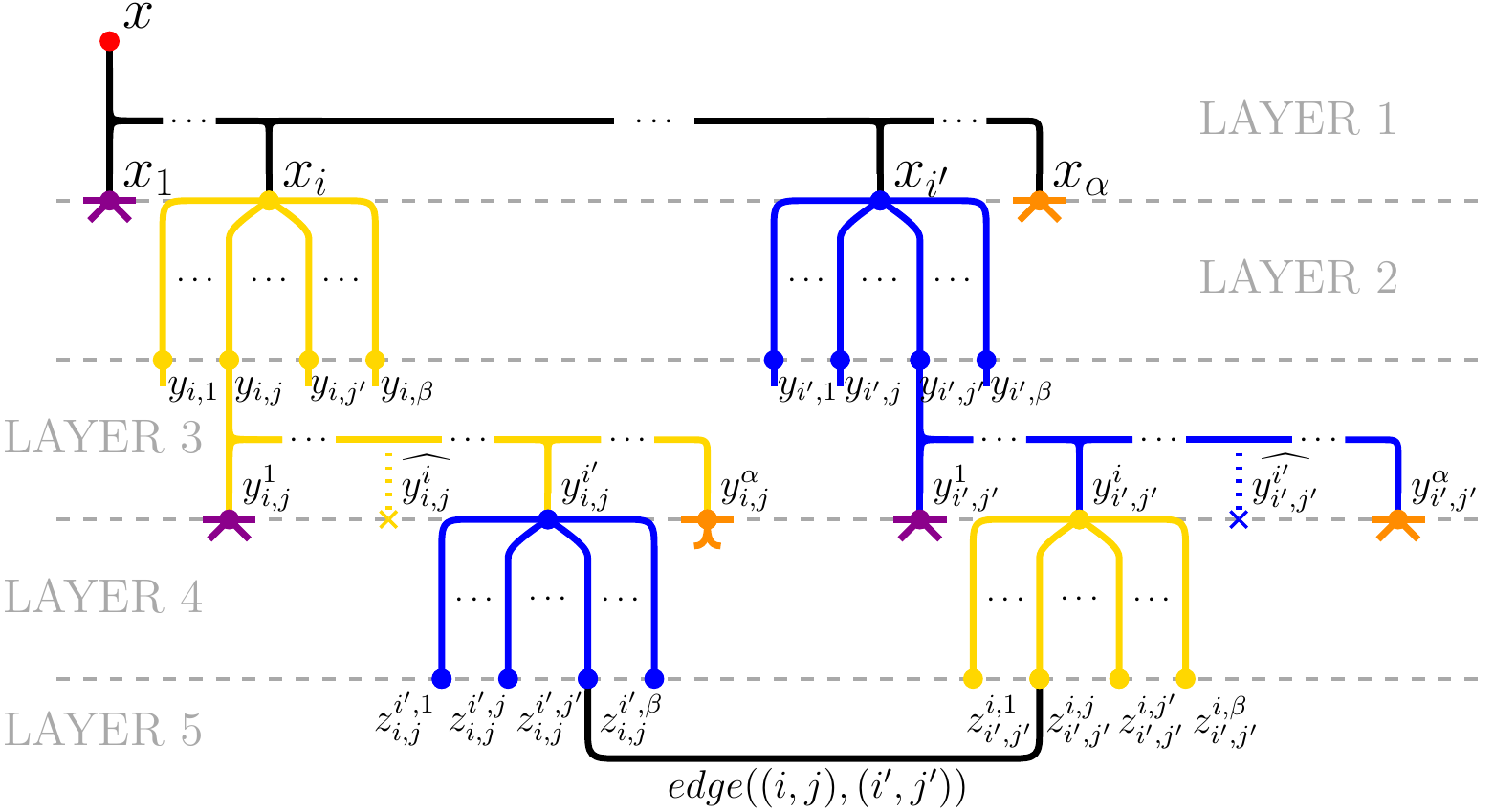}
 \caption{\label{FIG: General reduction with layers} A small portion of each of the five layers in the reduction using the short hand notation from \cref{FIG: Notation}.}
\end{figure}

The problem of finding any bounding chain for $\Chnpic$ in this space may seem to be equivalent with finding a clique in the input graph. However, we know that this is not the case, as it is still easy to look for such a bounding chain using linear algebra. We need to ask if there is a bounding chain in this complex of small size. To find the exact value of the bounding chain we are looking for, requires some careful counting. Roughly speaking, for some constants $A_0,\,A_1$ and $A_2$, we want a solution of size equal to the sum of:

\begin{enumerate}
    \itemsep0pt
    \item the 2-simplices in the pair of pants with $\alpha$ legs pictured in \cref{FIG: Notation}, given by the formula $A_1\cdot\alpha + A_0$.
    \item the number of 2-simplices in a cylinder multiplied by $\alpha$, given by the formula $A_2 \cdot \alpha$.
    \item the number of 2-simplices in the pair of pants with $\alpha-1$ legs multiplied by $\alpha$,  given by the formula $(A_1(\alpha-1) + A_0)\cdot \alpha$.
    \item the number of 2-simplices in a cylinder multiplied by $\alpha(\alpha-1)$,  given by the formula $A_2\cdot\alpha(\alpha-1)$.
    \item the number of 2-simplices in a cylinder $edge((i,j),(i',j'))$ multiplied by $\alpha(\alpha-1)/2$,  given by the formula $A_2\cdot\alpha(\alpha-1)/2$.
\end{enumerate}

Let $k' = k'(\alpha)$ be given by the formula
\begin{align*}
    k'(\alpha)  &=A_1\alpha + A_0 +  A_2 \alpha + (A_1(\alpha-1) + A_0)\alpha + A_2\alpha(\alpha-1) + A_2\alpha(\alpha-1)/2\\
                &= A_2'\alpha^2 + A_1'\alpha + A_0'
\end{align*}
for constants $A_2',A_1',A_0$.

\begin{figure}[!h]
\centering
\begin{subfigure}{.47\textwidth}

\vspace{0.9cm}

\centering
\includegraphics[width=\textwidth]{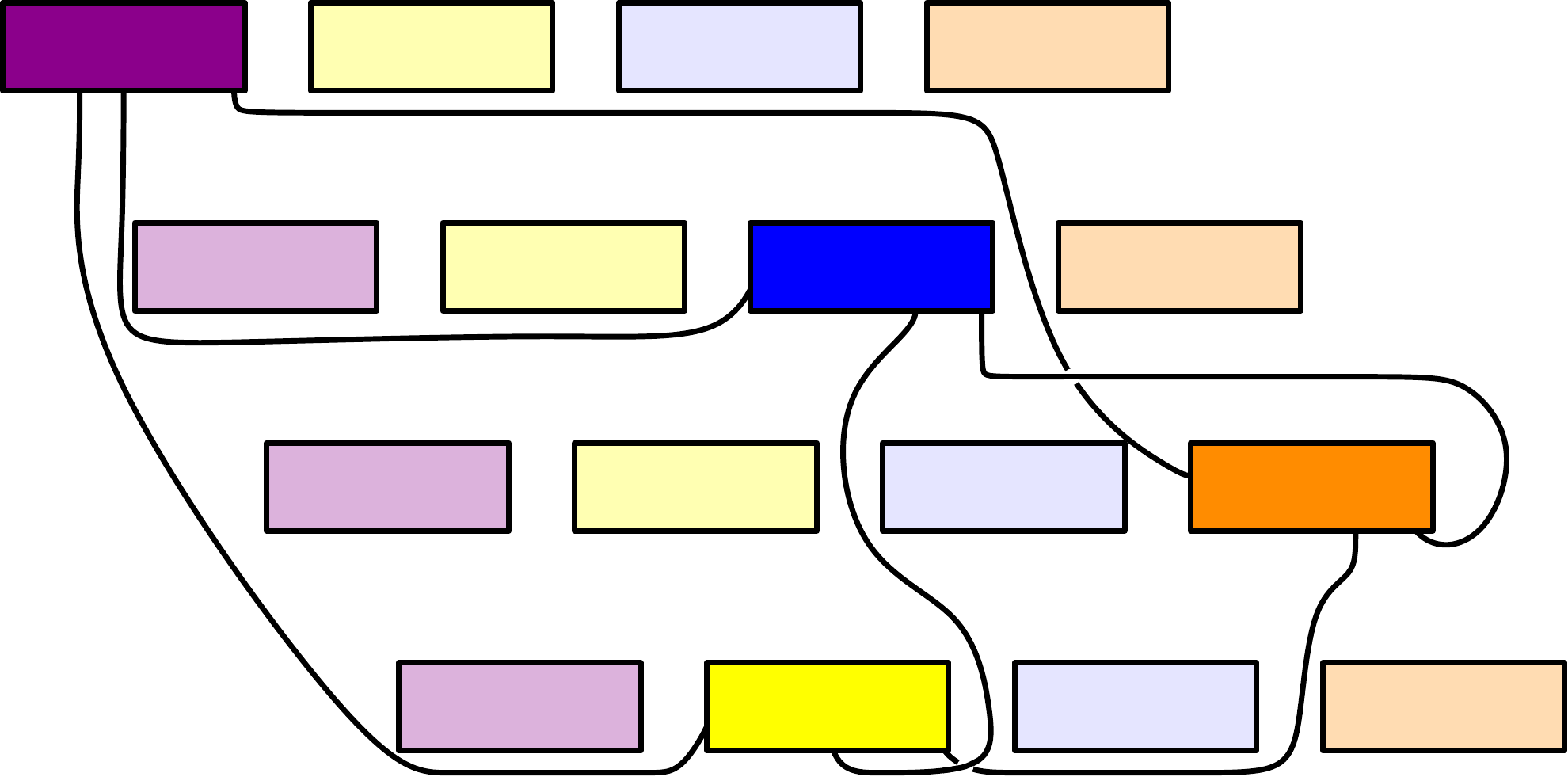}
\end{subfigure}
\begin{subfigure}{.52\textwidth}
\centering
\includegraphics[width=\textwidth]{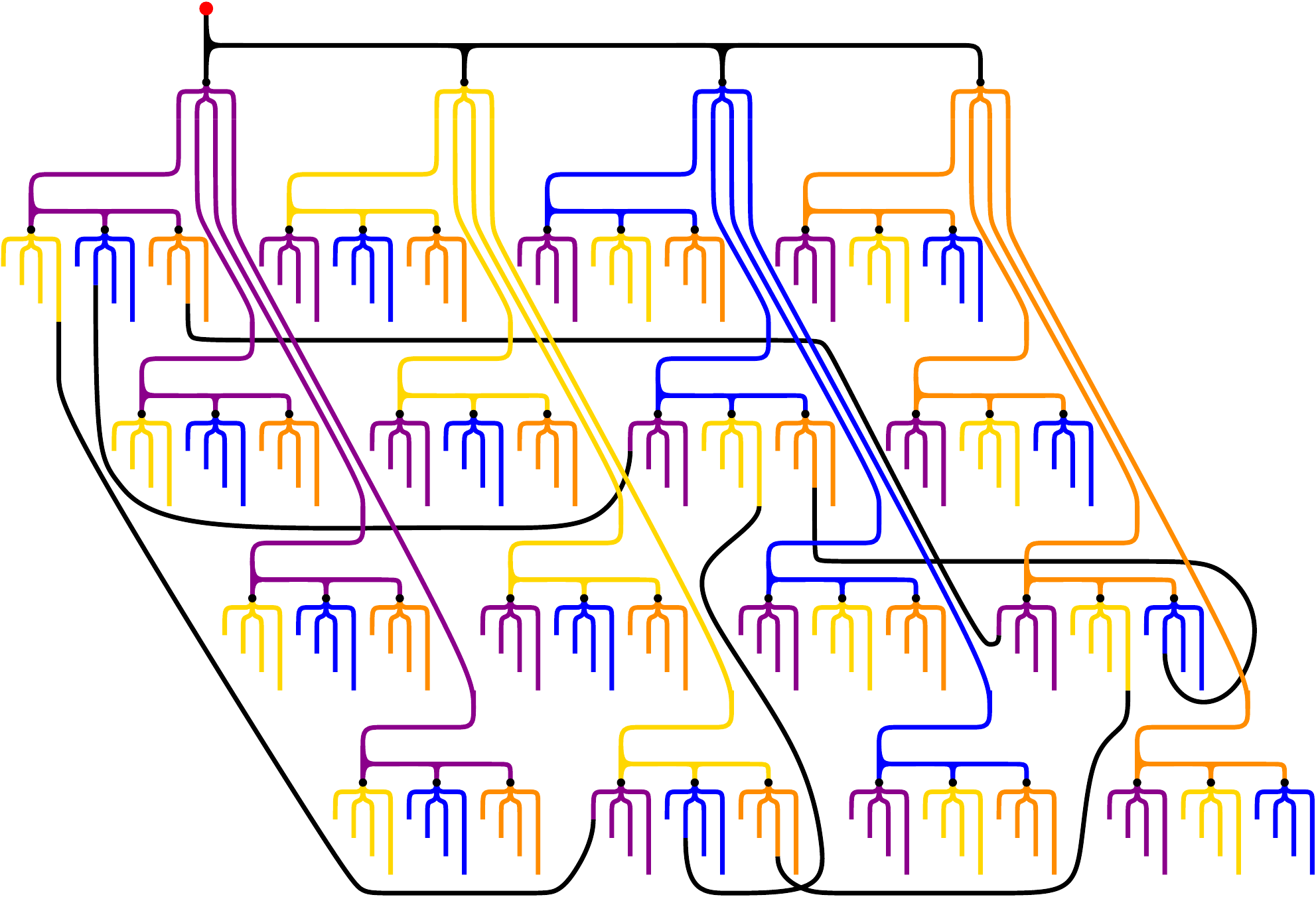}
\end{subfigure}
\caption{\label{FIG: W1 hardness reduction example} A concrete example of the reduction, showing the $4\times 4$-grid graph $G$ from \cref{FIG: grid graph example} together with the space $X(G)$.}
\end{figure}

\begin{prop}\label{PROP: Second reduction works}
Mapping $(G,\alpha)$ to $(X(G),\Chnpic ,k'(\alpha))$ is a parameterized reduction from the $\alpha\times\beta$-\textsc{Clique} problem parameterized by $\alpha$ to the SMBC$_2$ problem parameterized by solution size $k'$.
\end{prop}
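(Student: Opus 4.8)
The plan is to verify the three defining properties of a parameterized reduction. The construction of $X(G)$ from $G$ proceeds layer by layer from a fixed triangulation of each building block, so it runs in time polynomial in the size of $G$; and since $k'(\alpha) = A_2'\alpha^2 + A_1'\alpha + A_0'$ depends only on $\alpha$, the new parameter is bounded by a computable function of the old one. Both points are routine, so the entire content lies in the equivalence of yes-instances: $G$ has an $\alpha$-clique (one vertex per colour) if and only if $(X(G),\Chnpic)$ admits a bounding chain of weight at most $k'(\alpha)$.

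For the forward direction, given a clique $\{(i,c_i)\}_{i=1}^{\alpha}$ I would assemble the canonical chain $\BndA$ by taking, in full, the layer-1 pair of pants, the layer-2 cylinder reaching $y_{i,c_i}$ for each colour $i$, the layer-3 pair of pants at each $y_{i,c_i}$, the layer-4 cylinder from each $y^{i,c_i}_{i'}$ to $z^{i,c_i}_{i',c_{i'}}$, and the layer-5 cylinder $\mathrm{edge}((i,c_i),(i',c_{i'}))$ for each pair $i<i'$. The clique property is exactly what guarantees that the layer-4 and layer-5 pieces exist and point back to the other chosen vertices. A telescoping computation then shows $\partial\BndA=\Chnpic$, with the cuffs cancelling in parity layer by layer ($x_i$ against layer 2, $y_{i,c_i}$ against layer 3, each $y^{i,c_i}_{i'}$ against layer 4, and the $z$-cuffs against the shared edge cylinders), and summing the pieces gives weight exactly $k'(\alpha)$.

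The backward direction carries the real work, and I would base it on an \emph{atomicity} lemma: because each interior edge of a building block borders exactly two of that block's triangles and no triangle outside it, any chain $\BndA$ with $\partial\BndA$ supported on the cuffs must assign the same membership to both triangles at every interior edge, and connectivity of the block then forces it to lie wholly inside or wholly outside $\BndA$. Hence an admissible $\BndA$ is a union of blocks and $\partial\BndA=\Chnpic$ reduces to a parity condition on the cuffs. I would then peel off the layers: the waist forces the layer-1 pants in; each leg $x_i$ forces an odd (hence at least one) number of layer-2 cylinders, i.e. at least one chosen vertex per colour; each chosen $y_{i,j}$ forces its layer-3 pants, each $y^{i,j}_{i'}$ forces a layer-4 cylinder, and each resulting $z^{i,j}_{i',j'}$ can only be cancelled by the shared cylinder $\mathrm{edge}((i,j),(i',j'))$, which exists only if $\{(i,j),(i',j')\}$ is an edge of $G$ and which simultaneously cancels $z^{i',j'}_{i,j}$, forcing $(i',j')$ to be a chosen neighbour of $(i,j)$. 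A counting argument then shows the weight is minimised, and equal to $k'(\alpha)$, precisely when exactly one vertex is chosen per colour and these vertices are pairwise adjacent, i.e. exactly when $\BndA$ encodes an $\alpha$-clique.

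The main obstacle I anticipate is making the backward direction's atomicity lemma and tight counting airtight together: I must ensure the block triangulations genuinely forbid cheaper partial fillings and spurious $2$-cycles (so no bounding chain can shortcut the layer structure), and that the budget $k'(\alpha)$ is tight enough that every admissible chain of weight at most $k'(\alpha)$ is forced into the one-choice-per-colour pattern with mutually adjacent selections. Verifying that $\Chnpic$ is a genuine spherical boundary (a single circle, hence a $1$-sphere) is comparatively easy but worth noting, as it certifies that the reduction lands in the SMBC$_2$ problem and not merely in MBC$_2$.
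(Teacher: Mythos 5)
Your proposal is correct and follows essentially the same route as the paper's proof: the forward direction by explicitly assembling the chain layer by layer, and the backward direction by forcing the layer-1 pants, then one cylinder per leg, the layer-3 pants, the layer-4 cylinders, and finally using the budget $k'(\alpha)$ with the observation that each layer-5 cylinder cancels at most two $z$-boundaries, so exactly $\alpha(\alpha-1)/2$ of them fit and the selected vertices must be pairwise adjacent. Your explicit atomicity lemma (interior edges force whole blocks in or out) and the remark that $\Chnpic$ is genuinely spherical merely formalize steps the paper treats as implicit in the construction.
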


\begin{proof}

The reduction runs in polynomial time, as the size of the output instance is polynomial in the size of the input instance. So we have yet to show that there exists a solution of the $\alpha\times\beta$-\textsc{Clique} problem if and only if there exists a solution to the SMBC$_2$ problem of weight less than or equal to $k'$.

From the observations made during the construction we know that if there is a clique $\{(1,j_1), \cdots ,(\alpha,j_\alpha)\}$ in $G$ then there is a bounding chain $\BndA$ of $\Chnpic$ in $X(G)$ of weight $k'$. Explicitly, the chain $\BndA$ consists of all simplices in 
\begin{itemize}
    \item the pair of pants in layer 1,
    \item for $1\leq i \leq \alpha$ the $\alpha$ cylinders ending in $y_{i,j_i}$ in layer 2 and their corresponding pair of pants in layer 3,
    \item for each of these pair of pants with waist $y_{i,j_i}$ the $\alpha-1$ cylinders ending in $z_{i,j_i}^{i',j_{i'}}$ in layer 4,
    \item and all the connecting cylinders between $z_{i,j_i}^{i',j_{i'}}$ and $z^{i,j_i}_{i',j_{i'}}$ in layer 5.
\end{itemize}

For the converse implication, we let $\BndA$ be a bounding chain of $\Chnpic$ in $X(G)$ of size $k'$, and look at which simplices must be part of $\BndA$ whenever $\partial\BndA = \Chnpic$. For $\Chnpic$ to be the boundary, the chain $\BndA$ must contain all the simplices in the pair of pants in layer 1. This pair of pants has an extra boundary consisting of the circles $x_i$ for $1\leq i \leq \alpha$, which needs to be canceled by some other simplices in $\BndA$. To cancel $x_i$ the bounding chain $\BndA$ needs to contain all simplices of an odd number of cylinders intersecting $x_i$ in layer 2. In particular it needs to contain at least one such cylinder, giving a new boundary $y_{i,j}$ which can only be canceled by adding the corresponding pair of pants in layer 3. This gives $\alpha-1$ new boundaries $y_{i,j}^{i'}$ for $i'\neq i$, each of which we again need cancel by adding the simplices in at least one cylinder in layer 4 for each $i'$. After this process we are left with at least $\alpha(\alpha-1)$ extra boundaries $z_{i,j}^{i',j'}$ that still have to be canceled by simplices in the chain $\BndA$, and even more if we pick more than one cylinder in layer 2 and 4. 

Each $z_{i,j}^{i',j'}$ has to be removed by a cylinder in layer 5, and a cylinder can remove at most two such boundaries. Since we know that the maximum of the sum of weights of simplices in $\BndA$ is $k'$, and by subtracting the weights of the simplices we already know is part of it, we see that there is only room for at most $\alpha(\alpha-1)/2$ more cylinders in $W$. This is exactly the minimum amount to cancel the rest of the boundaries, so we know they must be part of the bounding chain $\BndA$ and that $\BndA$ does not contain any other simplices. Thus we conclude that only one cylinder is picked for each $x_i$, the one corresponding to some vertex $(i,j_i)$. The collection of these vertices $\{(i,j(i)) | 1 \leq i \leq \alpha\}$ forms a clique as there must be cylinders in $W$ and therefore in the space $X(G)$ connecting every pair of boundaries $(z_{i,j(i)}^{i', j(i')}, z^{i,j(i)}_{i', j(i')})$. These cylinders are present in $X(G)$ if and only if there are edges in $G$ going between the vertex $j(i)$ of color $i$ and vertex $j'(i')$ of color $i'$ in $G$, so we have our result.

\end{proof}

This finishes the proof of \cref{lem: smbcd W1}, and therefore of \cref{thm:solution-hardness}.

\begin{figure}[!h]
\includegraphics[width=\textwidth]{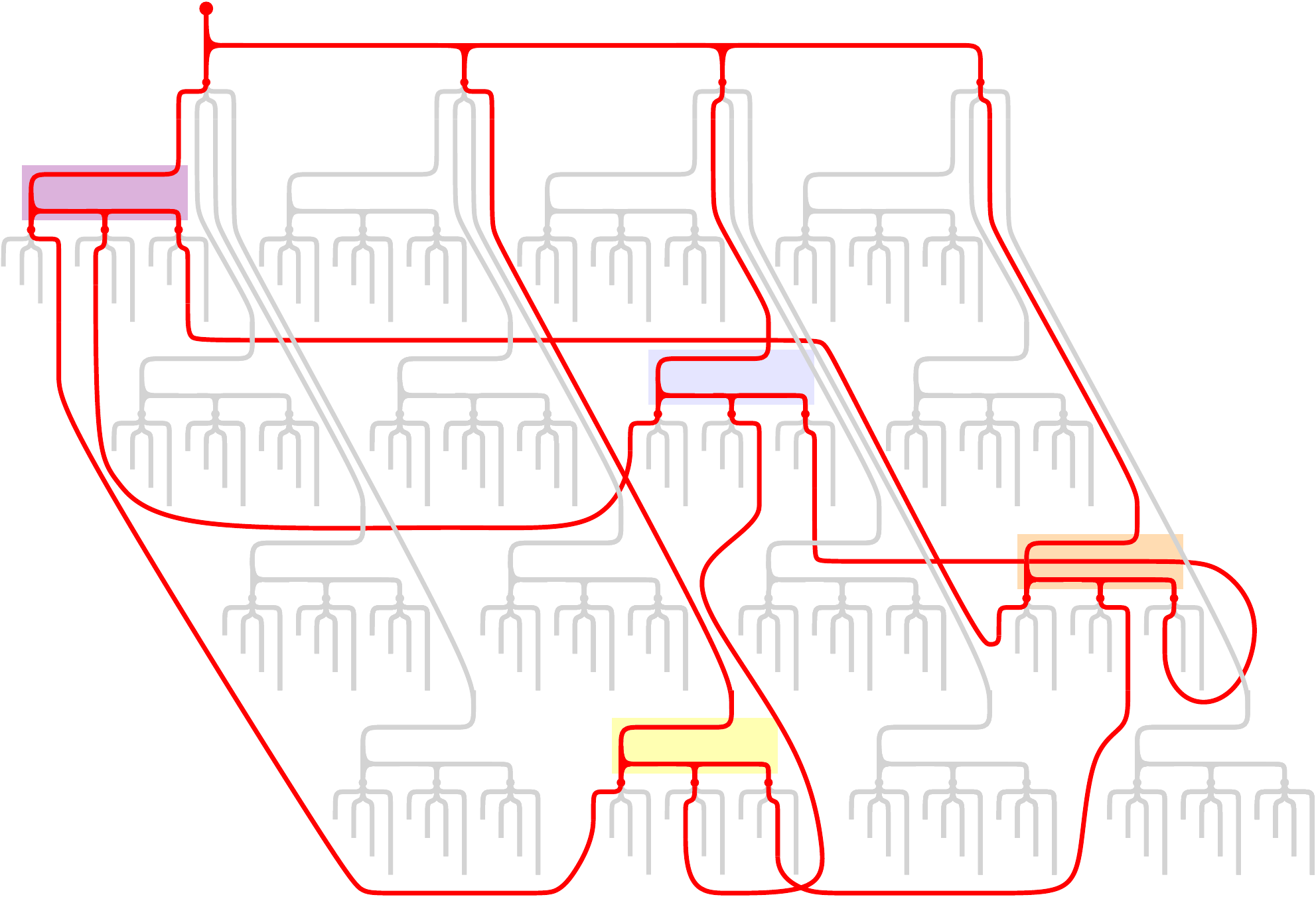}
\caption{\label{FIG: concrete example} The minimum bounding chain in the space $X(G)$ from \cref{FIG: W1 hardness reduction example} where the input boundary is the $1$-simplices in the topmost circle/waist. Note that it picks out every cylinder in layer 5 corresponding to edges in the clique.}
\end{figure}

\subsubsection{Solution Size and Coface Degree}\label{sec:doubleparameter}

We saw in \cref{sec: cofacesection} that parameterizing the MBC$_d$ problem with respect to the coface degree still yields an NP-complete problem. In \cref{sec:solutionsize} we saw that restricting the solution size does help a bit, but the problem is still W[1]-hard. We also saw in \cref{sec:algorithm} that the problem is polynomial when parameterized by both solution size and coface degree. In this section we give an ETH based hardness result giving a lower bound on runtime when considering both of these parameters at once. 

Before we turn to this theorem we need a lemma from the parameterized complexity theory ``folklore''. For completeness we have included a proof based on a sketch obtained in private correspondence with Daniel Lokshtanov.

\begin{lemma} \label{LEMMA: Daniel L}
The $\alpha\times\beta$-\textsc{Clique} problem can not be solved in $2^{o(\alpha\log(\beta))}$-time, assuming the ETH. 
\end{lemma}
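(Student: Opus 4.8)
The plan is to prove the contrapositive: a hypothetical algorithm solving the $\alpha\times\beta$-\textsc{Clique} problem in $2^{o(\alpha\log\beta)}$-time would break the ETH. I would start from $3$-SAT on $n$ variables and $m$ clauses, and first invoke the Sparsification Lemma so that $m=O(n)$ while the ETH still forbids a $2^{o(n)}$-time algorithm. Because the $\alpha$-clique condition is fundamentally \emph{pairwise}, whereas a $3$-SAT clause couples up to three variables, I would not reduce from $3$-SAT directly but route through a \emph{binary} constraint problem. The natural choice is \textsc{$3$-Coloring}: the standard constant-size gadget reduction turns the formula into a graph $H$ with $V=O(n+m)=O(n)$ vertices that is $3$-colorable iff the formula is satisfiable, so under the ETH \textsc{$3$-Coloring} admits no $2^{o(V)}$-time algorithm. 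Crucially, every constraint of \textsc{$3$-Coloring} is an edge, i.e. a binary ``$\neq$'' constraint on two vertices.

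Next I would encode \textsc{$3$-Coloring} as an $\alpha\times\beta$-\textsc{Clique} instance by \emph{grouping} vertices. Partition $V(H)$ into $\alpha$ blocks of equal size $s=\lceil V/\alpha\rceil$ (padding if necessary). The $i$-th block becomes column $i$, and its $\beta:=3^{s}$ rows are the colorings of that block, so $(i,c)$ is the vertex ``block $i$ receives local coloring $c$''. I put an edge between $(i,c)$ and $(i',c')$ with $i\neq i'$ exactly when $c$ and $c'$ are each proper on the edges inside their own blocks and together violate no edge of $H$ running between blocks $i$ and $i'$; rows whose local coloring is improper are left isolated. Since every edge of $H$ lies within one block or between exactly two blocks, all constraints are verified either at vertex-creation time or on a single grid edge. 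Hence a set of $\alpha$ mutually adjacent vertices, one per column, corresponds precisely to a proper $3$-coloring of $H$, and conversely.

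Finally I would balance the parameters. Choosing $\alpha=\lceil V/\log V\rceil$ gives block size $s=\Theta(\log V)$ and $\beta=3^{\Theta(\log V)}=V^{\Theta(1)}$, so the grid graph has $\alpha\beta=\mathrm{poly}(V)$ vertices and is built in polynomial time, while $\alpha\log\beta=\Theta(V)$. Thus a $2^{o(\alpha\log\beta)}$-time algorithm for $\alpha\times\beta$-\textsc{Clique} would decide $3$-colorability of $H$, and therefore satisfiability of the formula, in $2^{o(V)}+\mathrm{poly}(V)=2^{o(n)}$-time, contradicting the ETH.

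I expect the main obstacle to be exactly this parameter balancing together with the choice of source problem. One must pick the number of blocks $\alpha$ large enough that the $\beta=3^{V/\alpha}$ rows — and hence the entire construction — stay polynomial, yet keep $\alpha\log\beta=\Theta(V)$ so that the assumed running time collapses to $2^{o(n)}$. The accompanying delicacy is that clique only expresses binary constraints, which is precisely why the detour through \textsc{$3$-Coloring} (rather than a direct reduction from $3$-SAT, whose ternary clauses would require an extra gadget or auxiliary columns) is what makes the per-edge consistency check sound.
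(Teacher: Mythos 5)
Your proposal is correct and takes essentially the same route as the paper's proof: reduce from \textsc{$3$-Coloring} by partitioning the vertices into $\alpha$ blocks, letting the rows of column $i$ enumerate the (at most $3^{N/\alpha}$) local colorings of block $i$, adding edges between pairwise-compatible local colorings so that $\alpha$-cliques correspond exactly to proper $3$-colorings, whence a $2^{o(\alpha\log\beta)}$-time algorithm yields a $2^{o(N)}$-time algorithm for \textsc{$3$-Coloring} and contradicts the ETH. The only differences are presentational: you make the ETH-hardness of \textsc{$3$-Coloring} explicit via sparsification and the standard gadget reduction, and you fix $\alpha=\lceil V/\log V\rceil$ so that $\beta$ is polynomial and the construction is visibly polynomial-time, whereas the paper leaves $\alpha$ general and cites the known hardness of \textsc{$3$-Coloring}.
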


\begin{proof}
We make a small alteration to the $k\times k$-clique result form \cite[Thm. 14.12]{PA}, which gives a reduction from the $3$\textsc{-Coloring} problem. 

Let $G$ be a graph with $|V(G)|=N$ vertices, and let $V_1,\dots,V_\alpha$ be a cover of $V(G)$ where $|V_i|\leq N/\alpha$ for every $1\leq i\leq \alpha$. There is at most $3^{|V_i|}\leq 3^{N/\alpha}$ 3-colorings of the full subgraph of $G$ with vertices $V_i$, so let $\beta=3^{N/\alpha}$ and let $\gamma^1_i,\cdots,\gamma^\beta_i$ be all such colorings (possibly duplicated) of $V_i$. Define an $\alpha\times\beta$-grid graph $H$ whose vertices $(i,j)$ and $(i',j')$ have an edge between them whenever $i\neq i'$ and $\gamma^i_j\cup\gamma^{i'}_{j'}$ is a valid coloring of the full subgraph of $G$ with vertices $V_i\cup V_{i'}$. 

A 3-coloring $\gamma$ of $G$ corresponds to a clique $\{(i,j)\,|\,\gamma^i_j=\gamma|_{V_i}\}$ in $H$, and conversely a clique $\{(i,j_i)\}$ in $H$ corresponds to a coloring $\gamma=\bigcup_{i=1}^\alpha \gamma^i_{j_i}$ of $G$. If we can solve the $\alpha\times\beta$-\textsc{Clique} problem in $2^{o(\alpha\log(\beta))}$-time, then we can solve the $3$\textsc{-Coloring} problem in 
\begin{equation}
    2^{o(\alpha\log(\beta))} = 2^{o(\alpha\log(3^{N/\alpha}))}= 2^{o(N)}\text{-time}
\end{equation}
which contradicts the ETH.
\end{proof}

\begin{lemma}\label{lem: eth-hardness double}
Assuming the ETH, the unweighted SMBC$_2$ problem can not be solved in $2^{o(\sqrt{k}\log(c))}\operatorname{poly}(n)$-time. 
\end{lemma}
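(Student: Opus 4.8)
The plan is to combine the parameterized reduction of \cref{PROP: Second reduction works} with the ETH lower bound for the grid clique problem from \cref{LEMMA: Daniel L}. That reduction already sends a grid graph $G$ with parameter $\alpha$ to an SMBC$_2$ instance $(X(G),\Chnpic,k'(\alpha))$ in polynomial time, so the only genuinely new work is to express the two parameters of this output instance that appear in the claimed running time — the solution size $k'$ and the coface degree $c$ — in terms of $\alpha$ and $\beta$, and then to verify that the exponent $\sqrt{k'}\log c$ agrees with $\alpha\log\beta$ up to a constant factor. The contradiction will then be immediate.

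First I would handle the solution size, which is easy: from the explicit formula $k'(\alpha) = A_2'\alpha^2 + A_1'\alpha + A_0'$ we read off $k' = \Theta(\alpha^2)$, and hence $\sqrt{k'} = \Theta(\alpha)$; in particular $\sqrt{k'} = O(\alpha)$, which is the direction we need.

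The main obstacle is the coface-degree count, since it depends on the triangulation conventions for the building blocks of $X(G)$. I would inspect where the largest number of $2$-simplices meet a common edge. Away from the gluing circles, $X(G)$ is a union of surface pieces, so interior edges lie in exactly two triangles. The only places where many triangles meet an edge are the circles along which several cylinders are glued to the same boundary component: in layer $2$, $\beta$ cylinders are attached to each leg $x_i$, and in layer $4$, $\beta$ cylinders are attached to each $y^{i,j}_{i'}$. Each edge of such a circle is a boundary edge of one triangle of the incident pair of pants together with one triangle from each of the $\beta$ cylinders, giving coface degree $\beta+1$. After checking that the one-to-one gluings of layers $3$ and $5$ contribute only degree $2$ and so do not exceed this, I conclude $c = \Theta(\beta)$, and in particular $c = O(\beta)$, so that $\log c = O(\log\beta)$ (taking $\beta \geq 2$ without loss of generality).

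Finally I would assemble the contradiction. Suppose the unweighted SMBC$_2$ problem could be solved in $2^{o(\sqrt{k}\log c)}\operatorname{poly}(n)$ time. Running the reduction (polynomial, with $n = \operatorname{poly}(\alpha\beta)$ since $X(G)$ has $\operatorname{poly}(\alpha\beta)$ simplices) and then this algorithm on $(X(G),\Chnpic,k'(\alpha))$ decides the $\alpha\times\beta$-\textsc{Clique} instance $(G,\alpha)$. By the two bounds above, $\sqrt{k'}\log c = O(\alpha\log\beta)$, so the total running time is $2^{o(\alpha\log\beta)}\operatorname{poly}(\alpha\beta)$. Since $\operatorname{poly}(\alpha\beta) = 2^{O(\log\alpha + \log\beta)} = 2^{o(\alpha\log\beta)}$ as $\alpha \to \infty$, the polynomial factor is absorbed, yielding an algorithm that solves $\alpha\times\beta$-\textsc{Clique} in $2^{o(\alpha\log\beta)}$ time and contradicting \cref{LEMMA: Daniel L}. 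This proves the lemma.
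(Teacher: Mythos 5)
Your proof is correct and takes essentially the same route as the paper's: combine \cref{PROP: Second reduction works} with \cref{LEMMA: Daniel L}, use $\sqrt{k'(\alpha)}=\Theta(\alpha)$, absorb the polynomial factor, and derive a $2^{o(\alpha\log\beta)}$-time algorithm for $\alpha\times\beta$-\textsc{Clique}, contradicting the ETH. The only difference is that you explicitly verify that the coface degree of $X(G)$ is $\Theta(\beta)$ (via the $\beta$ cylinders glued along the circles in layers 2 and 4), a fact the paper's proof uses implicitly by simply substituting $\beta$ for $c$ in the runtime bound, so your write-up is, if anything, slightly more complete.
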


\begin{proof}
This is a consequence of \cref{LEMMA: Daniel L} combined with \cref{PROP: Second reduction works}. 

The proof involves finding a contradiction with the ETH. To this end, assume that there exists some parameterized algorithm $A$ solving the unweighted SMBC$_2$ problem in $2^{o(\sqrt{k}\log(c))}\operatorname{poly}(n)$-time. Given some problem instance $(G,\alpha,\beta)$ of the $\alpha\times \beta$-\textsc{Clique} problem, we can reduce it in polynomial time to an instance $(X'(G),k'(\alpha),\beta)$ of the SMBC$_2$ problem. We can now use algorithm $A$ to find a solution to this instance in 
\begin{align*}
    2^{o(\sqrt{k'(\alpha)}\log(\beta))}\operatorname{poly}(|X'(G)|) &= 2^{o(\sqrt{A_2'\alpha^2 + A_1'\alpha + A_0'} \log(\beta))} \operatorname{poly}( \operatorname{poly} (\alpha, \beta) )\\
    &=2^{o(\alpha\log(\beta))}\operatorname{poly}(\alpha, \beta) \\
    &= 2^{o(\alpha\log(\beta))}\text{-time,}
\end{align*}
In other words, we have an algorithm solving the $\alpha\times \beta$ \textsc{Clique} problem in $2^{o(\alpha\log(\beta))}$-time, which contradicts the ETH.
\end{proof}
The general case MBC$_2$ has to be at least as difficult as the special case SMBC$_2$, and by taking the suspension to higher dimension we have \cref{thm:solution-coface-hardness}.

\begin{remark}
From \cref{thm:solution-coface-hardness} we know that the MBC$_d$ problem cannot be solved in $2^{o(\sqrt{k}\log(c))}\operatorname{poly}(n)$-time (assuming the ETH) and from \cref{thm:solution-coface-algorithm} we know that it can be solved in $2^{\mathcal{O}(k\log(c))}\operatorname{poly}(n)$-time. This leaves an obvious gap, and it is not clear how to bridge it.

There might be a better parameterized algorithm than the one presented in this paper. In particular, we have the ETH-tight $2^{\mathcal{O}(k)}n^{\mathcal{O}(\sqrt{k})}$-time algorithm from Theorem 2 of B. Burton et al. \cite{Burton2019} that can recognize if a simplicial complex of size $n$ contains a $2$-sphere of size (at most) $k$ as a sub-complex. This is interesting, because this problem is similar to the MBC$_2$ problem while also having a runtime close to what we are aiming for.

The square root emerges as a consequence of the fact that the treewidth of the underlying graph of any triangulation of a sphere using $k$ simplices is in $\mathcal{O}(\sqrt{k})$. While this is also the case for surfaces of fixed genus it is not true for $2$-chains in general. It seems therefore unlikely that a similar technique can be used on the MBC$_d$ problem.

Finally, there might be some other reduction that gives a better (i.e. higher) lower bound. Note that because of the algorithm by B. Burton et al., such a reduction needs to have certain properties, assuming the ETH is true. In particular, the optimal solutions to the instances in the image of the reduction can not all be surfaces of bounded genus.
\end{remark}

\section{Treewidth}\label{sec:treewidth}

We study the parameterized complexity of the MBC$_d$ problem parameterized by $\tau$, the treewidth of the $d$'th level of the Hasse diagram, which we describe later. This section contains two main results: 
\begin{enumerate}
\itemsep0pt
    \item The MBC$_d$ problem can be solved in $\mathcal{O}(2^{2\tau} \tau^2 n)$-time when parameterized by $\tau$.
    \item This algorithm is ETH-tight for $d\geq 2$ (no $2^{o(\tau)}\operatorname{poly}(n)$-time algorithm exists unless the ETH is false).
\end{enumerate}

These results share many similarities with the main results of \cite{erlend_homloc} concerning the related \textsc{Homology Localization} problem, and several details are the same. For this reason, we leave parts of the proofs to that paper, where the analogous proofs are given in great detail.

\subsection{Tree Decompositions} \label{sec: treedecomposition of spaces}

A \emph{tree} $T$ is a connected graph with no cycles (i.e. $H_1(T)$ is trivial). A \emph{rooted tree} $(T,r)$ is a tree $T$ together with a vertex $r\in V(T)$ called the \emph{root}. A vertex $s$ in $(T,r)$ is said to be the \emph{descendant} of another vertex $t$ if $t$ appears on the (unique) path from $s$ to $r$. If $t$ is the first vertex on this path, then $t$ is the \emph{parent} of $s$ and $s$ is a \emph{child} of $t$. Vertices with no children are called \emph{leaves}.

Intuitively, the treewidth is a measure of how close a given graph is to being a tree (see \cref{FIG: TD of graphs}). Many problems become solvable in FPT-time when they are parameterized by treewidth in the same way that many NP-complete problems become solvable in polynomial time when we restrict the input graphs to be trees. We define treewidth in terms of tree decompositions of graphs in this paper. When we design our algorithm, which is a dynamic programming routine on a tree decomposition of a graph.

\begin{definition}[Nice Tree Decomposition]
A \emph{tree decomposition} of a graph $G$ is a rooted tree $(T,r)$ together with a function $X_-:V(T)\to\mathcal{P}(V(G))$ mapping vertices $t$ in $T$ to subsets $X_t\subseteq V(G)$ called \emph{bags}. This map must have the following properties:

\begin{itemize}
\itemsep0pt
    \item For all vertices $v$ in $G$ there exists a vertex $t$ in $T$ such that $v\in X_t$.
    \item For all edges $vu$ in $G$ there exists a vertex $t$ in $T$ such that $u, v\in X_t$.
    \item If $u\in X_t\cap X_{t'}$ for vertices $t,t'$ in $T$ then $u\in X_{s}$ for every vertex $s$ on the path in $T$ from $t$ to $t'$. 
\end{itemize}
A tree decomposition is said to be \emph{nice} if $X_r = \emptyset$ and every bag $X_t$ is one of the following:
\begin{itemize}
\itemsep0pt
    \item A \emph{leaf bag} where $t$ is a leaf and $X_t = \emptyset$.
    \item An \emph{introduce bag} where $t$ has a child $s$, and $X_t = X_s\sqcup \{v\}$ for a vertex $v$ in $G$.
    \item A \emph{forget bag} where $t$ has a child $s$, and $X_t\sqcup \{v\} = X_s$  for a vertex $v$ in $G$.
    \item A \emph{join bag} where $t$ has two children, $s$ and $s'$, and $X_t = X_s = X_{s'}$.
\end{itemize}
\end{definition}

\begin{figure}[!h]
\centering
\includegraphics[width=0.45\textwidth]{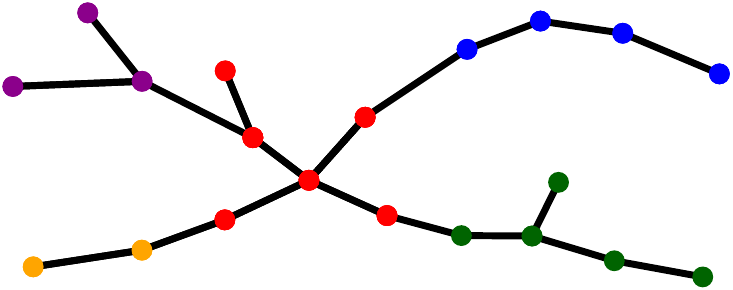}
\hspace{0.5cm}
\includegraphics[width=0.45\textwidth]{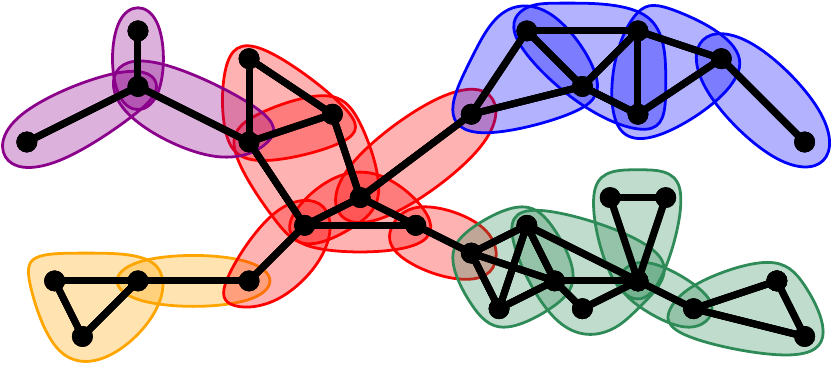}
\caption{\label{FIG: TD of graphs} A figure illustrating a tree decomposition. The figure to the left shows the tree and the figure to the right shows a graph covered by sets (i.e. bags) of different colours. The nodes in the figure to the left maps to the sets covering the graph in such a way that nodes are mapped to bags of similar colour and relative position. The width of this decomposition (and also the treewidth of the graph), is $3 = 4-1$.}
\end{figure}

\begin{definition}[Treewidth] 
The \emph{width} of a tree decomposition is the size of the largest bag it contains minus one. The \emph{treewidth} of a graph is the smallest width of all the possible tree decompositions of that graph.
\end{definition}

It is well known that every tree decomposition of a graph $G$ can be transformed into a nice tree decomposition of $G$ without increasing the width and while keeping the number of bags it contains linear in $|V(G)|$.

\subsection{Graph Maximum Likelihood Decoding}

We present an FPT-algorithm for the \textsc{MBC}$_d$ problem parameterized by the treewidth of the $d$'th level of the Hasse diagram of the simplicial complex. In fact, the algorithm we describe is more general, as it also solves the \textsc{MLD} problem in FPT-time where we use the treewidth of the bipartite graph $\HasseGraph(A)$ as a parameter.

\begin{definition}\label{Def: Hassegraph of a matrix}
We can represent any matrix $A$ with coefficients in $\Zto$ as a bipartite graph $\HasseGraph(A)$ where the rows $\rho_i$ and columns $\sigma_j$ of $A$ are vertices and where the edges are pairs of rows and columns $(\rho_i,\sigma_j)$ such that $A_{i,j}= 1$.
\end{definition}

Another way of defining $\HasseGraph(A)$ is to say that it is the bipartite graph having $A$ as its biadjacency matrix. See \cref{FIG:graphs_in_matrices} for a small example of what $\HasseGraph(A)$ may look like. 
\begin{figure}[!h]
\centering
\includegraphics[width=0.8\textwidth]{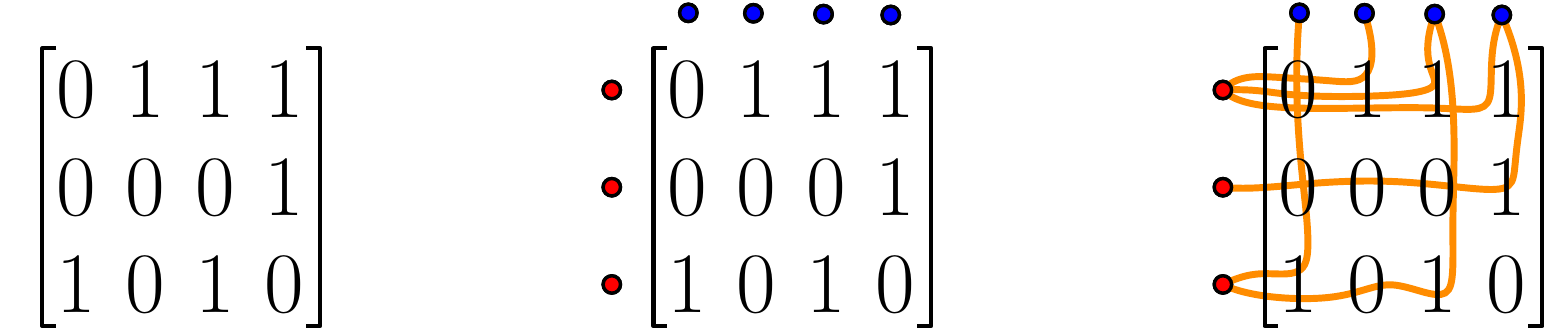}
\caption{A matrix $A$ with coefficients in $\Zto$, and the graph $\HasseGraph(A)$ whose vertices are rows (red) and columns (blue), and there are edges (orange) between a row $i$ and a column $j$ if the element $A_{i,j}$ is $1$.  
\label{FIG:graphs_in_matrices}}
\end{figure}

We can now reformulate the \textsc{MLD} problem as a problem on bipartite graphs. Let $G$ be a graph and let $\partial \sigma$ denote the set of neighbors of a vertex $\sigma$ in $G$. The \emph{boundary} $\partial \BndA$ of a subset of vertices $\BndA\subseteq V(G)$ is the symmetric difference of the neighbors of all vertices in $\BndA$, i.e. $\partial \BndA = \triangle_{\sigma\in \BndA}\partial \sigma$.

\begin{definition}\textsc{Graph Maximum-Likelihood Decoding (GMLD)}:\newline
INPUT: A bipartite graph $G$ with vertex set $(R,C)$, a set of weights $\{w_{\sigma} | {\sigma\in C}\}$ and a subset $\ChnA \subseteq R$. \newline
OUTPUT: A subset $\BndA\subseteq C$ where $\partial \BndA = \ChnA$. \newline
MINIMIZE: The weight $\cosst(\BndA)=\sum_{\sigma\in \BndA} w_{\sigma}$.
\end{definition}

\begin{thm}
The MLD problem can be solved in $\mathcal{O}(2^{2\tau}\tau^2n)$-time when parameterized by the width $\tau$ of a (nice) tree decomposition of $\HasseGraph(A)$, which we assume is given as part of the input. 
\end{thm}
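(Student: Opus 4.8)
The plan is to design a dynamic programming algorithm that processes a nice tree decomposition $(T,r)$ of $\HasseGraph(A)$ from the leaves up to the root, maintaining at each bag a table of partial solutions indexed by the local state of the bag. Recall that $\HasseGraph(A)$ is bipartite with vertex set $(R,C)$, where $R$ indexes the rows (the $(d-1)$-simplices, or constraints $\ChnA$) and $C$ indexes the columns (the $d$-simplices, or chain-elements $\BndA$). A partial solution restricted to a subtree is a choice of which column-vertices seen so far to include, and the crucial observation is that the only interaction between the ``inside'' of a subtree and the ``outside'' occurs through the vertices currently in the bag $X_t$. For a row-vertex $\rho \in X_t \cap R$, what matters is the current parity of $\partial$ restricted to the columns already decided, since later columns incident to $\rho$ may still flip that parity; for a column-vertex $\sigma \in X_t \cap C$, what matters is whether $\sigma$ has been selected into the partial chain. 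I would therefore index each table entry by a function $s\colon X_t \to \Zto$ recording, for each row-vertex, the accumulated parity contribution from decided columns, and for each column-vertex, its in/out membership. Each table entry stores the minimum weight over all partial selections consistent with that signature. This gives $2^{|X_t|} \le 2^{\tau+1}$ entries per bag.

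The key steps, in order, are the recurrences for the four bag types. First I would handle the \textbf{leaf bags}, where $X_t = \emptyset$ and the unique signature (the empty function) has weight $0$. For an \textbf{introduce bag} adding a vertex $v$ to $X_s$: if $v$ is a row-vertex, it is introduced with parity $0$ (no incident column decided yet, by the properties of a tree decomposition) and signatures are copied over, extended by $s(v)=0$; if $v$ is a column-vertex $\sigma$, we branch on whether $\sigma$ is selected, and when it is selected we must flip the parity-bit $s(\rho)$ for every neighbor $\rho$ of $\sigma$ lying in the bag and add $w_\sigma$ to the weight. For a \textbf{forget bag} removing a vertex $v$: if $v$ is a column-vertex we simply take the minimum over its two possible membership values; if $v$ is a row-vertex $\rho$, by the connectivity property all columns incident to $\rho$ have now been introduced and decided, so we may only retain signatures in which $s(\rho)$ equals the prescribed target bit $[\rho \in \ChnA]$ and discard the rest. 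For a \textbf{join bag} with children $s,s'$ sharing the bag $X_t$: we combine two child-signatures whose column-membership bits agree, adding their weights but \emph{subtracting} double-counted column weights, and XOR-ing the row parities so that each row's total parity is the sum of the contributions from both subtrees. The final answer is read off at the root bag $X_r = \emptyset$.

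Once the recurrences are established, correctness follows by induction on the tree: each table entry at $t$ equals the optimal partial weight over selections in the subtree rooted at $t$ consistent with the signature, and the tree-decomposition axioms guarantee that no constraint-vertex is ``forgotten'' before all of its incident column-vertices have been accounted for, and that no column-vertex's weight is counted twice across a join. For the runtime, each introduce and forget bag does $\mathcal{O}(2^\tau \cdot \tau)$ work (the $\tau$ factor coming from updating neighbor parities), while each join bag is the bottleneck: naively it costs $\mathcal{O}(2^{2\tau})$ per bag to match pairs of signatures, and with the $\tau^2$ overhead for bookkeeping and the $\mathcal{O}(n)$ bags of the nice decomposition this yields the claimed $\mathcal{O}(2^{2\tau}\tau^2 n)$ bound. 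The main obstacle I anticipate is the join step: getting the accounting exactly right so that shared column weights are not double-counted while row parities are correctly combined, and confirming that the join can be evaluated within $2^{2\tau}$ rather than a worse bound, is the delicate part of both the correctness argument and the complexity analysis.
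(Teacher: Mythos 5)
Your overall architecture is the same as the paper's: your signature (column membership bits plus row parity bits) is exactly the paper's pair $(Q_t,P_t)$ up to XOR-ing with $U\cap X_t$, the tables have $2^{|X_t|}$ entries, and the join bag is the $2^{2\tau}$ bottleneck. However, there is a genuine error in your introduce-row rule. You claim that when a row-vertex $\rho$ is introduced ``no incident column [has been] decided yet, by the properties of a tree decomposition,'' and accordingly initialize $s(\rho)=0$. No property of tree decompositions guarantees this: a neighboring column $\sigma$ may be introduced lower in the tree and still sit in the bag when $\rho$ arrives. Concretely, take the graph with a single edge $(\rho,\sigma)$, target $U=\{\rho\}$, and the nice path decomposition that introduces $\sigma$, then introduces $\rho$, then forgets $\rho$, then forgets $\sigma$. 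Your DP gives every signature $s(\rho)=0$, so at the forget-$\rho$ step no signature satisfies $s(\rho)=[\rho\in U]=1$ and the algorithm reports infeasibility, even though $W=\{\sigma\}$ is a valid optimal solution. The fix is to initialize $s(\rho)$ to the parity of the \emph{selected columns already in the bag} that are adjacent to $\rho$; this is precisely the role of the term $\partial(Q_t)$ in the paper's introduce-row rule (``if $\rho\in P_t\triangle\partial(Q_t)\triangle U$ then store infinity'').

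The same omission resurfaces at your join bag. A selected column $\sigma\in Q_t$ is decided in both children, so (once the introduce rule is repaired) its contribution to a bag row $\rho$ is recorded in both child parities, and plain XOR-ing cancels it. You correctly subtract the double-counted weights of shared columns, but the parities need the analogous correction: the combined parity must be $s_{\mathrm{left}}(\rho)\oplus s_{\mathrm{right}}(\rho)\oplus[\rho\in\partial(Q_t)]$, which is the term $\partial Q_t\cap X_t$ (together with $U\cap X_t$ in the paper's target-relative convention) appearing in the paper's join formula $P_t = P_s \triangle P_{s'} \triangle (\partial Q_t\cap X_t)\triangle (U\cap X_t)$. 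You flagged the join as the delicate point but left this correction out, so as stated both the introduce-row and join recurrences compute wrong tables. With these two corrections your scheme coincides with the paper's algorithm; note also that the paper sidesteps your weight double-counting issue by a different convention, charging $w_\sigma$ only when $\sigma$ is forgotten rather than when it is introduced.
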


\cref{thm:treewidth-algorithm} as an almost immediate consequence of this result. If we are not given a nice tree decomposition as part of the input, we first compute one. There is an algorithm running in $2^{\mathcal{O}(\tau)}n$-time that finds a tree decomposition whose width \(\tau'\) is a constant factor approximation of the true treewidth \(\tau\) \cite{bodlaender2016c}.

To see how solving the GMLD problem can be used to solve the MBC$_d$ problem, let $(K,\ChnA)$ be an instance of the MBC$_d$ problem consisting of a simplicial complex $K$ and a boundary $\ChnA$. To reduce to the MLD problem, set $A$ to be the matrix associated to the linear transformation $\partial_d\colon C_d(K)\to C_{d-1}(K)$, where we use the $d$-simplices as a basis for $C_d(K)$ and the $(d-1)$-simplices as a basis for $C_{d-1}(K)$). The weight of each column of the matrix is set to the weight of the simplex it corresponds to and the target vector be the sum of the $(d-1)$-simplices in $\BndA$. Solving this MLD problem is then precisely the same as solving the original MBC$_d$ problem. 

\begin{remark}
We have used the notation $\HasseGraph(A)$ because if $A$ is the matrix associated to a boundary map from $d$-dimensional chains, then $\HasseGraph(A)$ is the same graph as the $d$'th level of the Hasse diagram of the simplicial complex. This graph was used as a basis for one of the FPT-algorithms \cite[Theorem 5.6]{erlend_homloc} and it is has $d$ and $d-1$ simplices as vertices and face-coface pairs $(\rho,\sigma)$ as edges, see \cref{FIG:graphs_in_spaces}. It is the treewidth of this graph we talk about when we talk about the treewidth of a simplicial complex in this paper.
\end{remark}

\begin{figure}[!h]
\centering
\includegraphics[width=\textwidth]{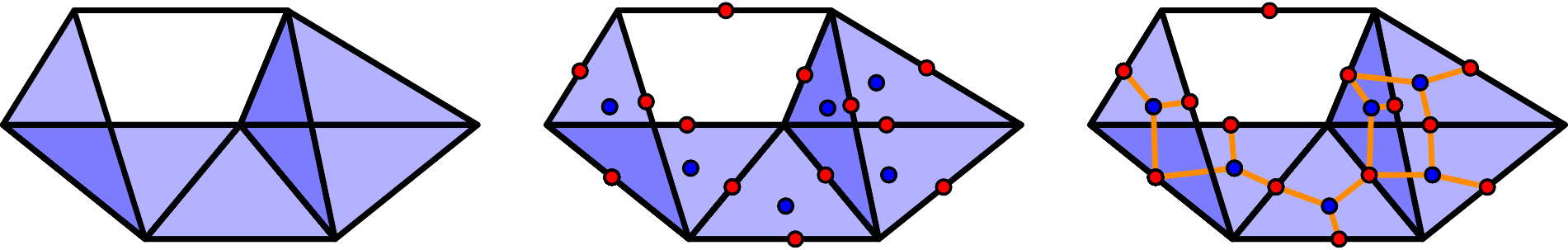}
\caption{The graph $\HasseGraph(A)$ when $A$ comes from a simplicial complex. Vertices are $(d-1)$-simplices (red) and $d$-simplices (blue), and there are edges (orange) between them if they are face-coface pairs. 
\label{FIG:graphs_in_spaces}}
\end{figure}

Our algorithm works by dynamically solving several instances of the following problem. This problem can be seen as being analogous to the \textsc{Restricted Homology Localization} (R-HL$_d$) problem (\cite[Definition 5.1]{erlend_homloc})

\begin{definition}\label{Def: Restricted HL} \textsc{Restricted GMLD} (R-GMLD): \newline 
INPUT: A bipartite graph $G$ with vertex set $(R,C)$, a set of weights $\{w_{\sigma} | {\sigma\in C}\}$, a subset $\ChnA \subseteq R$, and a four-tuple of sets $(G_t, X_t,\Qt, \Pt )$ where $X_t\subseteq G_t\subseteq R \cup C$, $\Qt \subseteq X_t \cap C$ and $\Pt \subseteq X_t \cap R$. \newline 
OUTPUT: A subset $\BndA \subseteq C$ having all the following properties:
\begin{itemize}
\itemsep0pt
    \item $(\partial \BndA\triangle \ChnA)\cap(G_t\setminus X_t) = \emptyset$.
    \item $\BndA\cap X_t= Q_t$.
    \item $(\partial \BndA\triangle \ChnA)\cap X_t = P_t$.
\end{itemize}
MINIMIZE: The sum $\sum_{\sigma\in \BndA\cap (G_t\setminus X_t)} w_\sigma$.
\end{definition}

We can think of the \textsc{Restricted GMLD} as the problem where we are free to ignore everything outside of $G_t$ and where the solutions are completely determined on $X_t\subseteq G_t$ by the sets $Q_t$ and $P_t$. This means in particular that the special case of $G_t=V(G)$ and $X_t=\emptyset$ is just the normal \textsc{GMLD} since this means that we are not ignoring anything and that no additional restrictions are placed upon the problem. 

The main idea of our algorithm for solving the \textsc{GMLD} problem is now the same as for most other treewidth based algorithms. We dynamically solve multiple instances of the restricted \textsc{GMLD} on every bag $X_t$ of a (fixed) nice tree decomposition of the bipartite graph $G$. At each bag, we store the optimal value of solutions for every pair of subsets $\Qt$ and $\Pt$. The algorithm does this by starting at the leaves working its way ``up'' towards the root, extending and combining solutions to bigger and bigger parts of the \textsc{GMLD} as we move along. At the root bag there is precisely one instance of the restricted GLMD problem to solve, namely the special case where $G_r=V(G)$ and $X_r=\emptyset$ (and so we have solved the GMLD).

\subsection{An FPT-Algorithm}\label{sec: fpt-algorithm}
Let $(T,r)$ and $X_-$ be a nice tree decomposition of the weighted bipartite graph $G$ where $V(G)= (R,C)$ and where $\ChnA\subseteq R$. For a vertex $t$ in $T$, let $G_t$ be the union of all bags $X_s$ where $s$ is a descendant of $t$. This means that we have $G_r=V(G)$ and $X_r = \emptyset$ as we promised. For every vertex $t$ in $T$ we describe how to find the weight of a minimal solution, denoted $\tab{t,Q_t,P_t}$, to the \textsc{R-GMLD} problem on $(G,U)$ restricted by the four-tuple $(G_t,X_t,Q_t,P_t)$. 

\begin{itemize}
\item \textbf{Leaf Bag}:

$\tab{t,\emptyset,\emptyset}=0$.

\item \textbf{Introduce Bag}:
    
We split this into two cases. In either case the vertex $t$ in $T$ has a child $s$. First, assume that the introduced vertex $\sigma\in C$ corresponds to a column making $X_t = X_{s} \cup \{\sigma\}$. Then 
\begin{equation*}
    \tab{t,\QQt,\PPt } =
     \begin{cases}
               	\tab{s,\QQt,\PPt }          	& \sigma \not\in \QQt\\
               	\tab{s,\QQt \setminus\{\sigma\} ,\PPt \triangle (\partial \sigma\cap{X_{s}})} & \sigma\in \QQt . \\
           \end{cases}
\end{equation*}

Next assume a row vertex $\rho\in R$ is introduced so that $X_t = X_{s} \cup \{\rho\}$. If $\rho \in \PPt \triangle \partial(\QQt) \triangle \ChnA$ then there is no solution and we store the value infinity at this entry. Otherwise we have
\begin{equation*}
    \tab{t,\QQt ,\PPt } =
     \begin{cases}
               	\tab{s,\QQt  ,\PPt \setminus\{\rho\}} & \rho \in \PPt \\
               	\tab{s,\QQt ,\PPt }          	& \rho \notin \PPt .   \\
           \end{cases}
\end{equation*}

\item \textbf{Forget Bag}:

Again we have that $s$ is the child of $t$ and there are two cases. If we forget a vertex corresponding to a row $\rho\in R$ so that $X_t= X_{s} \setminus \{\rho\}$ then $\tab{t,\QQt ,\PPt } = \tab{s,\QQt ,\PPt }$. If we forget a vertex corresponding to a column $\sigma\in C$ so that $X_t = X_{s} \setminus \{\sigma\}$ then $\tab{t,\QQt ,\PPt } = \min(\tab{s,\QQt ,\PPt }, \tab{s,\QQt  \cup \{\sigma\} ,\PPt } + w_\sigma)$.

\item \textbf{Join Bag}: 

Let $s$ and $s'$ be the two children of $t$ so that $X_t=X_{s}=X_{s'}$. Then $\tab{t,\QQt ,\PPt }$ is the smallest sum $\tab{s,\QQt ,\Ps }+\tab{s',\QQt ,\Pss }$ over all pairs $P_s,P_{s'}\subseteq X_t\cap R$ such that  $\PPt  = \Ps \triangle \Pss \triangle (\partial\QQt\cap {X_t})\triangle (\ChnA\cap {X_t})$.

\end{itemize}

\begin{lemma} \label{lemma: tw algo is correct}
The above algorithm solves the GMLD problem.
\end{lemma}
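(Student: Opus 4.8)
The plan is to prove, by bottom-up induction on the nice tree decomposition, the loop invariant that for every node $t$ and every admissible pair $(\Qt,\Pt)$ the entry $\tab{t,\Qt,\Pt}$ equals the minimum of $\sum_{\sigma\in\BndA\cap(G_t\setminus X_t)}w_\sigma$ taken over all $\BndA\subseteq C$ satisfying the three \textsc{R-GMLD} constraints for the four-tuple $(G_t,X_t,\Qt,\Pt)$, with the convention that the value is $+\infty$ when no such $\BndA$ exists. The payoff is at the root: since $G_r=V(G)$, $X_r=\emptyset$ and $\Qt=\Pt=\emptyset$ there, the first constraint collapses to $\partial\BndA\symdif\ChnA=\emptyset$, i.e.\ $\partial\BndA=\ChnA$, the other two become vacuous, and the objective becomes $\cost{\BndA}$. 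Hence $\tab{r,\emptyset,\emptyset}$ is exactly the optimum of \textsc{GMLD}, which is what must be shown once the invariant is established.

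Before the induction I would record the standard separation property of tree decompositions, which is the engine of the whole argument: every edge of $G$ joining $G_t\setminus X_t$ to $V(G)\setminus G_t$ has an endpoint in $X_t$; equivalently, a vertex of $G_t\setminus X_t$ has all of its neighbours inside $G_t$, and in fact inside $X_t$ whenever the neighbour lies outside $G_t\setminus X_t$. For a join node with children $s,s'$ one additionally has $G_s\cap G_{s'}=X_t$ and no edge between $G_s\setminus X_t$ and $G_{s'}\setminus X_t$. This locality guarantees that the boundary of a partial solution, restricted to rows already processed or currently in the bag, is determined by the columns recorded in $\Qt$ together with the columns already summed away; that is precisely why the finite data $(\Qt,\Pt)$ suffice to describe a partial solution and why the recurrences can be local.

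With the invariant and the separation lemma in hand I would verify that each of the four recurrences preserves it. The leaf bag is immediate. For an introduce bag I distinguish a column $\sigma$, where either $\sigma\notin\Qt$ (leaving the subproblem on $G_s$ untouched) or $\sigma\in\Qt$ (in which case adding $\sigma$ to the solution flips the defect exactly on the rows of $\partial\sigma\cap X_s$, matching the stated update and leaving the charged weight unchanged since $\sigma\in X_t$), from a row $\rho$, where the separation lemma forces $\rho$'s defect to be determined by $\partial(\Qt)\symdif\ChnA$; hence the instance is infeasible precisely when $\rho\in\Pt\symdif\partial(\Qt)\symdif\ChnA$, and otherwise $\rho$ is merely stripped from $\Pt$. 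A forget bag for a row $\rho$ keeps only those child solutions with $\rho$ absent from the defect (which is exactly why the retained entry has $\rho\notin\Pt$), thereby enforcing the newly required emptiness of the defect on $G_t\setminus X_t$; a forget bag for a column $\sigma$ takes the minimum over excluding $\sigma$ and including it, paying $w_\sigma$ in the latter case because $\sigma$ has just entered $G_t\setminus X_t$ and must now be charged.

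The join bag is where I expect the real work. Here I would write any admissible $\BndA$ at $t$ uniquely as $\Qt\cup\BndB\cup\BndC$ with $\BndB\subseteq C\cap(G_s\setminus X_t)$ and $\BndC\subseteq C\cap(G_{s'}\setminus X_t)$, using $G_s\cap G_{s'}=X_t$; the weight then splits additively, and the absence of cross edges shows the two halves are feasible independently and glue back to a valid solution. The delicate point is the defect bookkeeping on $X_t$: fixing a row $\rho\in X_t$ and writing the parities of its adjacencies into $\Qt$, $\BndB$ and $\BndC$, and noting that $\ChnA$ is subtracted in \emph{both} children, a short $\Zto$ computation yields the stated identity $\Pt=\Ps\symdif\Pss\symdif(\partial\Qt\cap X_t)\symdif(\ChnA\cap X_t)$, where the $\Qt$- and $\ChnA$-terms appear as corrections for quantities counted twice across the two children. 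Establishing both directions — that every $t$-solution decomposes into a compatible child pair and that every compatible pair recombines — then justifies minimising over all $\Ps,\Pss$ with $\Pt=\Ps\symdif\Pss\symdif(\partial\Qt\cap X_t)\symdif(\ChnA\cap X_t)$. The main obstacle throughout is keeping the locality arguments rigorous enough that these few-line $\Zto$ parity identities can be trusted; once the separation lemma is set up cleanly, each individual bag case reduces to a short verification.
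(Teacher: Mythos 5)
Your proposal follows essentially the same route as the paper's own proof: bottom-up induction over the nice tree decomposition, table entries interpreted as optima of the restricted problem, a separation/locality property of tree decompositions, and a two-directional correspondence between the feasible-solution sets at a node and at its child(ren), verified bag type by bag type. Your join identity $\Pt=\Ps\symdif\Pss\symdif(\partial\Qt\cap X_t)\symdif(\ChnA\cap X_t)$ is the correct one and matches the algorithm's statement; the paper's own sketch writes $(\partial\BndA\cap X_t)$ in that place, which as literally written is circular (it would force $\Ps=\Pss$), so on this point your account is actually cleaner than the paper's.

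There is, however, one concrete flaw you should repair. Your invariant quantifies over \emph{all} $\BndA\subseteq C$, following the letter of \cref{Def: Restricted HL}, but the recurrences do not compute that quantity, and your own case analysis silently assumes the stronger restriction $\BndA\subseteq C\cap G_t$. The step that fails as written is the introduce-row bag: you claim the separation lemma forces the defect at the introduced row $\rho$ to be decided by whether $\rho\in\partial\Qt\symdif\ChnA$, but the separation lemma only controls neighbours of vertices in $G_t\setminus X_t$; the row $\rho$ lies in $X_t$ and may be adjacent to columns \emph{outside} $G_t$, which an unrestricted $\BndA$ is allowed to contain. Concretely, when $\rho\in\Pt\symdif\partial\Qt\symdif\ChnA$ the algorithm stores $+\infty$, yet an unrestricted $\BndA$ containing a not-yet-introduced column adjacent to $\rho$ can satisfy all three constraints at zero restricted cost, so the invariant as you state it is violated at that entry. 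The same slack would also break the claimed uniqueness of your join decomposition $\BndA=\Qt\cup\BndB\cup\BndC$. The repair is a one-line change that leaves everything else intact: state the invariant only for $\BndA\subseteq C\cap G_t$ (nothing is lost at the root, where $G_r=V(G)$), after which each of your bag-by-bag verifications, including the $\Zto$ parity computation at the join bag, goes through as written. Note that the paper's own definition of \textsc{R-GMLD} contains the identical slip, so this is as much a correction to the paper as to you.
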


\begin{proof}

Showing that this algorithm is correct requires many technical but elementary steps. We have therefore decided to omit most details and instead focus on the bigger picture. For a more detailed exposition of a similar proof, see \cite{erlend_homloc}.

We use the same basic technique for all the different bags (except for the leaf bags). The idea is to use the set of all feasible solutions  $\mathcal{S}(t,Q_t,P_t)$ (i.e. all solutions, both optimal and non-optimal) to instances of the restricted \textsc{GMLD} problems to argue that each of the formulas in the algorithm are correct. We do this by first showing that the left hand side of the equation is smaller than or the same size as the right hand side, and then to show the opposite. 

\begin{itemize}
\item \textbf{Leaf Bag}: We have that $G_t = \emptyset$ and so there is only one problem instance to solve: $G_t = X_t = Q_t = P_t = \emptyset$. There is only one solution to this problem, which is $\BndA = \emptyset$ and this solution has weight $0$.
\item \textbf{Introduce Bag}: When a column $\sigma$ is introduced there are two cases as it is either in $Q_t$ or it is not. In the first case we can show that $\BndA \in \mathcal{S}(t,Q_t,P_t)$ if and only if $\BndA \in \mathcal{S}(s,Q_t,P_t)$, and in the latter we have to show that $\BndA \in \mathcal{S}(t,Q_t,P_t)$ if and only if $\BndA \setminus \{\sigma\} \in \mathcal{S}(s,Q_t\setminus \{\sigma\},P_t\triangle(\partial \sigma \cap X_s))$. The details of this are elementary, using the fact that an introduced node in a nice tree decomposition is not adjacent to any forgotten nodes. The case where a row vertex is introduced is analogous.

\item \textbf{Forget Bag}: When a column $\sigma$ is forgotten we show that $\BndA \in \mathcal{S}(t,Q_t,P_t)$ if and only if $\BndA \in \mathcal{S}(s,Q_t,P_t)\cup \mathcal{S}(s,Q_t\cup \{\sigma\},P_t)$. If $\BndA \in \mathcal{S}(t,Q_t\cup \{\sigma\},P_t)$ then $\BndA \in \mathcal{S}(s,Q_t,P_t)$ also has to take the weight of $\sigma$. When a row $\rho$ is forgotten we show that $\BndA \in \mathcal{S}(t,Q_t,P_t)$ if and only if $\BndA \in \mathcal{S}(t,Q_t,P_t)$. The reason why we don't also have to think about $\BndA \in \mathcal{S}(t,Q_t,P_t \cup \{\rho\})$ is that this is no longer a solution as the row $\rho$ would be in the boundary of $\BndA$. 

\item \textbf{Join Bag}: In the last case we have a solution $\BndA \in \mathcal{S}(t,Q_t,P_t)$ if and only if there is a pair of solutions $\BndB \in \mathcal{S}(s,Q_t,P_s)$ and $\BndC \in \mathcal{S}(s',Q_t,P_{s'})$ such that $P_t = P_s\triangle P_{s'}\triangle (\ChnA \cap X_t)\triangle (\partial \BndA \cap X_t)$. The idea here is to show that if we set $\BndA = \BndB\cup \BndC$ then the relation between $P_t,\, P_s$ and $P_{s'}$ is exactly the one described above. For the other way we let $\BndB=\BndA\cap G_s$ and $\BndC=\BndA\cap G_{s'}$.

\end{itemize}
\end{proof}

The above treewidth algorithm terminates in $\mathcal{O}(4^{\tau}n)$ time, where $\tau$ is the treewidth of $ \HasseGraph(A)$. To see this, note that at each introduce and forget bag the algorithm has to compute at most $2^\tau$ values, each taking constant time. Meanwhile at the join bag the algorithm computes at most $2^\tau$ values where each is the minimum of $2^\tau$ numbers, which means that the join bag takes $\mathcal{O}(4^{\tau}n)$ time. The number of bags is linear in input size, so we get our result. Together with \cref{lemma: tw algo is correct}, this discussion proves \cref{thm:treewidth-algorithm}.

The algorithm can be made to return an optimal solution in $\mathcal{O}(4^{\tau}n)$ time by backtracking through the tables of solutions for each bag. Using the more naive approach of keeping track of a representative optimal solution is also possible. This would give us a worse runtime of $\mathcal{O}(4^{\tau}n^2)$ as we need to copy and store partial solutions which may have size linear in $n$. The algorithm can also be used to find a maximum cycle, since it works even when the weights are negative.

\subsection{ETH-tightness}\label{sec:eth-tightness}
In this final subsection we show that the treewidth based FPT algorithm we just discussed is ETH-tight, by proving \cref{thm:treewidth-hardness}. In fact, this theorem is true even for the SMBC$_d$ problem. 

\begin{thm}
The unweighted SMBC$_d$ problem can not be solved in $2^{o(\tau)}\operatorname{poly}(n)$-time, assuming the ETH. 
\end{thm}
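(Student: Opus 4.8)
The plan is to prove the lower bound by a polynomial-time reduction from $3$-SAT that inflates the relevant treewidth only linearly, closely following the ETH-tight construction for the \textsc{Homology Localization} problem in \cite{erlend_homloc}. By the Sparsification Lemma I may assume that the input formula $\phi$ has $v$ variables and $m = \mathcal{O}(v)$ clauses, so that the ETH rules out any $2^{o(v)}$-time decision procedure for $\phi$. I would construct in polynomial time an unweighted $2$-dimensional simplicial complex $K_\phi$ together with a single input circle $\ChnA$ (a genuine $1$-sphere, so that the instance is admissible for SMBC$_2$) and an integer threshold $k$, such that $\ChnA$ bounds a $2$-chain $\BndA$ of weight at most $k$ if and only if $\phi$ is satisfiable. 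Crucially, $K_\phi$ must be laid out so that the treewidth $\tau$ of the second level of its Hasse diagram is $\mathcal{O}(v)$, even though $K_\phi$ itself contains $\operatorname{poly}(v)$ simplices. Granting these properties, a hypothetical $2^{o(\tau)}\operatorname{poly}(n)$-time algorithm for SMBC$_2$ would decide $\phi$ in $2^{o(v)}\operatorname{poly}(v) = 2^{o(v)}$ time, contradicting the ETH; the higher-dimensional cases then follow by taking suspensions, exactly as in the earlier hardness proofs, since suspension preserves sphericity and changes both the number of simplices and the treewidth by at most a constant factor.

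For the gadgets I would reuse the vocabulary of pairs of pants, cylinders, and disks employed in \cref{FIG: Notation}. Each variable $x_i$ is represented by a gadget offering exactly two cheap ways to cancel a local boundary circle --- a ``true'' hemisphere and a ``false'' hemisphere of a small $2$-sphere --- so that any minimum bounding chain is forced to commit to one of the two, encoding a Boolean assignment. Each clause is represented by a gadget whose extra boundary can be canceled within the budget only if at least one of its three incident literal-circles has been set to the satisfying value, so that staying under $k$ is equivalent to satisfying every clause. To keep $\tau$ linear, the gadgets are arranged along a linear backbone carrying one ``wire'' per variable that runs the whole length of the layout, with each clause gadget tapping into the (at most three) wires of its variables; at any cut of the backbone only the $\mathcal{O}(v)$ wires cross, which yields a path decomposition of the Hasse diagram of width $\mathcal{O}(v)$. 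The threshold $k$ is the exact number of $2$-simplices used by the canonical ``one hemisphere per variable plus one satisfying cancellation per clause'' chain, and the equivalence is verified along the lines of \cref{PROP: Second reduction works}: a satisfying assignment yields a bounding chain of weight exactly $k$, while any bounding chain of weight at most $k$ must, for budgetary reasons, pick a single consistent hemisphere per variable and a valid clause cancellation, hence reads off a satisfying assignment.

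The main obstacle is meeting the two requirements at once: the combinatorial faithfulness of the gadgets (that the budget $k$ leaves no slack, forcing a genuine assignment and genuine clause satisfaction in both directions) and the linear treewidth bound. Encoding clauses whose variables may be far apart along the backbone while keeping every backbone cut of size $\mathcal{O}(v)$ is the delicate part, since a careless routing of the wires, or a clause gadget that connects too many distant circles, would inflate the treewidth super-linearly and weaken the bound to $2^{o(\sqrt{\tau})}$ or worse. Equally delicate is ensuring that $\ChnA$ remains a single embedded circle whose closure is homotopy equivalent to $S^1$, so that the instance is legal for the \emph{spherical} problem, while the added clause machinery does not accidentally create cheaper, assignment-incompatible bounding chains. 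Since these are precisely the phenomena handled in detail for \textsc{Homology Localization} in \cite{erlend_homloc}, I would import that wiring scheme and gadget analysis essentially verbatim, adjusting only the bookkeeping of simplex counts that defines $k$, and thereby obtain the stated $2^{o(\tau)}\operatorname{poly}(n)$ lower bound, matching the algorithm of \cref{thm:treewidth-algorithm}.
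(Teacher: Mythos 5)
Your high-level skeleton (sparsified $3$-SAT, a polynomial-size complex whose Hasse-diagram treewidth is $\mathcal{O}(v)$, suspension for $d>2$) is sound in principle, but the proposal has a genuine gap at its core: the clause gadget. You assert the existence of ``a gadget whose extra boundary can be canceled within the budget only if at least one of its three incident literal-circles has been set to the satisfying value,'' and you propose to import its analysis ``essentially verbatim'' from \cite{erlend_homloc}. That source contains no such gadget: its ETH-tightness reduction (which the paper here builds on) is a reduction from \textsc{Max Cut}, not from $3$-SAT, so there is no OR-constraint machinery to import. Moreover, building an OR gadget in this setting is not a routine adaptation. Over $\Zto$ the boundary operator is linear, so the constraints you get for free from gluing are parity constraints: if a clause circle is incident to three literal cylinders, any bounding chain must contain an \emph{odd} number of them --- this encodes XOR (one or all three literals ``true''), not ``at least one.'' Converting parity into disjunction must be done entirely by the weight budget, and you would have to rule out cheaper ``cheating'' chains that cancel clause boundaries by assignment-inconsistent combinations shared across clauses; this is precisely the delicate analysis that is missing, and it is the heart of any such proof.

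For contrast, the paper avoids this obstruction by reducing from \textsc{Max Cut}, which meshes naturally with the linear-algebraic structure of $\Zto$-chains: each vertex of $G$ becomes a $2$-sphere, each edge a connecting circle capped by a (heavily subdivided, hence expensive) disk, and the input circle is the waist of a pair of pants with one leg per edge. A bounding chain then corresponds exactly to a choice of side $(I,J)$ for each vertex sphere, with disks in the chain corresponding to uncut edges, so minimizing the chain weight is maximizing the cut --- no OR gadget is ever needed. The treewidth control is also nontrivial there: the complex is triangulated \emph{according to} a nice tree decomposition of $G$ (spheres deformed into long tubes tracing the bags, disks stretched into thin tubes), which is what keeps the Hasse-diagram treewidth linear despite the disks containing many simplices. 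If you want to salvage your route, you would either need to design and verify a genuine $\Zto$-homological clause gadget from scratch, or switch your source problem to one (like \textsc{Max Cut}) whose structure is compatible with parity constraints.
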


Taking suspension doubles the treewidth of a space \cite[Sec. 2.2]{erlend_homloc}, so it is sufficient to show the result for $d=2$, and the general case follows by inductively taking the suspension. The reduction we use to prove this result builds on the reduction from \textsc{Max Cut} parameterized by treewidth presented in \cite[Sec. 6]{erlend_homloc}. This reduction can in turn be thought of as a specialized version of the reduction presented in \cite{borradaile_et_al:LIPIcs:2020:12179}. Recall that a cut in a graph is just a partitioning of the vertices of a graph $G$ into two sets $I$ and $J$, and the size of the cut is the number of edges crossing the cut (see \cref{FIG: max cut illustration}).

\begin{figure}[!h]
\centering
  \includegraphics[width=0.7\textwidth]{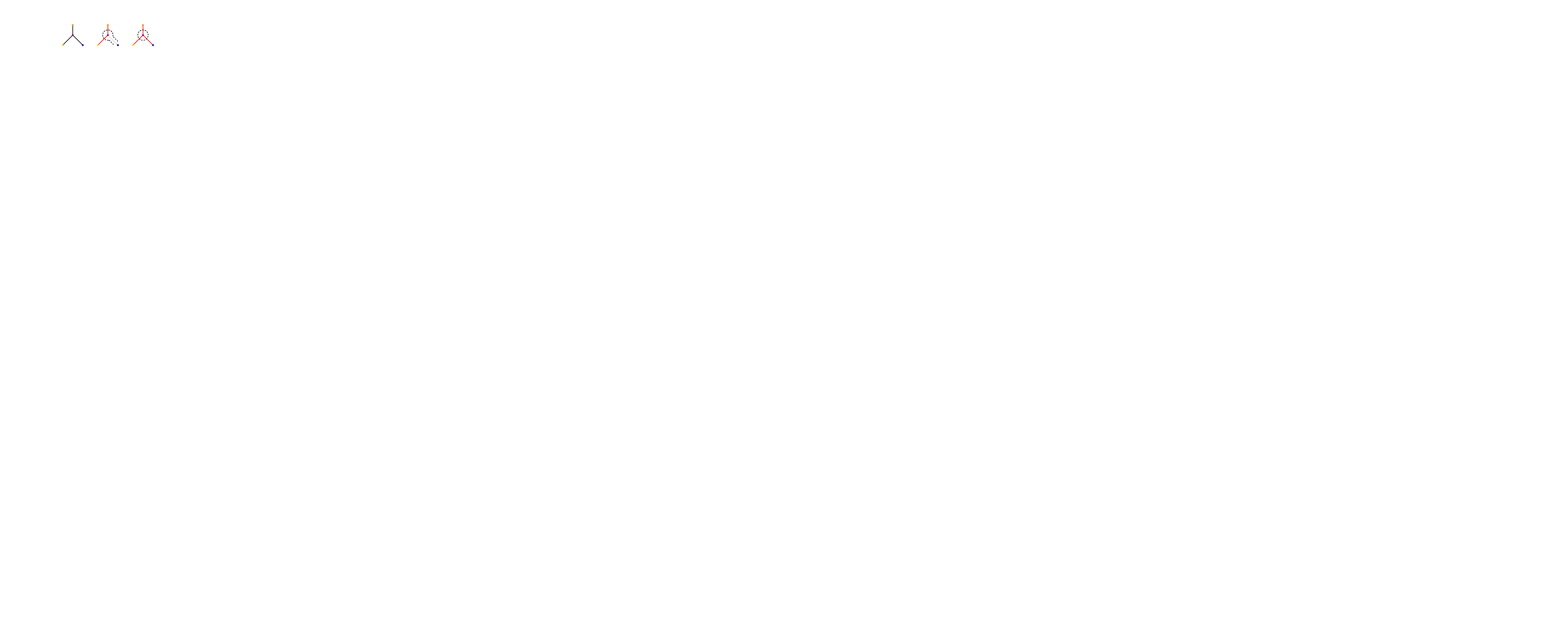}
 \caption{\label{FIG: max cut illustration} A graph $G$ (left) with two different cuts indicated by the dashed line in (middle and right). The edges crossing the cuts are marked in red. }
\end{figure}

\begin{definition} The \textsc{Max Cut} problem\\
INPUT: A graph $G$ on $n$ vertices.\\
OUTPUT: A cut $(I,J)$ in $G$.\\
MAXIMIZING: The size of the cut $(I,J)$.
\end{definition}

Our reduction maps a graph $G$ to the space $Y(G)$. We think of $Y(G)$ as the quotient of three sub-spaces (see the example in \cref{FIG: three componets}):

\begin{enumerate}
    \itemsep0pt
    \item $Y_P(G)$: A pair of pants with $|E(G)|$ legs, one for each edge in $G$. The $1$-simplices contained in the ``waist'' $x$ of this pair of pants is the input boundary $\ChnA$ to the MBC$_2$ problem.
    \item $Y_S(G)$: The $2$-dimensional (orientable manifold) simplicial complex obtained by associating a $2$-sphere to every vertex $v$ of $G$ and take the connected sum of neighboring vertices. Two spheres intersects in a circle if there is an edge between the corresponding vertices, and the leg of $Y_P(G)$ corresponding this edge is glued to this circle. This subspace looks like the surface of some thickening of the graph $G$.
    \item $Y_D(G)$: Finally, glue a disk along its boundary to each such intersection-circles.
\end{enumerate}

\begin{figure}[!h]
\centering
  \includegraphics[width=0.7\textwidth]{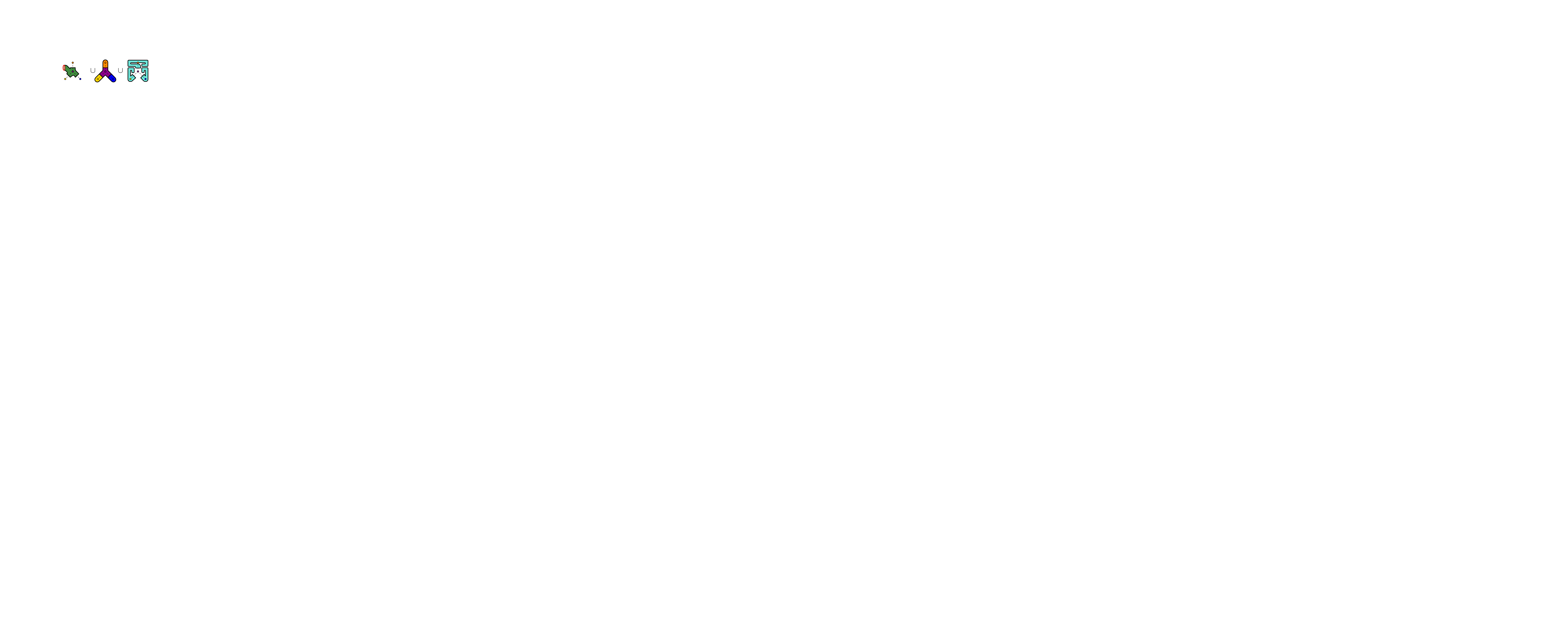}
 \caption{\label{FIG: three componets} A figure showing the three components of the space $Y(G)$ reduced from the graph $G$ in \cref{FIG: max cut illustration} (the dashed lines). The pair of pants $Y_P(G)$ (left), the connected spheres (or ``thick graph'') $Y_S(G)$ (middle) and the disks $Y_D(G)$ (right). The input chain of the MBC$_2$ problem is colored in red.}
\end{figure}

We can always find a bounding chain in $Y(G)$, by taking the pair of pants and all disks in $Y_D(G)$. The idea is that the disks we have in the bounding chain correspond to edges that are not cut. So to maximize the size of the cut, we want to minimize the number of disks in our solution. To achieve this, we make the disks as big (consisting of many simplices) as possible, by subdividing them sufficiently many times.

Mapping solutions back and forth is quite intuitive. 
Given a cut $(I,J)$, the minimum bounding chain of the $1$-simplices in the waist $x$ consists of the pair of pants, each of the spheres corresponding to vertices in $I$ and each of the disks corresponding to edges that are not cut. Conversely, starting with a bounding chain $\BndA$ of the waist, we get a cut $(I,J)$ by letting a vertex $v$ be in $I$ if a $2$-simplex (and hence every $2$-simplex) of the sphere corresponding to $v$ is in $\BndA$, otherwise it is in $J$.

We are left with the task of finding a triangulation of the space $Y(G)$ of low treewidth. We describe how the techniques developed in \cite{erlend_homloc} can be altered to work for the SMBC$_d$ problem. 

The fundamental idea is to let the triangulation of the space depend on some (arbitrary) nice tree decomposition of the input graph $G$ of low treewidth. So the first step of the reduction would be to compute such a nice tree decomposition $\text{TD}(G)$ (to within a constant factor approximation of the actually treewidth), which we know can be done in $2^{\mathcal{O}(\tau)}n$-time. A concrete example of what a nice tree decomposition and the corresponding space typically look like is pictured in \cref{FIG: Nice TD example}.

\begin{figure}[!h]
\centering
\begin{subfigure}{0.47\textwidth}
\centering
  \includegraphics[width=0.95\textwidth]{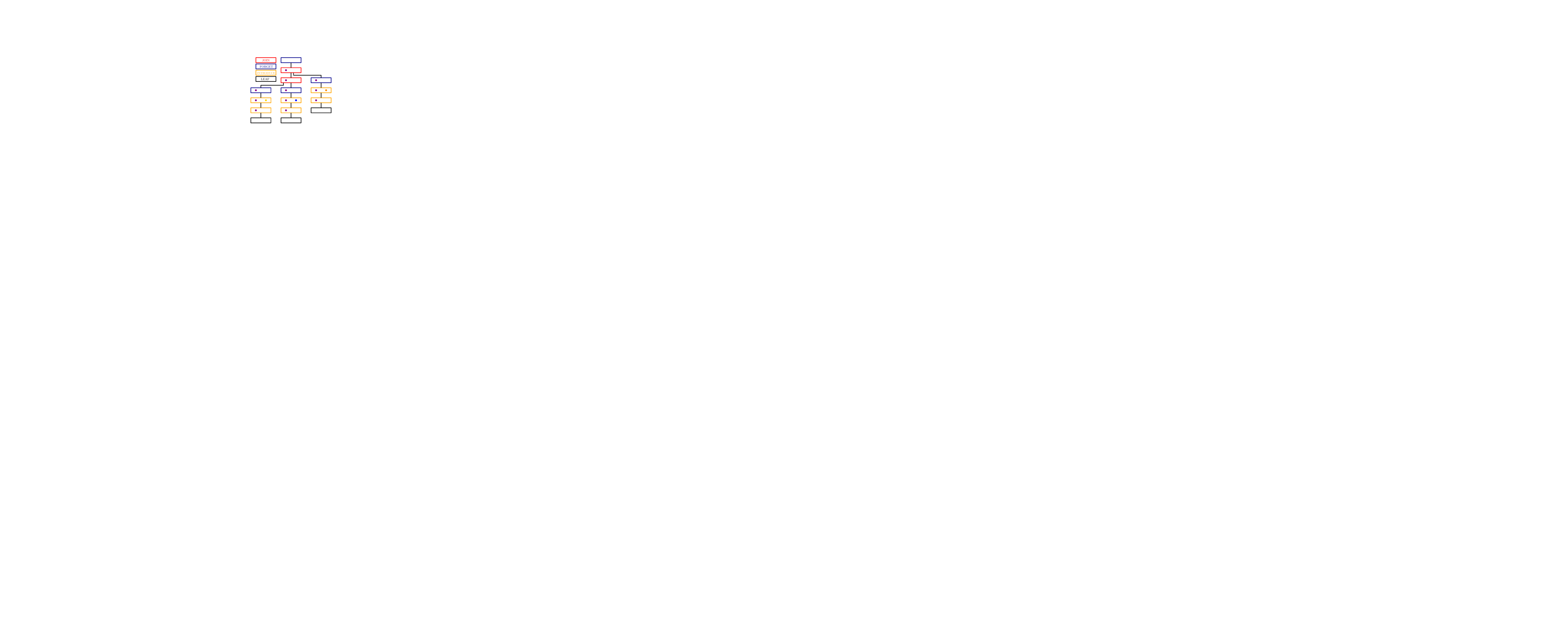}
\end{subfigure}
\begin{subfigure}{0.51\textwidth}
\centering
  \includegraphics[width=0.95\textwidth]{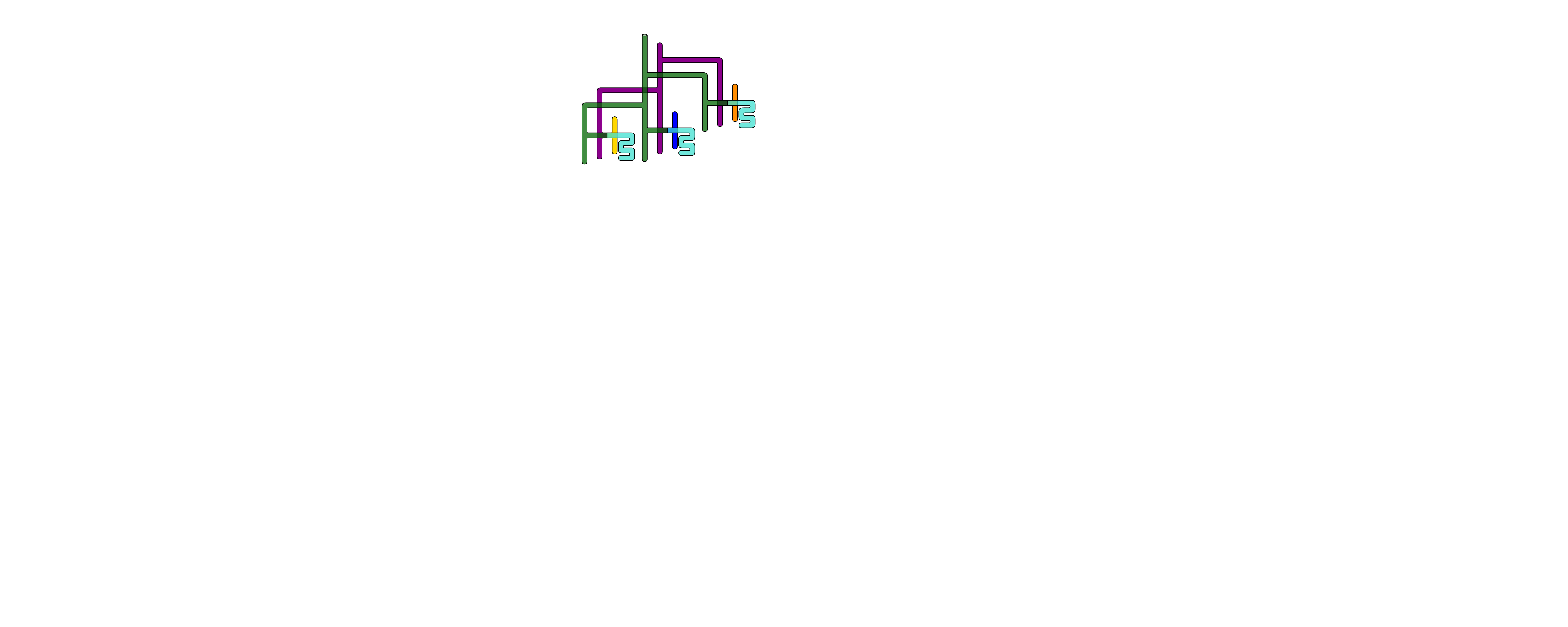}
\end{subfigure}
\caption{\label{FIG: Nice TD example}A nice tree decomposition (left) of the graph $G$ from \cref{FIG: max cut illustration} where the image of the function $X_-$ is pictured inside each vertex, and how to shape/triangulate the space $Y(G)$ to keep its treewidth down.}
\end{figure}

The next step is to  give the idea of how to triangulate the space. The subspace $Y_S(G)$ is triangulated like \cite[Sec. 6,2]{erlend_homloc}. Intuitively, we deform the spheres representing each vertex $v$ into long ``tubes'' following the shape of the bags containing $v$ in the nice tree decomposition of the input graph. The underlying graph of such a space is of low treewidth, as we can find a tree decomposition of treewidth linear in the treewidth of $G$ \cite[Lemma 6.5]{erlend_homloc}.

The pair of pants $Y_P(G)$ is triangulated in a similar way, by thinking of it as a sphere corresponding to a vertex that is in every non-empty bag of the nice tree decomposition. Finally, the disks in $Y_D(G)$ is stretched out like long cylinders/tubes, keeping in mind that we want them containing many simplices, but covered by small bags of the tree decomposition. The final result looks like in \cref{FIG: Nice TD example}.

\section{Conclusion}\label{sec:conclusions}

We have shown that the MBC$_d$ problem is difficult to solve even for spaces with small coface degree or small solution sizes and provided two parameterized algorithms for solving the MBC$_d$ problem. However, there are still many unanswered questions and interesting research directions to explore.

\subsection{The \texorpdfstring{MBC$_1$}{MBC\_1} problem}

The polynomial time algorithm for the MBC$_1$ problem presented in this paper was chosen because it was easy to describe. Recall that we essentially solved the problem by combining two algorithms. First we use the Floyd-Warshall algorithm to construct a distance matrix from the $1$-skeleton of the input space. Then we use any polynomial time algorithm solving the minimal weighted matching problem on a sub-matrix of the distance matrix (viewed as a complete graph). It would be interesting to know if this problem can be solved more efficiently using a more carefully designed algorithm.

\subsection{The Dijkstra approach}

The techniques we developed in \cref{sec:dijkstra} open up many new directions of further research, both theoretical and practical, that we think deserves some attention. Recall that $k$ denote solution size and that $c$ denote coface degree while $n$ is the number of $d-1$ simplices.
\begin{itemize}
\itemsep0pt
    \item Can we bridge the gap between the $2^{\mathcal{O}(k\log(c))}\operatorname{poly}(n)$-time algorithm and the $2^{o(\sqrt{k}\log(c))}\operatorname{poly}(n)$-time ETH lower bound presented in this paper? 
    \item Is there a constant $0<e<1$ for which we can solve the MBC$_d$ problem in $c^{ek}\operatorname{poly}(n)$ time? We suspect that it is possible to solve the MBC$_2$ problem in $\mathcal{O}(c^{\frac{1}{3}k}n)$-time, by only looking at simplicies if they are at most $k/3$ ``simplices away'' from the boundary (instead of $k$).
    \item Can we modify our algorithm so that it can be used to find the smallest $2$-manifold in a simplicial complex with a particular boundary?
    \item Is there a kernel smaller than the obvious one\footnote{I.e. the kernel consisting of every $(d+1)$-simplex that is at most $k$ simplices away.} of size $c^k$ for the MBC$_d$ problem? 
    \item The Dijkstra based algorithm is well suited as a basis for using various kinds of A*-type heuristics. It would therefore be very interesting to explore how this can be used to speed up computations in practice.
\end{itemize}

\subsection{Treewidth}

There are also a open questions surrounding our results on treewidth of the $d$'th level of the Hasse diagram. This is perhaps particularly interesting as there seems to be an increasing interest in the use of treewidth techniques in computational topology.
\begin{itemize}
\itemsep0pt
    \item Is it possible to solve the MBC$_d$ problem in $2^\tau \operatorname{poly}(n)$-time?
    \item Can we implement better treewidth algorithms (e.g. by using massive parallelization) that are competitive with ILP-solvers?
    \item We can use a slight modification of our ETH-reduction to prove that finding a $2$-manifold with a a particular boundary (or genus) in a simplicial complex cannot be done in $2^{o(\tau)} \operatorname{poly}(n)$-time (assuming the ETH). Black and Nayyeri proved  that this problem can be solved in $\tau_1^{\mathcal{O}(\tau_1 ^2)}\operatorname{poly}(n)$-time in \cite[Theorem 1.1]{black2021finding}, where $\tau_1$ is the treewidth of the $1$-skeleton of the simplicial complex. This leaves a gap down to our lower bound which it would be interesting to see if could be bridged.
\end{itemize}

\subsection{Applications in Topological Data Analysis}
We began working on this project because we were interested in designing algorithms for finding geometrically concise representatives for cycles in persistent homology. Though it turned into a paper on theoretical computer science in the end, we have described a polynomial time algorithm for finding the shortest $1$-cycle born at a given filtration value as well as two FPT-algorithm solving the same problem in higher dimensions. We look forward to exploring how information about these representatives may be included in the persistent homology toolkit in the future.

\section*{Acknowledgement}
We wish to thank Daniel Lokshtanov for his help in proving \cref{LEMMA: Daniel L}. Erlend Raa V{\aa}gset acknowledges support from the Research Council of Norway grant “Parameterized Complexity for Practical Computing (PCPC)” (NFR, no. 274526).

\bibliography{bibliography}

\end{document}